\newenvironment{proofof}[1]{\noindent{\bf Proof of #1:  }}{\hfill\rule{2mm}{2mm}}
\numberwithin{figure}{section}
\numberwithin{equation}{section}
\newtheorem{theorem}{Theorem}[section]
\newtheorem{corollary}[theorem]{Corollary}
\newtheorem{lemma}[theorem]{Lemma}
\newtheorem{claim}[theorem]{Claim}
\newtheorem{proposition}[theorem]{Proposition}
\theoremstyle{definition}
\newtheorem{definition}[theorem]{Definition}
\newcommand{\ignore}[1]{}
\newcommand{\plus}[1]{\ensuremath{\left[ #1 \right]^+}}
\newcommand{\expct}{\ensuremath{\text{{\bf E}}}}
\newcommand{\Te}{\mathsf{T}_e}
\newcommand{\He}{\mathsf{H}_e}
\newcommand{\tail}{\mathsf{T}}
\newcommand{\head}{\mathsf{H}}
\newcommand{\T}{\ensuremath{\mathsf{T}}\xspace}
\newcommand{\norm}[1]{\left\lVert#1\right\rVert}
\newcommand{\I}{\mathsf{I}}
\newcommand{\Q}{\mathsf{Q}}
\newcommand{\Lc}{\mathcal{L}}
\newcommand{\vp}{\varphi}
\newcommand{\ra}{\rightarrow}
\renewcommand{\R}{\mathbb{R}}
\newcommand{\xcite}[1]{\cite{#1}}
\newcommand{\STATE}{\State}
\newcommand{\FOR}{\For}
\newcommand{\WHILE}{\While}
\newcommand{\ENDWHILE}{\EndWhile}
\newcommand{\ENDFOR}{\EndFor}
\newcommand{\RETURN}{\Return}
\newcommand*\samethanks[1][\value{footnote}]{\footnotemark[#1]}
\title{Diffusion Operator and Spectral Analysis for \\ 
Directed Hypergraph Laplacian}
\author{T-H. Hubert Chan\thanks{Department of Computer Science, the University of Hong Kong. {\texttt{\{hubert,zhtang,xwwu,czzhang\}@cs.hku.hk}}} \and Zhihao Gavin Tang\samethanks
\and Xiaowei Wu\samethanks
 \and Chenzi Zhang\samethanks}
\date{}
\begin{document}

\begin{titlepage}

\maketitle

\begin{abstract}
In spectral graph theory, the Cheeger's inequality gives upper and lower bounds of edge expansion in normal graphs in terms of the second eigenvalue of the graph's Laplacian operator. Recently this inequality has been extended to undirected hypergraphs and directed normal graphs via a non-linear operator associated with a diffusion process in the underlying graph. 

In this work, we develop a unifying framework for defining a diffusion operator on a directed hypergraph with stationary vertices, which is general enough for the following two applications. 

1. Cheeger's inequality for directed hyperedge expansion. 

2. Quadratic optimization with stationary vertices in the context of semi-supervised learning.

Despite the crucial role of the diffusion process in spectral analysis, previous works have not formally established the existence of the corresponding diffusion processes.  In this work, we give a proof framework that can indeed show that such diffusion processes are well-defined.  In the first application, we use the spectral properties of the diffusion operator to achieve the Cheeger's inequality for directed hyperedge expansion. In the second application, the diffusion operator can be interpreted as giving a continuous analog to the subgradient method, which moves the feasible solution in discrete steps towards an optimal solution.
\end{abstract}

\thispagestyle{empty}
\end{titlepage}

\section{Introduction}
\label{sec:intro}

Classical spectral graph theory relates the edge expansion properties of a normal graph $G=(V,E)$ with the eigenvalues and eigenvectors of
matrices appropriately defined for the graph~\xcite{alon1985lambda1, alon1986eigenvalues, arora2010subexponential, louis2012many, lee2014multiway}.  
Intuitively,
for a given graph, edge expansion gives a lower bound
on the ratio of the weight of edges leaving a subset~$S$ of vertices to the weight of $S$.
Therefore, these notions have applications in graph partitioning or clustering~\xcite{jacm/KannanVV04,colt/MakarychevMV15,PengSZ15}, in which a graph is partitioned into clusters such that, loosely speaking,
the general goal is to minimize the number of edges crossing different clusters with respect to the cluster sizes.

The reader can refer to the standard references~\xcite{chung1997spectral, hoory2006expander} for a comprehensive description on spectral graph theory.
In short, the celebrated
Cheeger's inequality~\xcite{alon1985lambda1, alon1986eigenvalues} gives
upper and lower bounds of the edge expansion $\phi_G$ in terms
of the second eigenvalue $\gamma_2$ of the graph Laplacian as follows:
$$\frac{\gamma_2}{2} \leq \phi_G \leq \sqrt{2\gamma_2}.$$

Recently, higher-order Cheeger inequalities~\xcite{lee2014multiway,kwok2015improved} have been achieved
to relate higher order eigenvalues of the graph Laplacian
with multi-way edge expansion, and conditions in which spectral clustering
can be applied have been explored~\xcite{PengSZ15}.

For more general graph models,
Cheeger's inequality has been extended
to undirected hypergraphs~\xcite{louis2015hypergraph,chan2016spectral}
and directed normal graphs~\xcite{yoshida2016nonlinear}.
The high level approach is to define a diffusion operator
$\L_\omega: \R^V \ra \R^V$ and use properties of the
diffusion process $\frac{d f}{d t} = - \L_\omega f \in \R^V$
to achieve variants of the Cheeger's inequality.
%

In this paper, we develop a formal spectral framework to analyze
edge expansion and related problems in directed hypergraphs~\xcite{gallo1993directed}, which 
are general enough to subsume all previous graph models.

\noindent \textbf{Directed Hypergraphs.}
We consider an edge-weighted \emph{directed  hypergraph} $H=(V,E,w)$, where~$V$ is the vertex set
and~$E\subseteq 2^V \times 2^V$ is the set of directed hyperedges.
Each directed hyperedge $e\in E$ is denoted by $(\Te, \He)$, where $\Te\subseteq V$ is the \emph{tail} and $\He\subseteq V$ is the \emph{head};
we assume that both the tail and the head are non-empty, and
we follow the convention that the direction is from tail to head.
The function $w: E \ra \R_+$ assigns a non-negative weight to each edge.
Note that $\Te$ and $\He$ do not have to be disjoint.
This notion of directed hypergraph was first introduced
by Gallol et al.~\xcite{gallo1993directed},
who considered applications in propositional logic, analyzing dependency in relational database, and traffic analysis.

Observe that this model captures previous graph models:
(i) an undirected hyperedge $e$ is the special case when $\Te = \He$, and (ii) a directed normal edge $e$ is the special case when $|\Te| = |\He| = 1$.

\noindent \textbf{Directed Hyperedge Expansion.}
In addition to edge weights,
each vertex $u \in V$ has weight $\omega_u := \sum_{e \in E: u \in \Te \cup \He} w_e$ that is also known as its \emph{weighted degree}.
Given a subset $S \subseteq V$, denote $\overline{S} := V \setminus S$ and $\omega(S) := \sum_{u\in S}\omega_u$.
Define the out-going cut $\partial^+(S) := \{e \in E:  \Te \cap S \neq \emptyset \wedge \He \cap \overline{S} \neq \emptyset\}$,
and the in-coming cut $\partial^-(S) :=
\{e \in E:  \Te \cap \overline{S} \neq \emptyset \wedge \He \cap S \neq \emptyset\}$.
The out-going edge expansion of $S$ is $\phi^+(S) := \frac{w(\partial^+(S))}{\omega(S)}$,
and the in-coming edge expansion is $\phi^-(S) := \frac{w(\partial^-(S))}{\omega(S)}$.
The edge expansion of $S$ is $\phi(S) := \min \{\phi^+(S), \phi^-(S)\}$.
The edge expansion of $H$ is 
$$\phi_H := \min_{\emptyset \neq S \subset V:
\omega(S) \leq \frac{\omega(V)}{2}} \phi(S).$$
This notion of directed hyperedge expansion is consistent with
the edge expansion defined for undirected normal graphs~\xcite{chung1997spectral}, directed normal graphs~\xcite{yoshida2016nonlinear}
and undirected hypergraphs~\xcite{louis2015hypergraph,chan2016spectral}.

In this paper, we give a unifying framework of diffusion process
in directed hypergraphs, and relate the spectral properties
of the associated diffusion operator with edge expansion, thereby achieving a variant of the Cheeger's inequality
for directed hypergraphs.  Our diffusion framework is general enough to 
give a continuous analog of the subgradient method used in
in the context of semi-supervised learning.

\subsection{Our Contributions and Techniques}

\noindent \textbf{Spectral Analysis on Directed Hyperedge Expansion via Diffusion Process.}
We give a unifying framework of diffusion process in directed hypergraphs,
which is general enough for the following two applications.

\noindent \emph{Application 1: Cheeger's Inequality for Directed Hyperedge Expansion.}
We shall see that the spectral properties
of the diffusion operator can give a variant of the Cheeger's inequality
for directed hyperedge expansion.

\noindent \emph{Application 2: Quadratic Optimization with Stationary Vertices.}
An edge-weighted directed hypergraph $H$ induces
the following quadratic form
$$\mathsf{Q}(f) = \frac{1}{2} \sum_{e\in E} w_e \max_{(u,v)\in \Te\times \He}([f_u - f_v]^+)^2,$$
where for $x \in \R$, $[x]^+ := \max\{x,0\}$.  In semi-supervised learning
setting~\xcite{hein2013total}, vertices have some underlying label $f^* \in \R^V$.
There is a subset $T \subset V$ of \emph{stationary vertices} for which the labels $f^*_T \in \R^T$ are known.
Higher-order causal relationships between vertices can be modeled by directed hyperedges.
The task is to predict the labels for the unlabeled vertices $N := V \setminus T$
by finding $f \in \R^V$ such that
$\Q(f)$ is minimized, subject to $f_T = f^*_T$.

Since $f_T \in \R^T$ is fixed, we can view
$\Q : \R^N \ra \R$ as a function on $f_N \in \R^N$.
Since $\Q$ is convex but not differentiable everywhere,
the subgradient method~\xcite{Shor1985} can be used to solve the optimization problem.
However, in general, the subgradient does not necessarily lead to a descent direction
for $\Q$.  We shall show that the diffusion operator gives a subgradient of $\Q$ which also
leads to a descent direction.  Hence, the diffusion process is theoretically interesting
because it can be viewed as a continuous version of the subgradient method in which
$\Q$ is always decreasing towards the minimum value.

We give a unifying framework of diffusion process on directed hypergraphs
that generalizes those for undirected hypergraphs~\xcite{louis2015hypergraph,chan2016spectral}
and directed normal graphs~\xcite{yoshida2016nonlinear}.
Loosely speaking, the vertices are associated with a
\emph{density vector}~$f \in \R^V$.
The \emph{discrepancy} of an edge $e \in E$ with respect to $f$
is $\Delta_e(f) := \max_{u \in \tail_e} f_u - \min_{v \in \head_e} f_v$.
As in the case for directed normal graphs,
an edge~$e$ is \emph{active} with respect to~$f$ if~$\Delta_e(f) > 0$,
in which case edge~$e$ causes a rate of~$w_e \Delta_e(f)$ measure flowing
from vertices in~$\tail_e$ having maximum density to those
in~$\head_e$ having minimum density.

\noindent \textbf{New Feature: Stationary Vertices.}
Since in our second application the stationary vertices~$T$
have fixed~$f$ values, our diffusion process needs to incorporate
the property that the densities of stationary vertices do not change.
As in previous approaches,
each non-stationary vertex~$u \in N$ has some positive
weight~$\omega_u$.  In the first application
of hyperedge expansion, for $u \in V = N$, we set~$\omega_u := \sum_{e \in E: u \in \Te \cup \He} w_e$;
in the second application of quadratic optimization,
we shall see that setting~$\omega_u := 1$ for $u \in N$ will be appropriate.
Non-stationary vertices are also associated with 
a \emph{measure vector}~$\vp_N \in \R^N$
such that for each~$u \in N$, the invariant~$\vp_u =  \omega_u f_u$ holds;
hence, the density of a non-stationary vertex will change
according to its weight after releasing or absorbing measure (from other vertices).
On the other hand, a stationary vertex is not associated with any weight,
and the new feature is that a stationary vertex's density will not change even 
after releasing or absorbing measure.
The formal rules of our diffusion process are given in Definition~\ref{defn:rules}.

\noindent \textbf{Existing Approaches and Inadequacies.}
Given the current density vector $f \in \R^V$,
we only need to consider active edges as in~\xcite{yoshida2016nonlinear}.
Then, as in~\xcite{louis2015hypergraph,chan2016spectral},
the weight $w_e$ of an active hyperedge~$e$ needs
to be distributed carefully among the pairs $\Te \times \He$
to form a symmetric matrix $A_f \in \R^{V \times V}$,
where the diagonal entries are chosen such that
each row corresponding to $u \in N$ sums to the weight $\omega_u$.
Since the $f$ values of stationary vertices in~$T$ do not change,
the rows of $A_f$ corresponding to~$T$ are actually unimportant.
Then, using the projection operator $\Pi_N : \R^V \ra \R^N$ to keep
only coordinates corresponding to $N$, we can define the diffusion process by:
$$\frac{d f_N}{d t} = - (\Pi_N - \W^{-1} \Pi_N A_f) f \in \R^N,$$
where $\W \in \R_+^{N \times N}$ is the diagonal matrix
whose $(u,u)$-entry equals the weight $\omega_u$ of $u \in N$.
The diffusion operator $\L_\omega: \R^V \ra \R^N$
can also be defined as $\L_\omega f := - \frac{df}{dt}$.
With a bit of effort, we show that the techniques
in~\xcite{louis2015hypergraph,chan2016spectral} can be extended to handle
both directed hyperedges and stationary vertices.  However,
before we can use this diffusion process in our applications, 
below is an obvious question.

\noindent \textbf{Why would such a diffusion process exist?} It is
shown in~\xcite{chan2016spectral} that
the initial attempt~\xcite{louis2015hypergraph}
to distribute the weight $w_e$ of a hyperedge edge $e$ to form $A_f$ (which can change as
the density vector~$f$ changes)
might not lead to a well-defined diffusion process.
Moreover, even with the proper definition of the 
diffusion operator $\L_\omega: \R^V \ra \R^V$
in the absence of stationary vertices,
one could not simply pick an arbitrary linear subspace $S$ in $\R^V$
and hope that the diffusion process
$\frac{df}{dt} = - \Pi_S \L_\omega f$
is well-defined, where $\Pi_S : \R^V \ra \R^V$ is the
orthogonal projection operator
into the subspace $S$.  This is actually the subtle reason
why higher-order spectral properties do not hold for the diffusion operator
for hypergraphs.
On the other hand, the diffusion process with $\Pi_S$ is actually
well-defined in directed normal graphs~\xcite{yoshida2016nonlinear}.

In previous works, the existence of such diffusion processes have almost
never been properly discussed.  In~\xcite{yoshida2016nonlinear},
it was simply ``stipulated'' that the above diffusion process on directed normal graphs
exists. 
The approach for undirected hypergraphs in~\xcite{chan2016spectral} seems reasonable, and
some reader has also commented that it
is a ``natural first-guess fix'' to the initial attempt in~\xcite{louis2015hypergraph}.  However, in both works on undirected hypergraphs, properties of the diffusion process are used without first proving their existence.

\noindent \textbf{Significance of Our Results.}
Even though similar diffusion processes have been explored in previous works~\cite{louis2015hypergraph,yoshida2016nonlinear}, in most cases, the existence of the underlying process was not proved formally, which in some cases have led to inaccurate results or proofs~\cite{louis2015hypergraph}.  
Hence, it is evident that more theoretical justification is required for our current
framework of directed hypergraphs, especially with the more complicated rules
involving stationary vertices.
A theoretical contribution of this paper
is to give a formal treatment for the existence of such diffusion processes.
Therefore, even though the proofs in the submission might seem pedantic at first sight, they are valuable in showing the community how models should be correctly proved in order to avoid pitfalls in proofs.

\begin{theorem}[Existence of Diffusion Process (Section~\ref{sec:diffusion_exists})]
\label{th:intro_exist}
Given an edge-weighted directed hypergraph $H=(V,E,w)$
and vertex weights $\omega: N \ra \R_+$ for non-stationary vertices in $N \subseteq V$,
there exists a unique diffusion operator $\L_\omega: \R^V \ra \R^N$ satisfying
Definition~\ref{defn:rules} such that the diffusion 
process $\frac{d f_N}{d t} = - \L_\omega f \in \R^N$ is well-defined.
\end{theorem}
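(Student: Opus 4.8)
The plan is to recast the existence of the diffusion process as an existence-and-uniqueness question for an ODE $\frac{df_N}{dt} = -\L_\omega f$ whose right-hand side is, in general, only piecewise-defined because the matrix $A_f$ depends on the combinatorial structure of which vertices achieve $\max_{u\in\Te}f_u$ and $\min_{v\in\He}f_v$ on each active edge. The key difficulty is that $f\mapsto A_f f$ is not continuous as a function of $f$, so the classical Picard–Lindel\"of theorem does not apply directly; indeed the cautionary remarks in the excerpt (and the failure of the naive construction in \cite{louis2015hypergraph}) warn that a poorly chosen rule for distributing $w_e$ can make the vector field genuinely ill-posed. So the first step is to identify, among all symmetric matrices $A_f$ consistent with Definition~\ref{defn:rules} at a given $f$, a \emph{canonical} choice: I would set up, for each fixed $f$, a small auxiliary optimization (a projection / least-norm problem) whose unique optimizer selects $\L_\omega f$. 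Concretely, I expect $\L_\omega f$ to be characterized as the vector of minimum $\norm{\cdot}_\W$-norm in the (closed, convex) set of admissible ``flow rates'' determined by the active edges at $f$; uniqueness of the minimizer of a strictly convex function over a closed convex set gives a well-defined single-valued operator $\L_\omega:\R^V\to\R^N$.

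The second step is to establish the regularity of this canonical operator that makes the ODE solvable. I would prove that $\L_\omega$ is Lipschitz (or at least one-sided Lipschitz / monotone with respect to $\norm{\cdot}_\W$) even across the ``boundaries'' where the combinatorial type of $f$ changes. The mechanism I expect to work: although $A_f$ jumps, the distributed weights are chosen continuously at the transition (two descriptions of the flow on an edge agree when a tie is formed), so the projection characterization produces a $\L_\omega$ that is continuous, and in fact the map $f\mapsto -\L_\omega f$ is monotone — $\langle \L_\omega f - \L_\omega g,\ f_N-g_N\rangle \ge 0$ — because each edge's contribution to the quadratic form $\Q$ is convex and $\L_\omega f$ is essentially a selected subgradient of (a $\W$-weighted) $\Q$. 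Monotonicity plus Lipschitz-type bounds is exactly the hypothesis needed for existence and uniqueness of solutions of $\frac{df_N}{dt}=-\L_\omega f$: I would invoke the standard theory for differential inclusions / maximal monotone operators (Komura–Brezis) or, if a genuine Lipschitz bound is available, plain Picard–Lindel\"of, to conclude that from any initial density $f(0)$ there is a unique absolutely continuous trajectory.

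The third step handles the stationary vertices and the projection $\Pi_N$. Since the coordinates in $T$ are frozen, I would argue that restricting the dynamics to the affine slice $\{f: f_T = f^*_T\}$ is consistent: the rows of $A_f$ indexed by $T$ are irrelevant, and the definition of $\L_\omega f\in\R^N$ uses only $\Pi_N A_f f$, so the vector field is intrinsically an operator on $\R^N$ (with the $T$-coordinates entering only as fixed parameters). One must check that the admissible-flow set and its least-norm element still depend on $f$ in the required monotone/continuous way when $f_T$ is held fixed — this is inherited from the unconstrained case because freezing coordinates is itself a convex restriction. Finally I would verify that the trajectory produced genuinely satisfies all clauses of Definition~\ref{defn:rules} pointwise in $t$ (active edges push measure from max-density tail vertices to min-density head vertices at rate $w_e\Delta_e(f)$, stationary vertices' densities are unchanged, and $\vp_u = \omega_u f_u$ is preserved), and that any other process satisfying the Definition must use the same canonical $\L_\omega$, giving uniqueness of the operator itself.

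I expect the main obstacle to be Step~2: proving that the canonical selection rule yields a vector field regular enough (continuity across combinatorial-type boundaries, plus monotonicity) for a clean ODE existence theorem. This is precisely where the earlier works were informal, and where the directed-hyperedge structure — a $\max$ over $\Te\times\He$ rather than a single pair — together with the stationary-vertex constraint makes the bookkeeping delicate; the bulk of Section~\ref{sec:diffusion_exists} will be the careful verification that "ties are broken continuously" so that the least-norm flow is a Lipschitz/monotone function of $f$.
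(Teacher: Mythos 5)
Your proposal takes a genuinely different route from the paper. The paper never invokes ODE existence theory for a discontinuous vector field; instead it shows that the rules of Definition~\ref{defn:rules} (extended to Proposition~\ref{prop:higher_rules} via the Envelope Theorem) uniquely determine \emph{all} higher-order derivatives $f^{(i)}$ through a recursive lexicographic densest-subset procedure, uses the resulting chain of ordered equivalence relations to identify which vertices stay together in infinitesimal time, and then observes that on each such ``super-vertex'' configuration the dynamics reduce to a constant-coefficient linear ODE (Lemma~\ref{lemma:distinct_densities}), solvable explicitly for a short time. Your plan instead characterizes $\L_\omega f$ as the least-$\norm{\cdot}_\omega$-norm element of the convex set of admissible flow rates and appeals to maximal-monotone / gradient-flow theory. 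That strategy is in principle viable --- the admissible set at $f$ is essentially $-\W^{-1}\partial\Q(f)$, which is maximal monotone in the $\W$-inner product, and the Brezis--Komura theory would give a unique trajectory whose right derivative is the minimal-norm section --- and it would buy a cleaner global existence statement than the paper's local-in-time argument. The paper's combinatorial route, in exchange, yields the explicit derivative-computing procedure $\mathfrak{D}^{(i+1)}$ and the identity $\norm{f^{(i+1)}}^2_\omega = -\sum_e w_e\Delta^{(i)}_e\Delta^{(i+1)}_e$, which are used later in the spectral and subgradient applications.

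However, your Step~2 contains a concrete error: the canonical (least-norm) selection $\L_\omega$ is \emph{not} continuous, let alone Lipschitz, across combinatorial-type boundaries. Consider a single hyperedge with tail $\{u,v\}$ and head $\{z\}$: as $f_v \uparrow f_u$ (with both above $f_z$), all outflow is charged to $u$ until the instant of the tie, at which point the densest-subset rule splits it; the operator jumps. So ``ties are broken continuously'' is false as a statement about $f\mapsto\L_\omega f$, and Picard--Lindel\"of is unavailable. Only the monotonicity branch of your argument can survive, and even there the load-bearing steps are asserted rather than proved: you must (i) verify that the set of derivatives permitted by rules \textsf{R(0)}--\textsf{R(2)} coincides with $-\W^{-1}\partial\Q(f)$ (projected to $N$, with $f_T$ frozen), (ii) show that the least-norm element is the one selected by the rules when they are required to hold over an interval rather than at a single instant (this is exactly the subtlety the paper resolves with higher derivatives --- at an instant the rules admit many selections, and uniqueness only emerges from forward-in-time consistency), and (iii) confirm that the Brezis--Komura trajectory satisfies the pointwise clauses of Definition~\ref{defn:rules}, not merely the differential inclusion. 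As written, the proposal identifies the right convex-analytic objects but does not close the argument.
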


\noindent \emph{Terminology Remarks.}  For the case when $V = N$,
to be consistent with the literature,
the (weighted) Laplacian $\mathsf{M}: \R^V \ra \R^V$
is actually given by 
$\mathsf{M} := \W \L_\omega$.  Hence, to avoid confusion,
we will refer to the diffusion operator whenever possible.

\noindent \textbf{Proof Intuition.} The trickiest aspect of the framework is actually
because of the hyperedges.  Suppose at some instant, the current density vector is $f \in \R^V$.
The easy case is when all coordinates of $f$ are distinct, because in this case each (directed) hyperedge just behaves like a normal edge.  As the diffusion process
changes $f$ continuously, the resulting matrix $A_f$ in the above discussion actually remains unchanged in some short period of time, and this is enough to prove
the existence of the diffusion process.  As shown in~\xcite{chan2016spectral},
even when there are ties in the coordinates of $f$,
the first derivative $f^{(1)}$ (with respect to time)
can be uniquely determined, assuming implicitly that the diffusion process is well-defined.
In this paper, we go several (actually countably infinitely many) steps further,
and show that the diffusion rules uniquely determine all higher order derivatives~$f^{(i)}$.
Therefore, a lexicographical order on~$(f, f^{(1)}, f^{(2)}, \ldots)$
determines an equivalence relation~$\sigma$ on~$V$,
such that vertices in the same equivalence class will have the same $f$ values in
infinitesimal time.  Hence, each equivalence class behaves like a \emph{super vertex},
and for a short period of time, each hyperedge behaves like a normal edge between
super vertices.  As a result, the trajectory of the diffusion process can be solved
explicitly for a short period of time, until different equivalence classes merge or an equivalence class splits. Even though our high level proof strategy is intuitive,
as we shall see in Section~\ref{ssec:next-derivative},
a formal realization of our ideas requires some non-trivial mathematical constructs.

\vspace{5pt}

\noindent \textbf{(1) Application to Spectral Analysis of Directed
Hyperedge Expansion.}  After formally establishing the existence
of the diffusion process in Theorem~\ref{th:intro_exist},
we can relate directed hyperedge expansion with the spectral
properties of the diffusion operator, under the special case
with no stationary vertices.

\begin{theorem}[Cheeger's Inequality for Directed Hyperedge Expansion (Section~\ref{sec:spectral})]
\label{th:intro_cheeger}
Given an edge-weighted directed hypergraph $H=(V,E,w)$ and
vertex weights $\omega_u := \sum_{e \in E: u \in \Te \cup \He} w_e$,
the discrepancy ratio is
$\D(f) := \frac{2 \Q(f)}{\sum_{v \in V} \omega_v f^2_v}
= \frac{\sum_{e \in E: \Delta_e(f)>0} \Delta_e(f)^2}{\sum_{v \in V} \omega_v f^2_v}$.

Define $\gamma_2 := \min_{0 \neq f \in \R^V: \sum_{v \in V} \omega_v f_v = 0} \D(f)$.
Then, $\gamma_2$ is an eigenvalue of the diffusion operator
$\L_\omega: \R^V \ra \R^V$ from Theorem~\ref{th:intro_exist},
where any minimizer achieving $\gamma_2$ is a corresponding eigenvector.

Moreover, the eigenvalue $\gamma_2$ relates to the directed hyperedge expansion
$\phi_H$ in the following inequality:
$$\frac{\gamma_2}{2} \leq \phi_H \leq 2 \sqrt{\gamma_2}.$$

\end{theorem}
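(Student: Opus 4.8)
The plan is to establish the three assertions of the theorem in turn: (i) $\gamma_2$ is an eigenvalue of $\L_\omega$ realized by any minimizer of $\D$ on the relevant hyperplane; (ii) the ``easy'' bound $\gamma_2\le 2\phi_H$; and (iii) the ``hard'' bound $\phi_H\le 2\sqrt{\gamma_2}$. Throughout I write $\langle f,g\rangle_\omega:=\sum_{v\in V}\omega_v f_v g_v$, so that $\D(f)=2\Q(f)/\langle f,f\rangle_\omega$ is a Rayleigh quotient on the hyperplane $\mathcal H:=\{f:\langle f,\mathbf 1\rangle_\omega=0\}$; a minimizer exists by compactness of the $\langle\cdot,\cdot\rangle_\omega$-unit sphere of $\mathcal H$ and continuity of $\D$ there. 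For (i) I would take any minimizer $f$, normalized so $\langle f,f\rangle_\omega=1$ and hence $2\Q(f)=\gamma_2$, and run the diffusion process of Theorem~\ref{th:intro_exist} starting from $f$. I use three properties coming from the construction behind that theorem: $\W\L_\omega f$ is the minimum-$\omega$-norm subgradient of the $2$-homogeneous convex function $\Q$ at $f$, which gives $\langle\L_\omega f,f\rangle_\omega=2\Q(f)$ (Euler's identity) and $\tfrac{d}{dt}\Q(f(t))=-\langle\L_\omega f(t),\L_\omega f(t)\rangle_\omega$ along the flow; and the flow preserves $\langle f,\mathbf 1\rangle_\omega$ (every subgradient of the shift-invariant $\Q$ is orthogonal to $\mathbf 1$). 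Since $f(t)\in\mathcal H$ for all $t\ge 0$, we have $\D(f(t))\ge\gamma_2=\D(f(0))$, so $t=0$ minimizes $t\mapsto\D(f(t))$. Differentiating the quotient at $t=0$, with $\tfrac{d}{dt}\langle f(t),f(t)\rangle_\omega|_0=-2\langle\L_\omega f,f\rangle_\omega=-2\gamma_2$, gives $0\le-2\langle\L_\omega f,\L_\omega f\rangle_\omega+2\gamma_2^2$, i.e. $\langle\L_\omega f,\L_\omega f\rangle_\omega\le\gamma_2^2$; combined with $\gamma_2=\langle\L_\omega f,f\rangle_\omega\le\sqrt{\langle\L_\omega f,\L_\omega f\rangle_\omega}$ by Cauchy--Schwarz, both inequalities are tight, forcing $\L_\omega f=\gamma_2 f$.

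For (ii), given $S$ with $0<\omega(S)\le\omega(V)/2$ I would use the test vector $f:=\tfrac1{\omega(S)}\mathbf 1_S-\tfrac1{\omega(\overline S)}\mathbf 1_{\overline S}\in\mathcal H$. A direct computation gives $\langle f,f\rangle_\omega=\tfrac1{\omega(S)}+\tfrac1{\omega(\overline S)}$ and shows $\Delta_e(f)>0$ precisely when $e\in\partial^+(S)$, with $\Delta_e(f)=\tfrac1{\omega(S)}+\tfrac1{\omega(\overline S)}$ there; hence $\D(f)=w(\partial^+(S))\big(\tfrac1{\omega(S)}+\tfrac1{\omega(\overline S)}\big)\le 2\phi^+(S)$, using $\omega(\overline S)\ge\omega(S)$, so $\gamma_2\le 2\phi^+(S)$. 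Running the same computation with $-f\in\mathcal H$, for which $\Delta_e(-f)>0$ precisely when $e\in\partial^-(S)$, gives $\gamma_2\le 2\phi^-(S)$. Therefore $\gamma_2\le 2\min\{\phi^+(S),\phi^-(S)\}=2\phi(S)$, and minimizing over $S$ yields $\gamma_2\le 2\phi_H$.

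For (iii), I would start from a minimizer $f$, which is an eigenvector by (i) and is non-constant since $f\in\mathcal H\setminus\{0\}$. Replacing $f$ by $f-c\mathbf 1$ for a weighted median $c$ can only decrease $\D$, since $\Q$ is shift-invariant and $\langle f-c\mathbf 1,f-c\mathbf 1\rangle_\omega=\langle f,f\rangle_\omega+c^2\omega(V)$, so I may assume I have $g$ with $\D(g)\le\gamma_2$ such that both $\{g>0\}$ and $\{g<0\}$ have $\omega$-weight at most $\omega(V)/2$. Write $g=g_+-g_-$ with $g_\pm:=\max\{\pm g,0\}$, and set $\Delta_e^{\mathrm{rev}}(h):=\max_{v\in\He}h_v-\min_{u\in\Te}h_u$ (the discrepancy of $h$ along $e$ with tail and head swapped). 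The first main step is the truncation inequality, valid for every edge $e$,
$$\big(\plus{\Delta_e(g_+)}\big)^2+\big(\plus{\Delta_e^{\mathrm{rev}}(g_-)}\big)^2\le\big(\plus{\Delta_e(g)}\big)^2,$$
proved by a short case analysis on the signs of $\max_{u\in\Te}g_u$ and $\min_{v\in\He}g_v$. Since $\sum_v\omega_v g_v^2=\sum_v\omega_v(g_+)_v^2+\sum_v\omega_v(g_-)_v^2$, weighting by $w_e$, summing, and applying the mediant inequality shows that either the out-discrepancy ratio of $g_+$ in $H$, or that of $g_-$ in the reversed hypergraph (each edge $(\Te,\He)\mapsto(\He,\Te)$), is at most $\gamma_2$. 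In either case I am left with a non-negative vector $h$, supported on a set of $\omega$-weight at most $\omega(V)/2$, whose out-discrepancy ratio in $H$ (or its reversal) is at most $\gamma_2$. The second main step is Cheeger-type sweep rounding: sampling $\theta$ uniformly from $[0,(\max_v h_v)^2]$ and setting $S_\theta:=\{v:h_v^2\ge\theta\}$ gives $\expct[\omega(S_\theta)]=\tfrac1{(\max h)^2}\sum_v\omega_v h_v^2$, and, writing $D_e:=\plus{\Delta_e(h)}$ and using the identity $(\max_{u\in\Te}h_u)^2-(\min_{v\in\He}h_v)^2=D_e\,(\max_{u\in\Te}h_u+\min_{v\in\He}h_v)$, Cauchy--Schwarz, and $\sum_e w_e(\max_{u\in\Te}h_u)^2,\ \sum_e w_e(\min_{v\in\He}h_v)^2\le\sum_v\omega_v h_v^2$, one gets $\expct[w(\partial^+(S_\theta))]\le\tfrac2{(\max h)^2}\sqrt{\big(\sum_e w_e D_e^2\big)\big(\sum_v\omega_v h_v^2\big)}$. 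An averaging argument then produces a $\theta$ with $w(\partial^+(S_\theta))/\omega(S_\theta)\le 2\sqrt{\sum_e w_e D_e^2/\sum_v\omega_v h_v^2}\le 2\sqrt{\gamma_2}$, and $S_\theta$ is a non-empty subset of the support of $h$, hence a valid witness. Thus $\phi(S_\theta)\le\phi^+(S_\theta)\le 2\sqrt{\gamma_2}$ (with $\phi^-$ replacing $\phi^+$ when $h$ comes from $g_-$), so $\phi_H\le 2\sqrt{\gamma_2}$.

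The hard direction is where I expect the real work. The truncation inequality is the genuinely new subtlety introduced by directedness: the ``low'' part $g_-$ controls \emph{in}-cuts rather than out-cuts of $g$, so it must be routed through the reversed hypergraph, and the edge-wise squared-discrepancy bound needs a careful enumeration of sign cases (in particular, checking that the cross term $-2(\max_{u\in\Te}g_u)(\min_{v\in\He}g_v)$ is nonnegative whenever $\max_{u\in\Te}g_u>0>\min_{v\in\He}g_v$). The sweep-rounding step is routine in outline, but the hyperedge bound $(\max_{u\in\Te}h_u+\min_{v\in\He}h_v)^2\le 4(\max_{u\in\Te}h_u)^2$ is precisely what forces the constant $2$ rather than $\sqrt 2$. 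Finally, assertion (i) is not self-contained: it rests on the three properties of $\L_\omega$ quoted above --- that $\W\L_\omega f$ is the minimum-norm subgradient of $\Q$ at $f$, that the flow decreases $\Q$ at rate $\langle\L_\omega f,\L_\omega f\rangle_\omega$, and that it preserves $\langle f,\mathbf 1\rangle_\omega$ --- all of which ultimately depend on the construction behind Theorem~\ref{th:intro_exist}; an alternative, flow-free proof of (i) via Euler's identity and a Hahn--Banach separation establishing that $\gamma_2\W f$ lies in $\partial\Q(f)$ would need the same structural input.
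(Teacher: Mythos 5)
Your proposal is correct and follows essentially the same route as the paper: part (i) is the argument of Lemma~\ref{lemma:eigenvalue} (run the diffusion from a minimizer, use that the flow stays $\perp_\omega \vec{1}$ and that $\D$ is non-increasing, then force equality in Cauchy--Schwarz via Lemmas~\ref{lemma:first_deriv}, \ref{lemma:simple_spectral}, \ref{lemma:change_ratio} and Proposition~\ref{prop:disc_ray}), and parts (ii)--(iii) are the standard test-vector and median-shift/truncation/sweep-cut arguments that the paper explicitly delegates to \cite{chan2016spectral,yoshida2016nonlinear} in Claim~\ref{claim:cheeger}. The only difference is that you spell out the details (including the reversed-hypergraph handling of $g_-$ and the edgewise truncation inequality) that the paper omits by reference, and these check out.
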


\noindent \textbf{Remarks.}  As in the cases
for undirected hypergraphs~\xcite{louis2015hypergraph,chan2016spectral}
and directed normal graphs~\xcite{yoshida2016nonlinear},
the upper bound in the Cheeger's inequality is slightly weaker
than that for undirected normal graphs.  We shall in fact concentrate 
on establishing the relationship between the discrepancy ratio $\D(\cdot)$
and the diffusion operator $\L_\omega$, after which the Cheeger's inequality
actually follows from nearly the same argument as previous works.

\vspace{5pt}

\noindent \textbf{(2) Continuous Analog of the Subgradient Method for
Quadratic Optimization with Stationary Vertices.}
The relationship between the diffusion operator $\L_\omega$ and the (sub)gradient
of the quadratic form $\Q$ has been mentioned as a potential
research direction~\xcite{chan2016spectral}.  We make this connection more precise and show that
the diffusion operator does not give only a subgradient, but also a descent direction
with respect to $\Q$.

\begin{theorem}[Diffusion Operator Gives a Subgradient and Descent Direction (Section~\ref{sec:subgradient})]
Suppose an edge-weighted directed hypergraph $H=(V,E,w)$ with
stationary vertices $T$ (where $f_T \in \R^T$ is fixed)
induces the quadratic form $\Q: \R^N \ra \R$, which
is a function on the restricted vector $f_N \in \R^N$ for
the non-stationary vertices $N$.  Then, for any $f \in \R^V$,
the diffusion operator $\L_\omega: \R^V \ra \R^N$ gives a
subgradient of $\Q$ at $f_N \in \R^N$.

Moreover, in the diffusion process defined by $\frac{df_N}{dt} = - \L_\omega f$,
the value of $\Q$ is non-increasing and converges to the minimum value.
\end{theorem}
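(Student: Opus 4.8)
The two assertions are handled separately: the subgradient claim reduces to convex subgradient calculus, while the descent and convergence claims rely on the fact --- implicit in Definition~\ref{defn:rules} and Theorem~\ref{th:intro_exist} --- that $\L_\omega f$ is the subgradient of $\Q$ at $f_N$ of \emph{minimum} norm, rather than an arbitrary subgradient.

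\noindent \textbf{(1) $\L_\omega f$ is a subgradient.}
First I would write $\Q = \sum_{e \in E} \Q_e$, where $\Q_e(f) := \tfrac{w_e}{2}\big(\big[\max_{u \in \Te} f_u - \min_{v \in \He} f_v\big]^+\big)^2$ is the composition of the convex piecewise-linear map $p_e(f) := \max_{u \in \Te} f_u - \min_{v \in \He} f_v$ with the convex nondecreasing map $t \mapsto \tfrac{w_e}{2}([t]^+)^2$; hence each $\Q_e$, and therefore $\Q$, is convex, and viewing $\Q$ as a function of $f_N$ (with $f_T$ held fixed) its subdifferential $\partial \Q(f_N) \subseteq \R^N$ is the projection onto the $N$-coordinates of the subdifferential of $\Q : \R^V \ra \R$ at $f$. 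I would then unwind the construction of $A_f$: for each active edge $e$ (those with $\Delta_e(f) > 0$) its weight $w_e$ is split into nonnegative numbers $a^e_{uv}$ over the pairs $(u,v)$ with $f_u = \max_{\Te} f$ and $f_v = \min_{\He} f$, and these are the only off-diagonal contributions to $A_f$; consequently, on the $N$-coordinates, $\L_\omega f = \sum_{e : \Delta_e(f) > 0} g^e$, where $g^e_u = \sum_v (a^e_{uv} + a^e_{vu})(f_u - f_v)$. A direct computation rewrites $g^e = w_e \Delta_e(f)\big(\sum_u \alpha^e_u \mathbf{1}_u - \sum_v \beta^e_v \mathbf{1}_v\big)$, where $\alpha^e$ and $\beta^e$ --- the row and column marginals of $a^e / w_e$ --- are probability vectors, and the Moreau--Rockafellar sum rule applied to $p_e$ together with the chain rule (the outer function has derivative $w_e \Delta_e(f)$ at the active point) shows $g^e \in \partial \Q_e(f)$. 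Since subgradients add, $\sum_e g^e \in \partial \Q(f)$; restricting to the $N$-coordinates --- which is legitimate because every competitor $h$ in the defining inequality of a subgradient satisfies $h_T = f_T$ --- yields that $\L_\omega f$ is a subgradient of $\Q$ at $f_N$.

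\noindent \textbf{(2) Descent direction and convergence.}
The extra ingredient I would isolate is that $\L_\omega f$ is the subgradient of $\Q$ at $f_N$ of least norm, equivalently $\langle s, \L_\omega f \rangle \ge \norm{\L_\omega f}^2$ for every $s \in \partial \Q(f_N)$. This is precisely the uniqueness of the first derivative guaranteed by Theorem~\ref{th:intro_exist}: running the transportation argument of Part~(1) in reverse shows that the set of vectors realizable by splitting the active-edge weights equals $\partial \Q(f_N)$, and Definition~\ref{defn:rules} singles out the one of least norm, i.e.\ the projection of $0$ onto the closed convex set $\partial \Q(f_N)$; the projection property then gives the displayed inequality. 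Granting this, along the diffusion process --- which by Theorem~\ref{th:intro_exist} is well-defined and, as shown in its proof, piecewise smooth --- I would use the (everywhere) local Lipschitzness of $\Q$ to identify, at each time $t$, the right derivative of $t \mapsto \Q(f_N(t))$ with the directional derivative $\Q'\big(f_N(t) ; -\L_\omega f(t)\big) = \max_{s \in \partial \Q(f_N(t))} \langle s, -\L_\omega f(t) \rangle \le -\norm{\L_\omega f(t)}^2 \le 0$, which is strictly negative unless $0 \in \partial \Q(f_N(t))$, i.e.\ unless $f_N(t)$ already minimizes $\Q$. Hence $t \mapsto \Q(f_N(t))$ is non-increasing. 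For convergence I would observe that the trajectory stays in a compact box (by a maximum-principle argument the densities never leave the interval spanned by the initial densities, the stationary vertices acting as fixed reservoirs), integrate the inequality to get $\int_0^\infty \norm{\L_\omega f(t)}^2 \, dt \le \Q(f_N(0)) - \inf \Q < \infty$, extract a sequence $t_k \to \infty$ along which $\L_\omega f(t_k) \to 0$ and $f_N(t_k) \to f^\star$, and use closedness of the graph of $\partial \Q$ to conclude $0 \in \partial \Q(f^\star)$; thus $f^\star$ is a global minimizer and, by monotonicity, $\Q(f_N(t)) \to \min \Q$.

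\noindent \textbf{Main obstacle.}
The subgradient calculus of Part~(1) is routine. The crux is Part~(2): showing that the diffusion rule outputs exactly the minimal-norm subgradient --- which is why ``descent direction'' is strictly stronger than ``subgradient'', since for a poorly chosen subgradient $g$ the direction $-g$ need not decrease $\Q$ --- and carrying out the non-smooth analysis along the trajectory rigorously, where the piecewise-smooth structure furnished by Theorem~\ref{th:intro_exist} is essential.
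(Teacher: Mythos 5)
Your Part~(1) is correct but takes a genuinely different route from the paper. The paper proves the subgradient claim by first showing that $\nabla \Q_\sigma(f_N)$ is a subgradient for any permutation $\sigma$ consistent with $f$ (Lemma~\ref{lemma:nabla_subgradient}), and then constructing --- via the rather elaborate invariant-maintenance argument of Section~\ref{sec:diffusion_subgradient} --- a distribution $\mathcal{D}$ over such permutations with $\L_\omega f = \expct_{\sigma \leftarrow \mathcal{D}}[\nabla \Q_\sigma(f_N)]$ (Lemma~\ref{lemma:diffusion_subgradient}). Your edge-by-edge decomposition $\partial \Q_e(f) = w_e[\Delta_e(f)]^+ \cdot \bigl(\mathrm{conv}\{\mathbf{1}_u : u \in S_e(f)\} - \mathrm{conv}\{\mathbf{1}_v : v \in I_e(f)\}\bigr)$, combined with the observation that the marginals of $a^e/w_e$ are probability vectors on $S_e(f)$ and $I_e(f)$, reaches the same conclusion by standard subdifferential calculus and is arguably cleaner; the one ingredient you are implicitly using is that a valid edge-level allocation $\{a^e_{uv}\}$ consistent with the computed $f^{(1)}$ actually exists, which is exactly Lemma~\ref{lemma:existence_of_varphi} and is available.

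Part~(2) has a genuine gap. You reduce the descent property to the claim that $\L_\omega f$ is the \emph{minimum-norm} element of $\partial \Q(f_N)$, and you justify this by saying it ``is precisely the uniqueness of the first derivative'' and that ``Definition~\ref{defn:rules} singles out the one of least norm.'' Neither is true as stated: Definition~\ref{defn:rules} (rule \textsf{R(1)}) only constrains the flow to be supported on $S_e(f) \times I_e(f)$, and Theorem~\ref{th:intro_exist} asserts uniqueness of $f^{(1)}$ without characterizing it variationally. The min-norm property is very plausibly true --- it amounts to showing that the iterated (least) densest-subset decomposition of $\mathfrak{D}^{(1)}$ computes the projection of $0$ onto the base-polytope-like set $\partial\Q(f_N)$, a nontrivial statement in the spirit of lexicographically optimal bases, further complicated here by stationary vertices acting as infinite-weight sinks --- but it is nowhere proved in the paper and would need its own argument. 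The paper sidesteps this entirely: it obtains $\frac{d\Q}{dt} = \sum_{e \in E^{(0)}_+} w_e \Delta^{(0)}_e \Delta^{(1)}_e$ by the Envelope Theorem and then invokes the combinatorial identity $\norm{f^{(1)}}^2_\omega = -\sum_{e \in E^{(0)}_+} w_e \Delta^{(0)}_e \Delta^{(1)}_e$ established inside Lemma~\ref{lemma:deriv}, which yields $\frac{d\Q}{dt} = -\norm{\L_\omega f}^2_\omega \leq 0$ (Lemma~\ref{lemma:first_deriv}) with no appeal to any minimality of the subgradient. You should either supply a proof of the min-norm characterization or replace that step with the paper's direct computation. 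Your convergence argument (monotonicity plus boundedness, integrability of $\norm{\L_\omega f}^2_\omega$, compactness of the trajectory, a convergent subsequence, and closedness of $\partial\Q$) matches Corollary~\ref{cor:convergence} essentially verbatim.
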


\noindent \textbf{Proof Intuition.}  Observe that $\Q$ is differentiable
at $f_N \in \R^N$ when the coordinates have distinct values and different
from those of the stationary vertices.  Without loss of generality, we may assume
that the vector $f_T \in \R^T$ for stationary vertices has distinct coordinates.
Otherwise, we could merge two stationary vertices if they have the same $f$ value.
Given a permutation~$\sigma$ of~$V$,
we denote by $\Q_\sigma$ the restriction of $\Q$ to vectors whose
coordinates are distinct and consistent with~$\sigma$.
It follows that for vectors $f$ whose coordinates are consistent with $\sigma$,
the two gradients coincide, i.e. $\nabla \Q(f) = \nabla \Q_\sigma(f)$.

The harder case is when the coordinates of $f$ are not distinct.
Given any $f \in \R^V$ whose coordinates might not be distinct,
let $\Sigma$ be the set of permutations that can result from resolving ties
in the coordinates of $f$.  Then, it is not too difficult to see that
for any $\sigma \in \Sigma$, $\nabla \Q_\sigma(f)$ is a subgradient of $\Q$ at $f$.
In Section~\ref{sec:subgradient}, we shall see that the most technical part is to
prove that there is a distribution $\mathcal{D}$ on $\Sigma$
such that $\L_\omega f = \expct_{\sigma \leftarrow \mathcal{D}} [\nabla \Q_\sigma(f)]$,
which implies that $\L_\omega f$ is a subgradient.

\vspace{5pt}

\noindent \textbf{Impacts on Semi-Supervised Learning.}
The quadratic optimization problem with stationary vertices has
application in semi-supervised learning.
For the special case of undirected hypergraphs,
Hein et al.~\xcite{hein2013total} solved this problem by a more complicated
primal-dual approach.  Our result implies that this optimization problem 
has a deeper connection with the diffusion process, which gives a subgradient
that is also a descent direction with respect to the objective function.
In our companion experimental paper~\cite{ZhangHTC17},
we explore this connection to implement a simpler solver
for the quadratic optimization problem.


\subsection{Related Work}

\noindent \emph{Other Works on Spectral Graph Theory.}
Besides the most related works~\xcite{louis2015hypergraph,chan2016spectral,yoshida2016nonlinear}
that we have already mentioned,
there were also previous attempts that considered
spectral analysis for hypergraphs~\xcite{chung1993laplacian, friedman1995second, rodriguez2009laplacian}
and directed graphs~\xcite{chung2005laplacians, li2012digraph, bauer2012normalized}.
Spectral properties of $k$-uniform directed hypergraphs have also been studied via tensors in~\xcite{xie2016spectral}.

\noindent \emph{Higher-Order Cheeger Inequalities.}
For normal graphs,
Cheeger-like inequalities 
to relate higher-order spectral properties of graph Laplacians
with multi-way edge expansion have been investigated~\xcite{louis2011algorithmic, louis2012many, kwok2013improved, lee2014multiway, louis2014approximation, kwok2015improved}.
On the other hand, for hypergraphs,
it is shown~\xcite{chan2016spectral} that higher-order spectral properties of the diffusion operator
cannot be related to the discrepancy ratio as in the statement of Theorem~\ref{th:intro_cheeger}.  However, Cheeger-like inequalities
have been derived in terms of the discrepancy ratio~\xcite{chan2016spectral},
but not related to the spectral properties of the diffusion operator.

\noindent \emph{Other Treatments of Directed Graphs or Hypergraphs.}
A common way to handle directed normal graphs or undirected hypergraphs is to convert the underlying graph into
an undirected normal graph.  For instance, 
by considering the time-reversibility of the stationary distribution of a random walk process,
Chung~\xcite{chung2005laplacians} essentially converts a directed graph into an undirected graph.
One treatment of hypergraphs~\xcite{zhou2006learning} in the learning community is to replace each hyperedge by a clique or a star, thereby reducing a hypergraph to a normal graph.  However, such a treatment differentiates
how evenly a hyperedge is cut, whereas in the original definition of hyperedge expansion,
the penalty for cutting a hyperedge is the same, regardless of how evenly it is cut.

\noindent \emph{Quadratic Form.}
Hein et al.~\xcite{hein2013total} approached the semi-supervised learning problem
on hypergraphs via a quadratic optimization, whose objective is equivalent to the quadratic form considered in~\xcite{louis2015hypergraph,chan2016spectral}.
The quadratic form we propose for directed hypergraphs is a generalization of that for  undirected normal graphs~\xcite{zhu2003semi, zhou2004learning}, directed normal graphs~\xcite{yoshida2016nonlinear} and undirected hypergraphs~\xcite{hein2013total}.

\vspace{5pt}

\noindent \textbf{Paper Organization.}  We have mentioned our main results and their proof intuition
in the introduction.
We define a diffusion process for directed hypergraphs in Section~\ref{sec:prelim}, and outline its existence proof in
Section~\ref{sec:diffusion_exists},
while the complete technical proof of its existence is given in Section~\ref{ssec:next-derivative}. 
  The two applications of the diffusion processes are given in Sections~\ref{sec:spectral} and~\ref{sec:subgradient}.

\section{Diffusion Process on Directed Hypergraphs}
\label{sec:prelim}

We consider an edge-weighted \emph{directed  hypergraph} $H=(V,E,w)$, where $V$ is the vertex set and $E\subseteq 2^V \times 2^V$ is the set of directed hyperedges.
%
%
Each vertex is associated with some value in $\R$,
which is represented by a vector $f \in \R^V$.
For each vertex $u$, we use $f_u$ to denote the coordinate corresponding to $u$.
For a subset $S \subset V$, we use $f_S \in \R^S$ to indicate the restriction of $f$
to the coordinates corresponding to $S$.
Moreover, there are two types of vertices:
\begin{compactitem}
\item[1.] \textbf{Stationary Vertices (denoted as $T$).} The $f$ values of these vertices
do not change. The weight of a stationary vertex is not defined.  
We shall see later that it might be convenient 
to imagine that it has infinite weight.

\item[2.] \textbf{Non-Stationary Vertices (denoted as $N$).}
The $f$ values of these vertices can change.  Each non-stationary vertex $u$ is equipped
with a positive weight $\omega_u > 0$;
depending on the application,
the vertex weights may or may not depend on the edge weights.
We use $\W \in \R^{N \times N}$ to denote the diagonal matrix
whose $(u,u)$-th entry gives the weight~$\omega_u$ of vertex~$u$.
\end{compactitem} 


We use $\vec{1}$ to denote the all ones vector and $\I$ to denote the identity matrix,
with the dimensions determined from the context. 
For any $x \in \mathbb{R}$, we denote $\plus{x} := \max\{0,x\}$.

\noindent \textbf{Directed Hyperedge Expansion.}
Recall that directed hyperedge exapansion is defined in Section~\ref{sec:intro}.  We define
some useful parameters.

\begin{definition}[Edge Discrepancy]
\label{defn:edge_disc}
	Given $f \in \mathbb{R}^V$, for each $e\in E$, let $S_e(f) := \{ u\in \Te : f_u=\max_{v\in \Te}f_v \}$ be the set of vertices in $\Te$ with the maximum $f$ and $I_e(f) := \{ u\in \He : f_u=\min_{v\in \He}f_v \}$ be those in $\He$ with minimum $f$.
	The discrepancy of edge $e$ is
	\begin{equation*}
	\Delta_e(f) := \max_{u \in \tail_e} f_u - \min_{v \in \head_e} f_v.
	\end{equation*}	
	\end{definition}

\begin{definition}[Discrepancy Ratio]
\label{defn:disc_ratio}
Suppose there is no stationary vertex, i.e., $V = N$.
The discrepancy ratio of $f$ is defined as
	\begin{equation*}
	\mathsf{D}(f) := \frac{\sum_{e\in E} w_e ([\Delta_e(f)]^+)^2}{\sum_{u\in V} \omega_u f_u^2} = \frac{\sum_{e\in E} w_e \max_{(u,v)\in \Te\times \He}([f_u - f_v]^+)^2}{\sum_{u\in V} \omega_u f_u^2}.
	\end{equation*}
\end{definition}

\noindent \textbf{Relationship between Edge Expansion and Discrepancy Ratio.}
For $S \subseteq V$, let $\chi_S \in \{0,1\}^V$ denote its characteristic vector.
Then, $\phi^+(S) = \mathsf{D}(\chi_S)$ and $\phi^-(S) = \mathsf{D}(-\chi_S)$.
The numerator of the discrepancy ratio can also be studied in the following optimization problem.

%

\begin{definition}[Quadratic Optimization with Stationary Vertices]
Suppose $T\neq \emptyset$.
Given some fixed $f_T \in \R^T$, find $f_N \in \R^N$ such that
the following quadratic form is minimized:
	\begin{equation*}
	\mathsf{Q}(f) := \frac{1}{2}\sum_{e\in E} w_e ([\Delta_e(f)]^+)^2 = \frac{1}{2} \sum_{e\in E} w_e \max_{(u,v)\in \Te\times \He}([f_u - f_v]^+)^2.
	\end{equation*}
\end{definition}

We remind the reader that the diffusion process we consider has the following two applications.

\noindent \textbf{Application 1:} 
Define some operation whose spectral properties
are related to the directed hyperedge expansion $\phi_H$
of a directed hypergraph~$H$.



\noindent \textbf{Application 2:}  Since $\mathsf{Q}$ is a (non-differentiable) convex function with domain $\R^N$,
in the subgradient method, for any $f_N \in \R^N$, one wish to find
a subgradient of $\mathsf{Q}$ that is also a direction of descent.

\noindent \textbf{Diffusion Process.}  We consider 
a diffusion process (possibly with stationary vertices) that has implications for both Applications 1 and 2.
We consider the \textbf{projection operator} $\Pi_N : \R^V \ra \R^N$,
which can be represented by a matrix $\{0,1\}^{N \times V}$,
whose $(u,v)$-th entry is non-zero \emph{iff} $u=v \in N$.
%
Associated with the diffusion process are two spaces described as follows.


\begin{enumerate}
\item {\bf Density Space $\R^V$.}
The above vector $f \in \R^V$ resides in this space.
Intuitively, this vector gives the ``density'' of a vertex,
and we shall describe the precise rules such that 
``measure'' tends to flow from denser vertices to less dense vertices.
Observe the density of a non-stationary vertex can change, while
the density of a stationary vertex remains constant.
For vectors $f,g$ in $\mathbb{R}^V$ or $\R^N$, the inner product is defined as $\langle f, g \rangle_\omega := \sum_{u \in N} \omega_u f_u g_u$, i.e., the coordinates outside~$N$ are ignored.
The associated norm is $\| f \|_\omega := \sqrt{\langle f, f \rangle_\omega}$.
	We use $f \perp_\omega g$ to denote $\langle f, g \rangle_\omega = 0$.

	\item {\bf Measure Space $\R^N$.}
	Each non-stationary vertex has some net \emph{measure} (that could be negative). In the diffusion process, it is this measure that is being transferred between vertices.
	The measure vector $\vp_N \in \R^N$ is related to the density vector $f \in \R^V$
	by  $\vp_N := \W f_N$.  Hence, the density of a non-stationary vertex
	will increase if there is a net intake of measure.
	On the other hand, we can imagine that each stationary vertex can release or absorb unlimited amount of measure without changing its density.  In this work, we do not need to consider inner product in the measure space.
	
\end{enumerate}

We shall describe the diffusion rules such that the
derivative $\frac{d f}{d t}$ of the density vector with respect to
time $t$ is well-defined.
Then, we define a diffusion operator $\L_\omega : \R^V \ra \R^N$
by $\L_\omega f = - \Pi_N \frac{d f}{d t}$, and the operator
is related to our applications as follows.

\begin{enumerate}
\item[1.] \textbf{Spectral Analysis of Directed Hyperedge Expansion.}
For $V = N$, we shall see that the diffusion operator $\L_\omega$
can be related to the discrepancy ratio by
$\frac{\langle f , \L_\omega f \rangle_\omega}{\langle f , f \rangle_\omega}
= \D(f)$.  Moreover, the second eigenvalue of $\L_\omega$ is well-defined,
and relates to directed hyperedge expansion via a Cheeger-like inequality.

\item[2.] \textbf{Subgradient of Quadratic Form.} In the quadratic form
$\Q$, there is no obvious choice of weight for each (non-stationary) vertex.
If we just let each non-stationary vertex have unit weight,
then we shall show that $\L_\omega f$ gives a subgradient of $\Q$.  Moreover,
in this case $\L_\omega f = - \Pi_N \frac{df}{dt}$ gives a descent direction,
as we shall show that $\frac{d \Q}{d t} \leq 0$ in the diffusion process.
\end{enumerate}


\begin{definition}[Rules of Diffusion Process]
\label{defn:rules}
Suppose at some instant the system is in a state
given by the density vector $f \in \R^V$.  Then, at this instant,
measure is transferred between vertices according to the following rules.
For $u \in \tail_e \cup \head_e$,	let $\vp^{(1)}_u(e)$ be the net rate of measure flowing into vertex~$u$ due to edge~$e$.


\begin{compactitem}

\item [\textsf{R(0)}] For each non-stationary vertex $u \in N$,
		$\omega_u \frac{d f_u}{d t} = \vp^{(1)}_u := \sum_{e \in E: u \in \tail_e \cup \head_e} \vp^{(1)}_u(e)$, i.e., the density will change according to the net rate of incoming measure flow divided by its weight.
		
		For each stationary vertex $u \in T$, $\frac{d f_u}{d t} = 0$, i.e.,
		the density does not change.  Hence, a stationary vertex can absorb or release measure without changing its density.

\item [\textsf{R(1)}] 
		We have (i) $\vp^{(1)}_u(e) < 0$ only if $u \in S_e(f)$ and $\Delta_e(f) > 0$, and (ii) $\vp^{(1)}_u(e) > 0$ 
		only if $u \in I_e(f)$ and $\Delta_e(f) > 0$.

		\item [\textsf{R(2)}] Each edge $e \in E$ such that
		$\Delta_e(f) > 0$ causes a rate $w_e \cdot \Delta_e(f)$
		of measure flowing from vertices in $S_e(f)$ to vertices in $I_e(f)$.
%
		Specifically, $- \sum_{u \in S_e(f)} \vp^{(1)}_u(e)  = w_e \cdot \Delta_e(f) = \sum_{u \in I_e(f)} \vp^{(1)}_u(e)$.

%
%
		
		%
	\end{compactitem}
\end{definition}

\noindent \textbf{Explanation.} We remark that in our system, every vertex
has a density, but only a non-stationary vertex contains measure, which
is related to the density with respect to its weight.  Even though
a stationary vertex $u$ absorbs or releases measure without changing its density,
we still consider its net intake rate $\vp^{(1)}_u(e)$ of measure 
due to some edge $e$ to facilitate the description of rules
\textsf{(R1)} and~\textsf{(R2)}.

\noindent \textbf{Relationship with Undirected Graph.}
As in previous approaches~\xcite{chan2016spectral}, we shall later show that
our diffusion rules uniquely define $\frac{df}{dt}$, even though
the $\vp^{(1)}_u(e)$'s might not be unique.  The diffusion process can be interpreted
via an undirected graph with dynamic weights as follows.

Suppose for some edge $e \in E$ such that $\Delta_e(f) > 0$, the rates $\vp^{(1)}_u(e)$ for $u \in \tail_e \cup \head_e$
are determined.  Then, by rule~\textsf{(R2)} one can
consider a complete bipartite on $(S_e, I_e)$ to assign some $a^e_{uv}$ for $(u,v) \in S_e \times I_e$
such that $\sum_{(u,v) \in S_e \times I_e} a^e_{uv} = w_e$.
Moreover, for $v \in I_e$, 
$\vp^{(1)}_v(e) = \sum_{u \in S_e} a^e_{uv} \Delta_e(f)$;
for $u \in S_e$, 
$\vp^{(1)}_u(e) = - \sum_{v \in I_e} a^e_{uv} \Delta_e(f)$.

The interpretation is that there is a measure flowing from $u \in S_e$ to $v \in I_e$
due to edge $e$ with rate $a^e_{uv} \cdot \Delta_e(f)$.
For notational convenience, we write $a_{uv}^e = a_{vu}^e$, 
and assume that for all $u, v \in V$ and $e \in E$, $a^e_{uv}$ is well-defined and 
can be set to zero if necessary.

Then, we can construct a symmetric matrix $A_f \in \R^{V \times V}$ such that
for $u \neq v$, $A_f(u,v) := \sum_{e \in E} a^e_{uv}$.
Moreover, for $u \in N$, the diagonal entry $A_f(u,u)$ is chosen such that the entries of row~$u$ sum up to $\omega_u$; the rows corresponding to vertices in $T$ are not important, because we shall apply the projection
operator $\Pi_N$.
Even though the matrix $A_f$ might not be unique,
we shall see later that as long as Definition~\ref{defn:rules}
is satisfied, Theorem~\ref{th:exists} states
that the derivative $\frac{d f}{d t}$ is uniquely determined.
Hence, we can define
the diffusion operator $\L_\omega : \R^V \ra \R^N$ as follows:
$$ \L_\omega f := - \Pi_N \frac{df}{dt},$$ 
which equals $(\Pi_N - \W^{-1} \Pi_N A_f) f$.

%
%
%

\ignore{

\noindent
{\bf Additional Rules for Stationary Vertices.}
Stationary vertices are assumed to have \emph{infinite} supply and demand of measure: there can be measure flows from or to stationary vertices, but their $f$ values remain constant.
(To help understanding, sometimes it would be convenient to imagine that $\vp_u = f_u \omega_u = \infty$ for $u\in T$.)
When $T\neq \emptyset$, the total measure is not fixed: the measure flow sent from $u\in T$ to $v\in N$ increases the measure $\vp_v$ without decreasing $f_u$; the measure flow from $v$ to $u$ decreases $\vp_v$ without increasing $f_u$.
Note that the total measure is fixed when $T = \emptyset$, i.e., $\langle \mathbf{1},\W f(t) \rangle = \langle \mathbf{1},\W f(0) \rangle$ for all $t$.

For ease of notation, let $a_{uv}^e = a_{vu}^e$.
Let $a_{uv}^e = 0$ for all $(u,v)$ such that $\{(u,v), (v,u)\} \cap \Te\times\He = \emptyset$ or $e\in E\backslash E(f)$.
Given $a_{uv}^e$ for all $e\in E(f)$ and $(u,v)\in\Te\times\He$, a symmetric matrix $A_f\in\mathbb{R}^{V\times V}$ is uniquely defined, where for $u\neq v$, the entry $a_{uv} = \sum_{e\in E}a_{uv}^e$; for $u\in N$, the diagonal entry is $a_{uu} = \omega_u - \sum_{v\in V: v\neq u}a_{uv}$.
Let $a_{uu} = 0$ for $u\in T$.

\noindent
{\bf Interpreting $A_f$ when $T=\emptyset$.}
Each entry $a_{uv}$ can be interpreted as the ``edge weight'' of $(u,v)$.
Since entries of the $u$-th column (row) of $A_f$ sum to $\omega_u$, $A_f\W^{-1}$ defines a transition matrix of random walk: given a measure distribution $\vp$ with $\langle \mathbf{1},\vp \rangle = 1$, we have
\begin{equation*}
\langle \mathbf{1},A_f \W^{-1}\vp \rangle = \sum_{u\in V}(\sum_{v\in V} a_{uv}f_v) = \sum_{v\in V} \omega_v f_v = \langle \mathbf{1},\vp \rangle = 1,
\end{equation*}
which defines the measure distribution after one step of random walk.
Moreover, the measure flow from $u$ to $v$ in each step is $a_{uv}(f_u - f_v)$, which is consistent with the normal graph case.

\noindent
{\bf When $T\neq \emptyset$.}
As discussed above, for vertices in $N$, $f$ is supposed to move at the direction $\W^{-1}\Pi_N A_f f - \Pi_N f$.
Since $f_u$ does not change for $u\in T$ ($\frac{df_u}{dt} = 0$), by defining $\L_\omega f = \amalg^N (\Pi_N f - \W^{-1}\Pi_N A_f f)$, we have $\frac{df}{dt} = -\L_\omega f$, where $\amalg^N\in\{0,1\}^{V\times N}$ is the inclusion map operator from $\mathbb{R}^N$ to $\mathbb{R}^V$.

\begin{theorem}[Unique Diffusion Process]\label{th:unique-diffusion}
	Given an initial density distribution $f(0)\in\mathbb{R}^V$, the diffusion process $f(t)$ satisfying the above rules is unique.
\end{theorem}

}

\section{Existence of Diffusion Process}
\label{sec:diffusion_exists}

The main result in this section is
to establish the existence of a diffusion process
satisfying the rules in Definition~\ref{defn:rules}.

\begin{theorem}[Existence of Diffusion Process]
\label{th:exists}
There exists a diffusion process satisfying
the rules given in Definition~\ref{defn:rules}.
Moreover, at any time $t$ when the system
is in the state given by the density vector $f \in \R^V$,
the rules uniquely determine $\frac{df}{dt}$.
\end{theorem}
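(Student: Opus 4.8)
**The plan is to establish two things separately: (i) uniqueness of $\frac{df}{dt}$ at any given state $f$, and (ii) existence of a trajectory satisfying the rules, by patching together explicit solutions on short time intervals.**

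For uniqueness at a fixed state $f \in \R^V$, the obstacle is that ties among coordinates of $f$ leave the matrix $A_f$ (equivalently the rates $\vp^{(1)}_u(e)$) underdetermined, yet the rules should still pin down $\frac{df}{dt}$. The strategy I would use mirrors the "higher-order derivative" idea sketched in the introduction. Among all candidate assignments of $\{\vp^{(1)}_u(e)\}$ consistent with Definition~\ref{defn:rules}, I would argue that the resulting vector $\frac{df}{dt} \in \R^N$ is unique by a variational characterization: the diffusion rules force $\frac{df_N}{dt}$ to be the minimizer (in the $\omega$-norm) of a convex quadratic subject to the flow-conservation constraints of \textsf{R(1)}--\textsf{R(2)}, or equivalently $-\Pi_N\frac{df}{dt}$ is the minimum-norm element of the subdifferential $\partial\Q(f_N)$ (with the appropriate weights). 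Concretely, I would set up the constraint that for each active edge $e$, a total rate $w_e\Delta_e(f)$ flows from $S_e(f)$ to $I_e(f)$, and show that the vector of net intakes $\vp^{(1)}_N$ is the projection of $0$ onto an affine polytope — hence unique — while the split into individual $\vp^{(1)}_u(e)$ and the off-diagonal entries $a^e_{uv}$ need not be. This step is the technical heart and I would lean on convexity: the KKT conditions of this projection are exactly what Definition~\ref{defn:rules} encodes after dividing by $\W$.

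For existence of a trajectory, I would proceed locally in time. Fix the current state $f(t_0)$. First resolve ties: using the lexicographic order on $(f, f^{(1)}, f^{(2)}, \ldots)$ — where each $f^{(i)}$ is shown to be uniquely determined by applying the uniqueness argument iteratively — define an equivalence relation $\sigma$ on $V$ so that vertices in the same class agree in $f$-value for an infinitesimal period. Each class then behaves as a single "super vertex," and every hyperedge $e$ with $\Delta_e(f) > 0$ acts like a fixed directed normal edge between the super vertex containing $\arg\max_{\Te} f$ and the one containing $\arg\min_{\He} f$. Consequently, on a short interval the dynamics are governed by a \emph{fixed} linear system $\frac{df_N}{dt} = -(\Pi_N - \W^{-1}\Pi_N A)f$ with constant $A = A_{f(t_0)}$, whose solution is an explicit matrix exponential; one checks this solution satisfies all rules as long as the ordering and the equivalence classes persist. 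I would then show that the set of times at which the structure can change (two classes merging, or a class splitting, or an edge becoming active/inactive) is locally finite, so the explicit pieces concatenate into a global solution; continuity of $f$ at the junctions is automatic since $\frac{df}{dt}$ stays bounded.

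The main obstacle, as flagged in the paper's own "Proof Intuition," is the formal treatment of ties: proving that \emph{all} higher-order derivatives $f^{(i)}$ are uniquely determined by the rules, so that the lexicographic tie-break is well-defined and stable. This requires showing that once the ordering pattern and the partition into super vertices are fixed by $(f, f^{(1)}, \dots, f^{(k)})$, the next derivative $f^{(k+1)}$ is forced — which I expect to handle by differentiating the linear system above and invoking the same projection/uniqueness lemma at each order, plus an argument that this recursion terminates or stabilizes (only finitely many "refinements" of the partition are possible since $V$ is finite). The secondary obstacle is ruling out Zeno-type behavior — infinitely many structural changes in finite time — for which I would use the explicit piecewise-analytic form of the trajectory to bound the number of crossings.
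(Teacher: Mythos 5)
Your overall architecture --- uniquely determine all higher-order derivatives, use the induced lexicographic equivalence relation to form super vertices, solve an explicit linear ODE on a short interval, and concatenate --- is the same as the paper's. The divergence, and the gap, is in how you establish uniqueness of $\frac{df}{dt}$, which is the technical heart. You claim that Definition~\ref{defn:rules} encodes the KKT conditions of projecting $0$ onto the polytope of feasible net rates, so that $\frac{df_N}{dt}$ is the minimum-$\omega$-norm point and hence unique. But the rules \textsf{R(0)}--\textsf{R(2)} at a single instant are satisfied by \emph{every} point of that polytope: for instance, if $|I_e(f)| \geq 2$, the entire incoming rate $w_e \Delta_e(f)$ may be assigned to one vertex of $I_e(f)$ without violating any rule at that instant, yielding a different $\frac{df}{dt}$. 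The selection principle that eliminates these spurious solutions is that the rules must persist over an interval: a vertex that receives measure from $e$ must remain a minimizer of $f$ over $\head_e$ in infinitesimal time, which is the self-referential condition \textsf{R(1)} of Proposition~\ref{prop:higher_rules} ($\vp^{(i+1)}_u(e) \neq 0$ only if $u$ attains the extremal \emph{next} derivative). The paper resolves this fixed point with the lexicographically densest subset procedure $\mathfrak{D}^{(i+1)}$. Your min-norm characterization may well be equivalent to that procedure (it is in the spirit of Fujishige's minimum-norm-base theorem) and, if proved, would give a genuinely different route to Lemma~\ref{lemma:deriv}; but as written you assert the equivalence rather than prove it, and you attribute the uniqueness to the wrong source (the instantaneous rules plus convexity, rather than persistence of the rules in time).

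Two further omissions relative to what the theorem needs. First, stationary vertices: when an equivalence class contains a vertex of $T$, the natural density degenerates to $0$ on any set meeting $T$, so the paper needs a separate \emph{least} densest subset step to detect vertices whose next derivative must be negative; your sketch never says how the variational characterization handles $T \neq \emptyset$, where the objective does not charge the stationary coordinates. Second, when determining $f^{(i+1)}$ for $i \geq 1$, ties must be broken lexicographically across \emph{all} previous levels $\delta^{(0)}, \ldots, \delta^{(i)}$, not by maximizing the level-$i$ density alone; the paper's Figure~4.1 gives a two-vertex example where the latter produces an ordering inconsistent with level $0$. Your recursion (``differentiate the linear system and invoke the same projection lemma at each order'') does not build in this cross-level consistency. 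The remaining pieces of your plan --- finite stabilization of the partition, piecewise concatenation, ruling out Zeno behavior --- are fine and are treated no more carefully by the paper itself.
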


\noindent \textbf{Proof Strategy.} In Definition~\ref{defn:rules},
rule~\textsf{R(1)} means that for an edge $e \in E$,
there should be some measure flowing from vertices with highest density
in $\tail_e$ to those with lowest density in $\head_e$, provided
that the former vertices have higher density than the latter ones according to rule~\textsf{R(2)}.  For normal (directed) graphs, this is usually straightforward, and not much emphasis is placed on whether
the diffusion process is well-defined~\xcite{yoshida2016nonlinear}.  However, as shown in~\xcite{chan2016spectral}, the situation is more subtle when hyperedges are involved.  

The reason is that for a hyperedge $e$, there can
be multiple number of vertices in $S_e(f)$ that give measure due to edge $e$.
The tricky part is that all the vertices in $S_e(f)$ that contribute to releasing flow due to $e$ must have the same density in infinitesimal time.  Even though
a subset of vertices have the maximum density in $S_e(f)$ at some instant,
if there is one vertex whose density is just about to grow higher than other vertices in infinitesimal time, then this vertex should actually be the only vertex that contributes to giving measure due to $e$.  The situation gets even more complicated because a vertex can participate in various hyperedges, playing both the giving and receiving roles.

This complication arises because we do not know in infinitesimal time, which vertices in $S_e(f)$ will actually have the highest density in infinitesimal time (and similarly which vertices in $I_e(f)$ will have the lowest density).
In~\xcite{chan2016spectral}, a subroutine for a version of the densest subset problem
is used to determine the first derivative of $f$, assuming that the diffusion process exists.  While this is sufficient to give a definition of $\L_\omega$,
the existence of the diffusion process is not formally proved.  

We observe that
all higher derivatives can also be uniquely determined.
Hence, a lexicographical order on the derivatives can be used to
partition vertices with an equivalence relation
such that vertices in the same equivalence class
will have the same density in infinitesimal time.
Intuitively, vertices in the same equivalence class behave like a super vertex,
and in infinitesimal time, each (directed) hyperedge behaves just like a normal edge.  Therefore, the trajectory of $f$ in infinitesimal time can be solved explicitly (in Lemma~\ref{lemma:distinct_densities}) and this establishes the existence of the diffusion process.
Strictly speaking, this only proves that starting from any density vector $f \in \R^V$, the diffusion process is well-defined in infinitesimal time, i.e.,
there exists some $\epsilon > 0$ (depending on $f$) for which the diffusion process can run for time $\epsilon$.  We remark that this is enough for the purpose of tackling our problems. However, once it is established that the diffusion process is well-defined in infinitesimal time, it can be shown that the whole process is also well-defined.

\begin{lemma}[Diffusion Process with Distinct Densities]
\label{lemma:distinct_densities}
Starting from some density vector $f \in \R^V$ with distinct
coordinates,
the trajectory of the diffusion process in infinitesimal time satisfying Definition~\ref{defn:rules} can be uniquely determined.
\end{lemma}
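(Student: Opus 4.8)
The plan is to exploit the fact that when $f\in\R^V$ has pairwise distinct coordinates, every hyperedge $e$ behaves exactly like a single directed normal edge from its unique maximizing tail vertex $s_e$ (the sole element of $S_e(f)$) to its unique minimizing head vertex $i_e$ (the sole element of $I_e(f)$). First I would observe that the sets $S_e(f)$ and $I_e(f)$ are singletons and that the sign conditions $\Delta_e(f)>0$ are determined by strict inequalities; hence there is an $\epsilon_0>0$ such that the combinatorial structure --- which edges are active, and who the maximizing/minimizing vertex of each active edge is --- is locally constant as long as $f$ stays within $\ell_\infty$-distance $\epsilon_0$ of its starting value and retains distinct coordinates. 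Within such a neighbourhood, rules \textsf{R(1)} and \textsf{R(2)} force $\vp^{(1)}_u(e)$ to be \emph{uniquely} determined: for an active edge $e$ the only freedom in the $a^e_{uv}$ coefficients collapses because there is exactly one giving vertex and one receiving vertex, so $a^e_{s_e i_e}=w_e$ and all other $a^e_{uv}=0$. Consequently the matrix $A_f$ is uniquely determined and, crucially, constant on the neighbourhood, so the diffusion equation $\frac{df_N}{dt} = -(\Pi_N - \W^{-1}\Pi_N A_f)f$ becomes a genuine linear autonomous ODE $\frac{df_N}{dt} = M f$ with a fixed matrix $M$ (and $\frac{df_u}{dt}=0$ for $u\in T$).

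The next step is to invoke the Picard--Lindel\"of theorem for this linear ODE with the given initial condition, which yields a unique $C^\infty$ trajectory $f(t)$ defined on some interval $[0,\delta)$. One then checks the trajectory is self-consistent: since $t\mapsto f(t)$ is continuous and $f(0)$ has distinct coordinates lying strictly in the interior of the region where the combinatorial structure is the one we fixed, there is $\delta'\in(0,\delta]$ such that for $t\in[0,\delta')$ the coordinates of $f(t)$ remain distinct and the active-edge/maximizer data is unchanged; on this interval the linear ODE we solved is exactly the one dictated by Definition~\ref{defn:rules}, so $f|_{[0,\delta')}$ is a bona fide diffusion trajectory. Uniqueness follows because any trajectory satisfying Definition~\ref{defn:rules} and starting at $f(0)$ must, by the continuity of $f$ and the local constancy of the combinatorial data, satisfy the \emph{same} linear ODE on a (possibly smaller) interval, and Picard--Lindel\"of gives uniqueness of that ODE's solution.

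The main obstacle --- and it is a mild one at this level of generality, which is why the statement restricts to distinct coordinates --- is verifying that no hidden non-uniqueness slips in through the $a^e_{uv}$ degrees of freedom or through a vertex simultaneously playing giving and receiving roles in several hyperedges. The key point to nail down carefully is that, with all coordinates distinct, the system of constraints in \textsf{R(1)}--\textsf{R(2)} has, per edge, a unique solution (no genuine flow problem to solve, unlike the tie case handled in~\xcite{chan2016spectral}), so the superposition over edges giving $A_f$ is unambiguous; after that, everything reduces to the classical theory of linear ODEs. I would also remark that the same argument shows the trajectory extends until the first time two coordinates collide or an inactive edge becomes active, which is precisely the regime where the heavier machinery of Section~\ref{ssec:next-derivative} (higher derivatives, equivalence classes, super vertices) takes over; but that lies beyond the scope of this lemma.
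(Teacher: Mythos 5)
Your proposal is correct and follows essentially the same route as the paper: with distinct coordinates each hyperedge degenerates to a single tail--head pair, so the matrix $A_f$ is forced and locally constant, and the diffusion reduces to a linear constant-coefficient ODE with a unique local solution. The only cosmetic difference is that the paper solves that ODE by symmetrizing via the change of variables $f_N = \mathsf{W}^{-1/2} x_N$ and decomposing into eigenspaces, whereas you invoke Picard--Lindel\"of; your explicit self-consistency and uniqueness bootstrapping is, if anything, slightly more careful than the paper's brief remark that the relative order of the coordinates does not change in infinitesimal time.
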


\begin{proof}
Since the diffusion process changes $f$ continuously,
in infinitesimal time,
the density vector $f \in \R^V$ still has distinct coordinates,
and the relative order of the $f$ values does not change. 
Throughout this proof, we consider the diffusion process only in this small time frame.

Hence, it follows that for each $e \in E$, $S_e(f)$ and $I_e(f)$
are both singletons and do not change; moreover,
the sign of $\Delta_e(f)$ also does not change.
Therefore, there exists a symmetric matrix $A \in \R^{V \times V}$
such that the diffusion process is described by the following differential equation:
$$ \frac{d f_N}{d t} = - (\Pi_N - \W^{-1} \Pi_N A) f.$$

Using the sub-matrices $A_N \in \R^{N \times N}$ and
$B \in \R^{N \times T}$, we can rewrite the differential equation as:
$$ \frac{d f_N}{dt} = - f_N + \W^{-1} A_N f_N + \W^{-1} B f_T.$$

Finally, we consider the transformation $f_N = \W^{-\frac{1}{2}} x_N$ to obtain:
$$ \frac{d x_N}{d t} = - \Lc x_N + \W^{-\frac{1}{2}} B f_T,$$
where $\W^{-\frac{1}{2}} B f_T \in \R^N$ is a constant vector,
and $\Lc := \I_N - \W^{-\frac{1}{2}} A_N \W^{-\frac{1}{2}}$
is a symmetric matrix, and therefore its eigenvectors form an orthonormal basis
for $\R^N$.  Hence, by considering each eigenspace independently, we can show that there is a unique solution for $x_N$ in infinitesimal time,
which implies that the trajectory for $f_N$ in infinitesimal time is also unique.
\end{proof}

\subsection{Ordered Equivalence Relation}

In order to have the right terminology to describe
which vertices will have their $f$ values stay together in
infinitesimal time, we use ordered equivalence relation.

\begin{definition}[Ordered Equivalence Relation]
\label{defn:ordered_equiv}
An ordered equivalence relation $\sigma$ on $V$
is an equivalence relation on $V$ such that
its equivalence classes are equipped with a total order.
We use $\mathcal{O}(V)$ to denote the collection of ordered
equivalence relations on $V$.
We use $\mathcal{C}[\sigma]$ to denote the collection 
of the equivalence classes of $\sigma$,
and $[u]_\sigma \in \mathcal{C}[\sigma]$ to denote the equivalence class containing $u \in V$.
The comparison operators $\prec, \preceq, =, \succeq, \succ$ are used in their usual sense
to indicate the order between two equivalence classes.
%
%
%
%
\end{definition}

\begin{definition}[Refinement]
\label{defn:refinement}
An ordered equivalence relation $\widehat{\sigma}$
is a refinement of $\sigma$ (denoted as~$\sigma \sqsubseteq \widehat{\sigma}$)
if the following holds.
\begin{compactitem}
\item[1.] For every $\widehat{S} \in \mathcal{C}[\widehat{\sigma}]$,
there exists $S \in \mathcal{C}[\sigma]$ such that $\widehat{S} \subseteq S$.

\item[2.] Suppose $\widehat{S}_1, \widehat{S}_2 \in \mathcal{C}[\widehat{\sigma}]$
and $S_1, S_2 \in \mathcal{C}[\sigma]$ are the corresponding equivalence classes
containing
them such that $S_1 \prec S_2$.  Then, we have $\widehat{S}_1 \prec \widehat{S}_2$.
\end{compactitem}
We say that $\widehat{\sigma}$ is at least as refined as $\sigma$.
\end{definition}

\begin{definition}[Compatibility, Consistency and Least Refinement]
\label{defn:compatibility}
An ordered equivalence relation $\sigma \in \mathcal{O}(V)$ is \textbf{compatible}
with a vector $f \in \R^V$ if every vertex in the same equivalence
class of $\sigma$ has the same $f$ value.
If in addition $[u]_\sigma \prec [v]_\sigma$ implies that $f_u \leq f_v$,
then we say that $\sigma$ is \textbf{consistent} with $f$.

We use $\sigma(f)$ to denote the least refinement of $\sigma$ that is compatible with $f$.
Observe that an equivalence class in $\sigma$ may be partitioned
into several equivalence classes in $\sigma(f)$ because the vertices have different $f$ values; the
relative order of these parts are given by their $f$ values.
\end{definition}

\noindent \textbf{Equivalence Class in Infinitesimal Time.}
Suppose we use $\varsigma$ to denote the trivial equivalence relation
in which $V$ is the only equivalence class.  Then,
the current density vector $f \in \R^V$ induces
an equivalence relation $\sigma_0 := \varsigma(f)$.
However, vertices in the same equivalence class of $\sigma_0$ could
have different $\frac{d f}{d t}$,
and hence will be in different equivalence classes in infinitesimal time.
This observation was made in~\xcite{chan2016spectral} to show that there is a unique
$f^{(1)}:=\frac{df}{dt}$ that can satisfy Definition~\ref{defn:rules}, if the
diffusion process exists.  
Recall that our goal is to prove that such a diffusion process indeed exists.
In order to use Lemma~\ref{lemma:distinct_densities} for this purpose,
we need to show that there is some ordered equivalence relation
that completely captures which vertices' densities will stay together in 
infinitesimal time.

Observe that the equivalence relation $\sigma_1 := \sigma_0(f^{(1)})$
might not be sufficient,
because vertices in the same equivalence class in $\mathcal{C}[\sigma_1]$
might be about to split because they have different second derivatives.
Our approach is to show that Definition~\ref{defn:rules}
actually uniquely determines all higher-order derivatives $f^{(i)} := \frac{d^i f}{d x^i}$.

This gives a chain of ordered equivalence relations defined by
$\sigma_i \sqsubseteq \sigma_{i+1} := \sigma_i(f^{(i+1)})$,
and the desired equivalence relation $\sigma^*$ is the least upper bound of this chain.
Alternatively, it can be viewed as an order on $V$ given by the lexicographical
order induced by the vectors $\{f^{(i)} : i \geq 0\}$, where 
we define $f^{(0)} := f$.

\subsection{Determining Higher Order Derivatives}

Starting from some density vector $f \in \R^V$,
we show that the diffusion process in Definition~\ref{defn:rules}
uniquely determines $f^{(i)} := \frac{d^i f}{d t^i} \in \R^V$
for all $i \geq 1$.
Recall that the vertices $T = V \setminus N$ are stationary and have zero higher derivatives.

\noindent \textbf{Notation.}  Given an ordered equivalence relation
$\sigma$ and $S \subseteq V$, we use $\max_{u \in S} [u]_\sigma$
to denote the ``largest'' equivalence class of $\sigma$ that contains
a vertex in $S$.  If $\sigma$ is compatible
with a vector $g \in \R^V$,  then the vertices in this equivalence
class have the same $g$ value, and 
we use $g(\max_{u \in S} [u]_\sigma)$ to denote this common $g$ value.
We can also define
$g(\min_{u \in S} [u]_\sigma)$ similarly.

\noindent \textbf{Active and Ambiguous Edges.}
For each edge $e \in E$, define 
$\Delta^{(0)}_e := \Delta_e(f) = f(\max_{u \in \tail_e} [u]_{\sigma_0}) - f(\min_{v \in \head_e} [v]_{\sigma_0})$.
The set of \emph{active} edges at level $0$ is $E^{(0)}_+ := \{e \in E: \Delta^{(0)}_e > 0\}$;
observe that these edges will remain active when we determine derivatives at higher levels.
The set of \emph{inactive} edges at level $0$ is 
$E^{(0)}_- := \{e \in E: \Delta^{(0)}_e < 0\}$; these edges will remain inactive
at higher levels.
The set of \emph{ambiguous} edges at level $0$ is
$E^{(0)}_0 := \{e \in E: \Delta^{(0)}_e = 0\}$;
at higher levels, each such edge could become active or inactive in some level $i$,
and once this happens, that edge will stay active or inactive for all higher levels.

\noindent \textbf{Higher Order Diffusion Rules.}  Definition~\ref{defn:rules}
gives the rules regarding the first derivative of $f$ with respect to time.
By applying the Envelope Theorem, we can derive higher order diffusion rules
as follows.

\begin{proposition}[Higher Order Diffusion Rules]
\label{prop:higher_rules}
The diffusion rules in Definition~\ref{defn:rules}
imply that for each $i \geq 0$,
the following rules hold.
For $u \in \tail_e \cup \head_e$,
we use
$\vp^{(i+1)}_u(e)$ to denote the $(i+1)$-st derivative
of the measure received by vertex~$u$ due to edge~$e$.

\begin{compactitem}

\item [\textsf{R(0)}] For each non-stationary vertex $u \in N$,
		$\omega_u f^{(i+1)}_u = \vp^{(i+1)}_u := \sum_{e \in E: u \in \tail_e \cup \head_e} \vp^{(i+1)}_u(e)$.
		
		For each stationary vertex $u \in T$, $f^{(i+1)}_u = 0$.

\item [\textsf{R(1)}] We have
$\vp^{(i+1)}_u(e) \neq 0$ only if
$e \in E^{(i)}_+$ and at least one of the following conditions hold:

(i) $u \in \arg \max_{v \in S^{(i)}_e} f^{(i+1)}_v$, or
(ii) $u \in \arg \min_{v \in I^{(i)}_e} f^{(i+1)}_v$.

		\item [\textsf{R(2)}] Each edge $e \in E^{(i)}_+$
		causes a contribution of $w_e \Delta^{(i)}_e$
		to the $(i+1)$-st derivative of the measure received by vertices
		in $I^{(i)}_e$, and a contribution of $-w_e \Delta^{(i)}_e$
		to that received by vertices in $S^{(i)}_e$.
		Specifically, $- \sum_{u \in S^{(i)}_e} \vp^{(i+1)}_u(e)  = w_e \cdot \Delta^{(i)}_e = \sum_{u \in I^{(i)}_e} \vp^{(i+1)}_u(e)$.
		
	\end{compactitem}
\end{proposition}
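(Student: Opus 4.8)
Suppose $f(\cdot)$ is a trajectory obeying Definition~\ref{defn:rules}; the aim is to record the constraints that its time-derivatives must satisfy, so it suffices to work at a fixed instant, say $t=0$, and examine the right-derivatives there. For an edge $e$ and a vertex $u\in\tail_e\cup\head_e$, write $\psi^e_u(t)$ for the net measure that has flowed into $u$ through $e$ during $[0,t]$, so that $\psi^e_u(0)=0$, the instantaneous rate $\frac{d}{dt}\psi^e_u(t)$ is the quantity called $\vp^{(1)}_u(e)$ in the state $f(t)$, and $\omega_u f_u(t)=\omega_u f_u(0)+\sum_{e}\psi^e_u(t)$ for each $u\in N$ by rule~\textsf{R(0)}. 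We set $\vp^{(i+1)}_u(e):=\bigl(\psi^e_u\bigr)^{(i+1)}(0^+)$. Rule~\textsf{R(0)} at every order is then immediate: differentiate the displayed identity $i+1$ times for $u\in N$, and note $f_u\equiv f_u(0)$ for $u\in T$. All the work lies in \textsf{R(1)} and \textsf{R(2)}.

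The engine of the argument is the Envelope (Danskin) Theorem used iteratively: for finitely many functions $g_u(\cdot)$ that are smooth from the right at $0$, the $i$-th right-derivative at $0$ of $t\mapsto\max_u g_u(t)$ equals the common value $g^{(i)}_u(0^+)$ over those $u$ that maximize $(g_u,g^{(1)}_u,\dots,g^{(i)}_u)$ lexicographically, and for small $t>0$ the set of maximizers of $\max_u g_u(t)$ has settled to those that are lexicographically largest over the whole derivative sequence. Applied with $g=f$ to $\Delta_e(f(t))=\max_{u\in\tail_e}f_u(t)-\min_{v\in\head_e}f_v(t)$, this shows that the $i$-th right-derivative of $\Delta_e(f(\cdot))$ at $0$ is precisely $\Delta^{(i)}_e=f^{(i)}(\max_{u\in\tail_e}[u]_{\sigma_i})-f^{(i)}(\min_{v\in\head_e}[v]_{\sigma_i})$, the chain $\sigma_0\sqsubseteq\sigma_1\sqsubseteq\cdots$ being exactly the lexicographic order in question; moreover, for $e\in E^{(i)}_+$ the discrepancy stays strictly positive on some sub-interval $(0,\delta)$, on which $S_e(f(t))$ and $I_e(f(t))$ coincide with fixed sets $\widehat S_e,\widehat I_e$ that are nested inside $S^{(j)}_e$ and $I^{(j)}_e$ for every $j$.

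Granting this, \textsf{R(2)} comes from integrating rule~\textsf{R(2)} of Definition~\ref{defn:rules}: for $e\in E^{(i)}_+$ the transfer rate through $e$ on $(0,\delta)$ is the function $w_e\Delta_e(f(t))$, so $\sum_{v\in\head_e}\psi^e_v(t)=w_e\int_0^t\Delta_e(f(s))\,ds=-\sum_{u\in\tail_e}\psi^e_u(t)$, and taking $i+1$ right-derivatives at $0$ yields $\sum_{v\in\head_e}\vp^{(i+1)}_v(e)=w_e\Delta^{(i)}_e=-\sum_{u\in\tail_e}\vp^{(i+1)}_u(e)$. A vertex $v\in\head_e\setminus\widehat I_e$ never realizes $\min_{\head_e}f$ for small $t>0$, so rule~\textsf{R(1)} of Definition~\ref{defn:rules} forbids it from absorbing measure through $e$, i.e.\ $\psi^e_v\equiv0$ near $0$; since $\widehat I_e\subseteq I^{(i)}_e$, the head-sum may be restricted to $I^{(i)}_e$, and symmetrically the tail-sum to $S^{(i)}_e$, giving \textsf{R(2)}. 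For \textsf{R(1)}: if $e\notin E^{(i)}_+$ then $\Delta_e(f(t))$ is either $\le 0$ on a right-neighbourhood of $0$ (so no measure moves through $e$) or vanishes there to order at least $i+1$; in either case $\psi^e_u$ vanishes to order at least $i+2$, so $\vp^{(i+1)}_u(e)=0$. If instead $e\in E^{(i)}_+$ but $u$ lies neither in $\arg\max_{v\in S^{(i)}_e}f^{(i+1)}_v$ nor in $\arg\min_{v\in I^{(i)}_e}f^{(i+1)}_v$, then because $\widehat S_e\subseteq\arg\max_{v\in S^{(i)}_e}f^{(i+1)}_v$ and $\widehat I_e\subseteq\arg\min_{v\in I^{(i)}_e}f^{(i+1)}_v$ we get $u\notin\widehat S_e\cup\widehat I_e$, whence $\psi^e_u\equiv0$ near $0$ and $\vp^{(i+1)}_u(e)=0$.

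The most delicate point, I expect, is the regularity bookkeeping behind the second paragraph: one must know that $f$, every $\Delta_e(f(\cdot))$, and every $\psi^e_u$ admit right-derivatives of all orders, and that the active sets $S_e(f(t)),I_e(f(t))$ together with the ``super-vertex'' structure stabilize on a common interval $(0,\delta)$ --- precisely the facts the Envelope-Theorem manipulations presuppose. The clean way to obtain this is to run the whole argument by induction on $i$: at level $i$ the rules already derived determine $f^{(1)},\dots,f^{(i)}$, hence $\sigma_i$; treating the extremal $\sigma_i$-classes as super-vertices turns each hyperedge into a normal edge for small $t>0$, so $f$ must locally solve a fixed linear differential equation of the type in Lemma~\ref{lemma:distinct_densities}, whose real-analytic solution supplies $f^{(i+1)}$ and legitimizes the term-by-term differentiations. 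The other genuinely technical chore is tracking the ambiguous edges $E^{(i)}_0$, which may switch to active or inactive only at a strictly later level and so must be carried through the lexicographic comparison with care; the rest is routine once the Envelope Theorem is available.
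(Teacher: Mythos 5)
Your proposal is correct and takes the same route the paper itself indicates: the paper offers no written proof of this proposition beyond the one-line remark that it follows ``by applying the Envelope Theorem,'' and your argument is precisely that derivation carried out in detail (cumulative flows $\psi^e_u$, lexicographic stabilization of $S_e(f(t))$, $I_e(f(t))$ on a right-neighbourhood, and term-by-term differentiation of the flow-conservation identities). The one caveat you already flag---that the $\vp^{(1)}_u(e)$ are not unique, so one must fix a sufficiently regular selection of them along the trajectory before $\psi^e_u$ has derivatives of all orders, which in turn rests on the inductive interplay with the procedure $\mathfrak{D}^{(i+1)}$ and Lemma~\ref{lemma:distinct_densities}---is exactly the regularity the paper's statement presupposes without comment, so your treatment is if anything more careful than the source.
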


\begin{proofof}{Theorem~\ref{th:exists}}
Suppose at some instant in time,
the densities of the vertices are given by $f \in \R^V$.
Then, we shall prove later in Lemma~\ref{lemma:deriv}  that all higher derivatives
of $f$ at this moment can be uniquely determined
by Definition~\ref{defn:rules} and its extension Proposition~\ref{prop:higher_rules}.
This induces an equivalence relation $\sigma^*$ that captures which
vertices will stay together in infinitesimal time.

Therefore, each equivalence class in $\mathcal{C}[\sigma^*]$ acts
like a super vertex, where a super vertex is stationary if it contains a stationary vertex.
Since vertices from different equivalence classes will have different $f$ values in
infinitesimal time, we can apply Lemma~\ref{lemma:distinct_densities}
to show that the trajectory of the $f$ values can be uniquely determined in 
infinitesimal time.  This completes the existence proof for Theorem~\ref{th:exists}.
\end{proofof}

\section{Generating the Next Derivative via a Densest Subset Problem}
\label{ssec:next-derivative}

In this section, we decribe the procedure $\mathfrak{D}^{(i+1)}$
to generate the $(i+1)$-st derivative.

\noindent \textbf{Defining $f^{(i+1)}$ Inductively.}
Suppose $\mathcal{F}_i := \{(f^{(j)}, \sigma_j, E^{(j)}_+, E^{(j)}_0)\}_{j=0}^i$
has already been constructed.
We shall show that
Proposition~\ref{prop:higher_rules}
implies that there is a procedure $\mathfrak{D}^{(i+1)}$
that returns a unique $f^{(i+1)} := \mathfrak{D}^{(i+1)}(\mathcal{F}_i)$
that is supposed to be the $(i+1)$-st derivative $\frac{d^{i+1} f}{d t^{i+1}}$.
Then, all other objects at level $i+1$ can be defined as follows.

\begin{compactitem}
\item $\sigma_{i+1} := \sigma_i(f^{(i+1)})$.
\item For $e \in E$, $\Delta^{(i+1)}_e := f^{(i+1)}(\max_{u \in \tail_e} [u]_{\sigma_{i+1}})
- f^{(i+1)}(\min_{v \in \head_e} [v]_{\sigma_{i+1}})$.

\item \textbf{Resolving Some Ambiguous Edges.}

$E^{(i+1)}_+ := E^{(i)}_+ \cup \{e \in E^{(i)}_0: \Delta^{(i+1)}_e > 0 \}$;

$E^{(i+1)}_0 := \{e \in E^{(i)}_0: \Delta^{(i+1)}_e = 0 \}$.

\end{compactitem}

We extend the framework in~\xcite{chan2016spectral}
to describe the procedure $\mathfrak{D}^{(i+1)}$ 
that takes the objects $\mathcal{F}_i$ from levels at most $i$
and returns the $(i+1)$-st derivative.  There are two major differences.
\begin{compactitem}
\item Since stationary vertices have constant $f$ values,
in some sense their weights can be perceived as infinity.  This
introduces some complication in the densest subset problem.
\item When we compute the $(i+1)$-st derivative, we need
to consider all objects constructed at levels at most $i$
in order to satisfy Definition~\ref{defn:rules}.
\end{compactitem}

\noindent \textbf{Consider Each Equivalence Class $U \in \mathcal{C}[\sigma_i]$ 
Independently.}  Vertices are in the same equivalence class of $\sigma_i$ because
derivatives of order up to $i$ cannot be used to separate them.  However,
it is possible that these vertices could be separated in infinitesimal time
due to the $(i+1)$-st derivative.  We consider the following densest subset problem.

\noindent \textbf{Densest Subset Problem.}
For each edge $e \in E$, 
define $c^{(i)}_e := w_e \Delta^{(i)}_e$ (which could be positive or negative), and
denote $S^{(i)}_e := \arg \max_{u \in \tail_e} [u]_{\sigma_i} \subseteq \tail_e$
and $I^{(i)}_e := \arg \min_{v \in \head_e} [v]_{\sigma_i} \subseteq \head_e$.
For any subset $X \subseteq U$,
we denote
$I^{(i)}_X := \{e \in E^{(i)}_+ : I^{(i)}_e \subseteq X\}$
and $S^{(i)}_X := \{e \in E^{(i)}_+ : S^{(i)}_e \cap X \neq \emptyset \}$.

\noindent \emph{Density at Level $i$.}
We consider a densest subset problem
in $(U, I^{(i)}_U, S^{(i)}_U)$,
and define a density function $\delta^{(i)}$
for each non-empty $X \subseteq U$ as follows.

$$
\delta^{(i)}(X) :=
\begin{cases}
\frac{\sum_{e\in I^{(i)}_X} c^{(i)}_e - \sum_{e\in S^{(i)}_X} c^{(i)}_e}{\sum_{v \in X} \omega_v}, \qquad & X \cap T = \emptyset,\\
0, \qquad &X\cap T\neq \emptyset.
\end{cases}
$$

\noindent \emph{Lexicographically Densest Subset.}
In order to see whether the equivalence class $U$ will be separated,
the procedure finds a maximal lexicographically densest subset $P$
with respect to the vector
$\vec{\delta}^{(i)}(P) := (\delta^{(0)}(P), \delta^{(1)}(P),
\ldots, \delta^{(i)}(P))$, where smaller indices have higher priorities.
This aims to find the vertices $P$ in $U$ with the maximum $(i+1)$-st 
derivatives, which will separate them from other vertices in $U$.

Below is an example (with $i=1$) illustrating why we consider the lexicographical density, instead of the density $\delta^{(i)}$ at the highest level $i$.
\begin{figure}[h]
	\centering
	\includegraphics*[width = 0.7\textwidth]{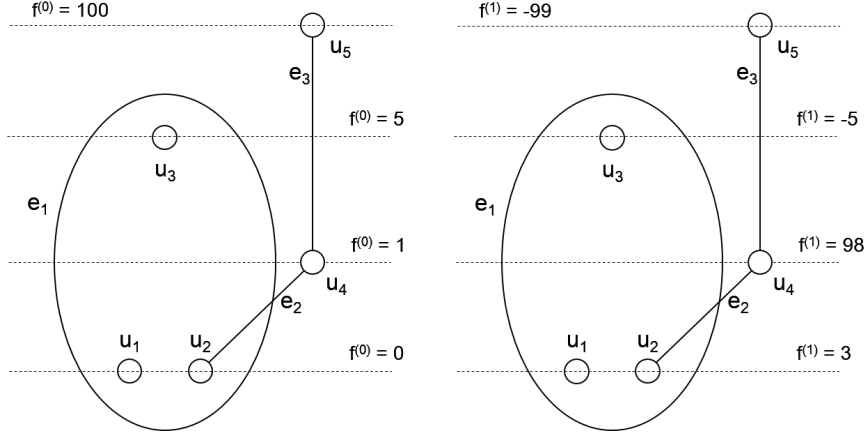}
	\caption{Given $f^{(0)}$ and $f^{(1)} = \frac{df^{(0)}}{dt}$, let $P = \{u_2\}$ and $P' = \{ u_1,u_2 \}$, then we have $\delta^{(0)}(P) = 1$, $\delta^{(1)}(P) = 95$, $\delta^{(0)}(P') = 3$ and $\delta^{(1)}(P') = 43.5$.
	In other words, compared with $P'$, $P$ has a larger level $1$ density but a smaller lexicographical density.
	Hence if we set $f^{(2)}_{u_2} = \delta^{(1)}(P) = 95$ (by considering only level $1$ density), then we would have a contradiction such that $f^{(0)}_{u_2} > f^{(0)}_{u_1}$ and $f^{(0)}_{u_2} < f^{(0)}_{u_1}$ both hold after infinitesimal time.}
	\label{fig:second_derivative_case}
\end{figure}

Observe for $0 \leq j \leq i$,
all vertices in $P$ are in the same equivalence class in $\sigma_j$.
Hence, $\delta^{(j)}(P)$ has already been defined in previous steps.
According to whether there are stationary vertices in $U$,
we will proceed differently.

\noindent \textbf{Case 1: No Stationary Vertices in $U$.}
The treatment is similar to~\xcite{chan2016spectral}.
After the maximal $P$ is identified above,
for every $u \in P$, we set $f^{(i+1)}(u) := \delta^{(i)}(P)$.
If $P \neq U$, then we recursively consider
the smaller instance $(\widehat{U}, I^{(i)}_{\widehat{U}}, S^{(i)}_{\widehat{U}})$,
where $\widehat{U} := U \setminus P$,
$I^{(i)}_{\widehat{U}} := I^{(i)}_{U} \setminus I^{(i)}_{P}$,
and $S^{(i)}_{\widehat{U}} := S^{(i)}_{U} \setminus S^{(i)}_{P}$.
Observe that in this smaller instance,
the density function $\widehat{\delta}^{(i)}$ will also be different.
It is proved in~Lemma 4.6 in \xcite{chan2016spectral} that
for any $\emptyset \neq X \subseteq \widehat{U}$, $\widehat{\delta}^{(i)}(X) \leq 
\delta^{(i)}(P)$.  Hence, in $\sigma_{i+1}$, vertices in $P$ will be separated away from
vertices in $\widehat{U}$.
Recall that the density function at level $i$
will be used for higher levels.  Observe that we should
use $\delta^{(i)}$ for subsets in $P$, and
if $\widehat{P}$ is the maximal subset found in this instance using
density $\widehat{\delta}^{(i)}$, then this new density function
will be used for vertices in $\widehat{P}$ in higher levels.

\noindent \textbf{Case 2: There is a Stationary Vertex in $U$.}
Since the equivalence class $U \in \mathcal{C}[\sigma_i]$ contains a stationary vertex,
this means that all vertices in $U$ have zero derivatives up to level $i$.
If the maximal $P$ found in the above has strictly positive
density $\delta^{(i)}(P) > 0$,
this means that $P$ does not contain any stationary vertex, and 
it will be separated from other vertices in $U$;
we also recursively consider the smaller instance as Case~1.

Observe that using $\delta^{(i)}$, the densest subset cannot have
negative density because by including a stationary vertex, the density becomes zero (refer to Figure~\ref{fig:negative_r_case}).
This means that we cannot use $\delta^{(i)}$ to detect vertices that 
are separated from the stationary vertices because their $(i+1)$-st derivatives should be negative.  Hence, if the maximal $P$ found has density $\delta^{(i)}(P) = 0$, we
consider the \emph{least densest subset problem} that aims to
find the vertices with the minimum $(i+1)$-st derivative.

\begin{figure}[H]
	\centering
	\includegraphics*[width = 0.3\textwidth]{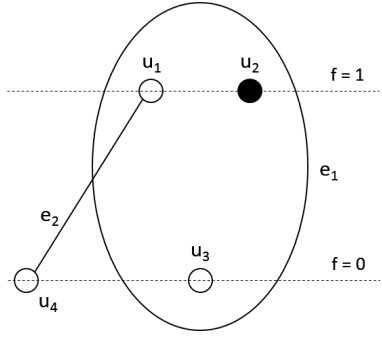}
	\caption{Suppose $T = \{u_2\}$ and $f^{(0)}_{u_1} = f^{(0)}_{u_2}=1$. By Definition~\ref{defn:rules} we have $f^{(1)}(u_1)<0$, but by including the stationary vertex $u_2$, we obtain a densest subset $\{u_1,u_2\}$ with zero density.}
	\label{fig:negative_r_case}
\end{figure}

\noindent \emph{Least Densest Subset Problem.}
We have a similar instance $(U, \widetilde{I}^{(i)}_U, \widetilde{S}^{(i)}_U)$.
For $X \subseteq U$, observe that we have different definitions
$\widetilde{I}^{(i)}_X := \{e \in E^{(i)}_+ : I^{(i)}_e \cap X \neq \emptyset \}$
and $\widetilde{S}^{(i)}_X := \{e \in E^{(i)}_+ : S^{(i)}_e \subseteq X\}$.

For each non-empty $X \subseteq U$,
the new density function $\widetilde{\delta}^{(i)}$ is defined as follows.

$$
\widetilde{\delta}^{(i)}(X) :=
\begin{cases}
\frac{\sum_{e\in \widetilde{I}^{(i)}_X} c^{(i)}_e - \sum_{e\in \widetilde{S}^{(i)}_X} c^{(i)}_e}{\sum_{v \in X} \omega_v}, \qquad & X \cap T = \emptyset,\\
0, \qquad &X\cap T\neq \emptyset.
\end{cases}
$$

Similar to before, the goal is then to find a maximal
lexicographically least densest subset $P$
with respect to the vector $(\widetilde{\delta}^{(0)}(P), \widetilde{\delta}^{(1)}(P),
\ldots, \widetilde{\delta}^{(i)}(P))$.

If $\widetilde{\delta}^{(i)}(P) < 0$, then we can
set for each $u \in P$, $f^{(i+1)}(u) = \widetilde{\delta}^{(i)}(P)$.
Then, we separate $P$ from $U$ and consider a smaller instance as before.
If $\widetilde{\delta}^{(i)}(P) = 0$, this means
all vertices in $U$ will stay together with the stationary vertex
in $\mathcal{C}[\sigma_{i+1}]$.

This completes the description of the procedure $\mathfrak{D}^{(i+1)}$.

\begin{lemma}[Diffusion Rules Determine Derivatives]
\label{lemma:deriv}
The diffusion rules in Proposition~\ref{prop:higher_rules}
uniquely determine all higher derivatives of $f \in \R^V$ with respect to time.
Moreover, for all $i \geq 0$, we have the following:

\begin{compactitem}
\item[1.] The
procedure $\mathfrak{D}^{(i+1)}$ gives the $(i+1)$-st derivative $f^{(i+1)}$.
\item[2.] We have $\norm{f^{(i+1)}}^2_\omega = - \sum_{e \in E^{(i)}_+} w_e \Delta^{(i)}_e \Delta^{(i+1)}_e$.
\end{compactitem}
\end{lemma}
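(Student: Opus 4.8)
The plan is to prove both parts by a single strong induction on $i$, reading the statement as one about the constraint system of Proposition~\ref{prop:higher_rules}: assuming $\mathcal{F}_i=\{(f^{(j)},\sigma_j,E^{(j)}_+,E^{(j)}_0)\}_{j=0}^{i}$ has been determined and the claimed identities hold at all levels $\le i$, I would show that Proposition~\ref{prop:higher_rules} admits exactly one $f^{(i+1)}$, namely $\mathfrak{D}^{(i+1)}(\mathcal{F}_i)$, and then verify the norm identity at level $i+1$. For $i=0$ the rules of Proposition~\ref{prop:higher_rules} coincide with Definition~\ref{defn:rules}, so the base case is the first-derivative analysis of~\cite{chan2016spectral}, the only new ingredient being the stationary-vertex handling encoded in Case~2 of the description of $\mathfrak{D}^{(i+1)}$.

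For the inductive step, the first observation is that the constraints on $f^{(i+1)}$ decouple across $\mathcal{C}[\sigma_i]$. If $e\in E^{(i)}_+$ then $e$ became active at some level $k\le i$, so $S^{(i)}_e$ is strictly above $I^{(i)}_e$ already in $\sigma_k$, hence — since refinement preserves the class order — in $\sigma_i$; in particular $S^{(i)}_e$ and $I^{(i)}_e$ are disjoint and sit in fixed classes of $\sigma_i$, and by \textsf{R(2)} the amount $c^{(i)}_e=w_e\Delta^{(i)}_e$ that $e$ moves from $S^{(i)}_e$ to $I^{(i)}_e$ at level $i+1$ is already fixed by $\mathcal{F}_i$. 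Thus each class $U\in\mathcal{C}[\sigma_i]$ sees a fixed family of source edges ($I^{(i)}_e\subseteq U$) and sink edges ($S^{(i)}_e\subseteq U$). Fix such a $U$ with $U\cap T=\emptyset$ and let $P\subseteq U$ be the set where $f^{(i+1)}$ attains its maximum over $U$; \textsf{R(1)} forces the flow of a source edge into $\arg\min_{I^{(i)}_e}f^{(i+1)}$ and of a sink edge out of $\arg\max_{S^{(i)}_e}f^{(i+1)}$, so edge $e$ contributes exactly $c^{(i)}_e$ to $P$ iff $I^{(i)}_e\subseteq P$ and exactly $-c^{(i)}_e$ iff $S^{(i)}_e\cap P\neq\emptyset$, and summing \textsf{R(0)} over $P$ yields $f^{(i+1)}(P)=\delta^{(i)}(P)$. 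Since $\sum_{v\in X}\omega_v f^{(i+1)}_v\le\big(\max_U f^{(i+1)}\big)\sum_{v\in X}\omega_v$ for all $X\subseteq U$, and — as in Lemma~4.6 of~\cite{chan2016spectral}, extended to handle signed $c^{(i)}_e$ — also $\sum_{v\in X}\omega_v f^{(i+1)}_v\ge\delta^{(i)}(X)\sum_{v\in X}\omega_v$, the set $P$ is a densest subset; applying the same two bounds to the lower-level density functions $\delta^{(j)}$, $j<i$ (all defined on subsets of $U$ because $U$ lies in a single class of every $\sigma_j$), pins $P$ down as the \emph{maximal lexicographically densest} subset, the example in Figure~\ref{fig:second_derivative_case} showing that level-$i$ density alone does not suffice. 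Peeling $P$ off and recursing on $U\setminus P$, where the residual density is strictly smaller (the extension of Lemma~4.6 of~\cite{chan2016spectral}), determines $f^{(i+1)}$ on all of $U$. When $U$ contains a stationary vertex, all level-$\le i$ derivatives vanish on $U$ and the stationary vertex has $f^{(i+1)}=0$; the recursion with $\delta^{(i)}$ (which is $0$ on any set meeting $T$) peels off precisely the positive-$f^{(i+1)}$ blocks, after which one runs the symmetric least-densest recursion with $\widetilde\delta^{(i)}$ — whose swapped $\subseteq/\cap$ conventions are exactly what the analogue of the two bounds above requires for the \emph{bottom} block — until the least density reaches $0$, where the remaining vertices must stay with the stationary vertex at value $0$. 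This shows $\mathfrak{D}^{(i+1)}(\mathcal{F}_i)$ is the only candidate; conversely, distributing the fixed per-edge amounts $c^{(i)}_e$ over the $\arg\max$/$\arg\min$ vertices inside each block is a routine transportation-feasibility check establishing that it does satisfy \textsf{R(0)}--\textsf{R(2)}, which is part~1.

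Part~2 is pure bookkeeping once part~1 is in hand. By \textsf{R(0)} and $f^{(i+1)}_u=0$ for $u\in T$,
\[
\norm{f^{(i+1)}}_\omega^2=\sum_{u\in V}f^{(i+1)}_u\,\vp^{(i+1)}_u=\sum_{e\in E^{(i)}_+}\ \sum_{u\in S^{(i)}_e\cup I^{(i)}_e}f^{(i+1)}_u\,\vp^{(i+1)}_u(e),
\]
where \textsf{R(1)} gives the second equality. For each $e\in E^{(i)}_+$, \textsf{R(1)} restricts the nonzero $\vp^{(i+1)}_u(e)$ to vertices attaining $M_e:=\max_{u\in S^{(i)}_e}f^{(i+1)}_u$ on the tail side or $m_e:=\min_{v\in I^{(i)}_e}f^{(i+1)}_v$ on the head side, so by \textsf{R(2)} the inner sum equals $M_e\big(-c^{(i)}_e\big)+m_e c^{(i)}_e=-c^{(i)}_e(M_e-m_e)$. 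Finally, $\sigma_{i+1}=\sigma_i(f^{(i+1)})$ refines $\sigma_i$ while preserving the class order, so $M_e=f^{(i+1)}\big(\max_{u\in\tail_e}[u]_{\sigma_{i+1}}\big)$ and $m_e=f^{(i+1)}\big(\min_{v\in\head_e}[v]_{\sigma_{i+1}}\big)$, i.e.\ $M_e-m_e=\Delta^{(i+1)}_e$; summing over $e\in E^{(i)}_+$ gives $\norm{f^{(i+1)}}_\omega^2=-\sum_{e\in E^{(i)}_+}w_e\Delta^{(i)}_e\Delta^{(i+1)}_e$.

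The step I expect to be the main obstacle is the combinatorial core of the induction: proving that the maximal lexicographically densest subset (and, in the stationary case, its least-densest analogue) is well-defined and that removing it strictly lowers the residual density. This is where the framework of~\cite{chan2016spectral} must genuinely be extended, the extra difficulties being that one must carry all $i+1$ density functions simultaneously, that the coefficients $c^{(i)}_e$ can be negative at higher levels, and that the ``infinite-weight'' behaviour of stationary vertices destroys the naive monotonicity of $\delta^{(i)}$ and forces the separate $\widetilde\delta^{(i)}$ recursion.
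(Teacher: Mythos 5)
Your proposal is correct and takes essentially the same route as the paper: decouple the constraints over the classes of $\sigma_i$, identify the set $P$ of maximum $(i+1)$-st derivative, derive $f^{(i+1)}(P)=\delta^{(i)}(P)$ from \textsf{R(0)}--\textsf{R(2)}, pin $P$ down as the maximal lexicographically densest subset (switching to the least-densest variant when the class contains a stationary vertex), and recurse on the remainder. The norm identity in part~2 is the same \textsf{R(0)}--\textsf{R(2)} bookkeeping as in the paper, merely grouped edge-by-edge rather than over the classes of $\sigma_{i+1}$.
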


\begin{proof}
We use a proof structure similar to~Lemma~4.8 in \xcite{chan2016spectral}.
Recall that we are considering a fixed instant in time.

\noindent \textbf{Separating an Equivalence Class $U \in \mathcal{C}[\sigma_i]$.}
The procedure $\mathfrak{D}^{(i+1)}$ takes all objects
$\mathcal{F}_i$ of levels up to $i$,
and attempts to see if the vertices in an equivalence class $U$ 
could be separated because their $(i+1)$-st derivatives are different.
We consider the subset $P \subseteq U$ with the maximum $(i+1)$-st derivatives,
and describe how the procedure $\mathfrak{D}^{(i+1)}$ could separate $U$
in the following cases.

\noindent \emph{Case 1: $P$ does not contain a stationary vertex.}
Suppose the vertices in $P$ have $\delta_M$ as their common $(i+1)$-st derivative.
By~\textsf{R(0)} and~\textsf{R(1)} in Proposition~\ref{prop:higher_rules},
we have

$$\omega(P) \cdot \delta_M = \sum_{u \in P} 
\sum_{e \in E^{(i)}_+: u \in I^{(i+1)}_e \cup S^{(i+1)}_e } \vp^{(i+1)}_u(e).$$

Since the vertices have the highest $(i+1)$-st derivatives,
by~\textsf{R(1)}, the vertices in $P$ must be responsible
for the edges in $I^{(i)}_P$ and $S^{(i)}_P$.

By~\textsf{R(2)},
it follows that the RHS of the above equation is:
$\omega(P) \cdot \delta_M = c^{(i)}(I^{(i)}_P) - c^{(i)}(S^{(i)}_P)$,
which implies that $\delta_M = \delta^{(i)}(P)$.

The next observation is that since $P$ will be separated
in infinitesimal time from other vertices in $U$,
in order to maintain consistency with previous levels up to $i$,
it follows that $P$ must be the maximal
lexicographically densest subset with respect to $\vec{\delta}^{(i)}$.

Hence, the procedure $\mathfrak{D}^{(i+1)}$ can successfully recover $P$ and
identify its $(i+1)$-st derivative.

\noindent \emph{Case 2: $P$ contains a stationary vertex.}
In this case, observe that the procedure will see that
the maximal lexicographically densest subset
with respect to $\vec{\delta}^{(i)}$ is actually the whole $U$.
However, in this case, the procedure
will consider the least densest subset problem.
Using a similar argument, it follows that the procedure
can recover the subset $\widetilde{P}$
of vertices with minimum $(i+1)$-st derivative,
which can also be computed.
These vertices will be separated from the stationary vertices
 if their $(i+1)$-st derivatives are negative.

Therefore, by recursively applying the arguments in the above two cases,
the procedure $\mathfrak{D}^{(i+1)}$ can correctly return the $(i+1)$-st derivatives for all vertices.

\vspace{5pt}

\noindent \textbf{Relating the norm $\norm{f^{(i+1)}}_\omega$
with active edges.}
Observe that each $X \in \mathcal{C}[\sigma_{i+1}]$
is recovered by the procedure in some (least) densest subset problem.
We use $\delta^{(i)}(X)$ to denote its density at the moment that it is
recovered.
Then, it follows that 
for each $u \in X$, $f^{(i+1)}(X) = f^{(i+1)}_u = \delta^{(i)}(X)$,
and
$\omega(X) \cdot f^{(i+1)}(X) = \sum_{e \in I^{(i+1)}_X} c^{(i)}_e
- \sum_{e \in S^{(i+1)}_X} c^{(i)}_e$,
where $c^{(i)}_e = w_e \Delta^{(i)}_e$.
Observe that we use the superscript $(i+1)$ for the edge sets
$I^{(i+1)}_X$ and $S^{(i+1)}_X$,
because $X$ could actually be returned in a recursive instance of the
(least) densest subset problem.  Hence, we have the following:

\begin{align}
\norm{f^{(i+1)}}^2_\omega = \sum_{u \in N} \omega_u \cdot (f^{(i+1)}_u)^2
& = \sum_{X \in \mathcal{C}[\sigma_{i+1}]: f^{(i+1)}(X) \neq 0 } \omega(X) \cdot \delta^{(i)}(X) \cdot f^{(i+1)}(X) \\
& = \sum_{X \in \mathcal{C}[\sigma_{i+1}]: f^{(i+1)}(X) \neq 0 } 
(\sum_{e \in I^{(i+1)}_X} c^{(i)}_e
- \sum_{e \in S^{(i+1)}_X} c^{(i)}_e)
\cdot f^{(i+1)}(X)  \label{eq:2nd}
\end{align}

Fixing an edge $e \in E^{(i)}_+$,
we next consider the coefficient of $c^{(i)}_e =   w_e \Delta^{(i)}_e$
in (\ref{eq:2nd}).
Observe there is a contribution of $f^{(i+1)}(X)$,
where $X \in \mathcal{C}[\sigma_{i+1}]$ is
such that $\arg \min_{u \in \head_e} [u]_{\sigma_{i+1}} \subseteq X$;
and there is a contribution of $-f^{(i+1)}(Y)$,
where $Y \in \mathcal{C}[\sigma_{i+1}]$ is
such that $\arg \max_{u \in \tail_e} [u]_{\sigma_{i+1}} \subseteq Y$.
Therefore, the coefficient of $c^{(i)}_e$
is $f^{(i+1)}(\arg \min_{u \in \head_e} [u]_{\sigma_{i+1}}) -
f^{(i+1)}(\arg \max_{u \in \tail_e} [u]_{\sigma_{i+1}} ) = - \Delta^{(i+1)}_e$.

Therefore, (\ref{eq:2nd}) equals $- \sum_{e \in E^{(i)}_+} w_e \Delta^{(i)}_e \Delta^{(i+1)}_e$, as required.
\end{proof}

So far, we have shown that derivatives of $f$, i.e., $f^{(i)}$ for $i = 1,2,\ldots$, exist and can be found by the above procedures.
Referring to Definition~\ref{defn:rules}, the following lemma shows that once we have computed $f^{(1)}$, there is a way to compute $\vp^{(1)}_u(e)$, which might not be unique.

\begin{lemma}[Existence of $\vp^{(1)}_u(e)$]
\label{lemma:existence_of_varphi}
	Given $f^{(1)}$ and $\sigma_1 = \sigma_0(f^{(1)})$, it is possible to assign for each $e\in E$ and $u\in \tail_e\cup \head_e$, the value $\vp^{(1)}_u(e)$ such that $\omega_u f^{(1)}_u = \vp^{(1)}_u = \sum_{e\in E:u\in \tail_e\cup \head_e} \vp^{(1)}_u(e)$ for $u\in N$.
\end{lemma}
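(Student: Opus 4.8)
The goal is to show that the flows $\vp^{(1)}_u(e)$ can be chosen consistently with Definition~\ref{defn:rules} so that each non-stationary vertex's net rate equals $\omega_u f^{(1)}_u$. The natural approach is to set up a flow/transportation problem on a bipartite-style network and invoke a max-flow/Hall-type feasibility criterion. Concretely, I would fix the equivalence relation $\sigma_1 = \sigma_0(f^{(1)})$ and recall that Lemma~\ref{lemma:deriv} already tells us precisely which subsets $P$ of each equivalence class $U$ were ``peeled off'' in the (least) densest subset recursion, together with the identity $\omega(P)\cdot f^{(1)}(P) = c^{(0)}(I^{(0)}_P) - c^{(0)}(S^{(0)}_P)$ (at level $i=0$). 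I would then build, for the relevant edges $e \in E^{(0)}_+$, a flow network whose sources correspond to edges (each edge $e$ must push exactly $c^{(0)}_e = w_e\Delta^{(0)}_e$ units out of $S^{(0)}_e$ and into $I^{(0)}_e$), whose sinks are the non-stationary vertices $u$ with demand $\omega_u f^{(1)}_u$, with stationary vertices acting as free sources/sinks of unbounded capacity, and where edge $e$ may route flow only through vertices permitted by rule~\textsf{R(1)}, i.e.\ into $\arg\min$ of $I^{(0)}_e$ under $\sigma_1$ and out of $\arg\max$ of $S^{(0)}_e$ under $\sigma_1$.

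The key step is to verify feasibility of this transportation problem. I would do this by the structure already exposed by the peeling procedure: process the equivalence classes $X \in \mathcal{C}[\sigma_1]$ in the order in which they were recovered (densest first, then recursing on the remainder; symmetrically for the least-densest side), and for each recovered block $P$ route the edges in $I^{(0)}_P$ and $S^{(0)}_P$ that are ``local'' to $P$. The identity $\omega(P)\cdot f^{(1)}(P) = c^{(0)}(I^{(0)}_P) - c^{(0)}(S^{(0)}_P)$ is exactly the conservation statement guaranteeing that the measure entering $P$ from its local edges matches the total demand $\sum_{u\in P}\omega_u f^{(1)}_u$; within $P$ one can distribute the per-edge contributions among the vertices of $P\cap(\arg\max/\arg\min)$ arbitrarily (say uniformly) since all vertices in $P$ share the same $f^{(1)}$ value. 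Edges straddling two blocks are charged to whichever endpoint-block is the ``giving'' or ``receiving'' side according to $\sigma_1$; the monotonicity of the recovered densities (Lemma~4.6 of~\cite{chan2016spectral}, restated in the procedure) ensures this assignment is consistent, and blocks containing a stationary vertex simply absorb or release the slack, which is permitted since stationary vertices carry no demand constraint.

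I would phrase the feasibility check either as a direct inductive construction along the peeling order (cleanest, and it reuses Lemma~\ref{lemma:deriv} almost verbatim) or, if one prefers a black-box argument, as an application of the max-flow--min-cut theorem where every cut is shown to have enough capacity using the density inequalities; the inductive construction avoids re-deriving those inequalities. The main obstacle is the bookkeeping for edges whose tail-max class and head-min class under $\sigma_1$ lie in different blocks of the recursion, and for edges incident to stationary super-vertices: one must check that routing such an edge's flow does not violate rule~\textsf{R(1)} (only extremal vertices may send/receive) while still balancing every non-stationary vertex. This is handled by noting that $\sigma_1 = \sigma_0(f^{(1)})$ refines $\sigma_0$ exactly along the peeled blocks, so the extremal sets $S^{(0)}_e, I^{(0)}_e$ intersected with a block are precisely the vertices that rule~\textsf{R(1)} allows for that edge, and the block-level conservation identity then closes the argument. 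Finally I would remark that non-uniqueness is expected — any feasible point of this transportation polytope works — which is consistent with the earlier observation that $A_f$ (hence the $a^e_{uv}$, hence the $\vp^{(1)}_u(e)$) need not be unique even though $\frac{df}{dt}$ is.
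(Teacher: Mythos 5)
Your overall framing---a transportation/flow feasibility problem per equivalence class of $\sigma_1$, with stationary vertices acting as unbounded sources/sinks and edges routing only through the extremal vertices permitted by rule \textsf{R(1)}---is the same shape as the paper's argument. However, your primary route contains a genuine gap at exactly the step that carries the content of the lemma. You assert that since $\omega(P)\cdot f^{(1)}(P) = c(I_P) - c(S_P)$, ``within $P$ one can distribute the per-edge contributions among the vertices of $P\cap(\arg\max/\arg\min)$ arbitrarily (say uniformly).'' This is false: the block-level identity only matches \emph{total} supply to \emph{total} demand over $P$, whereas the lemma requires each individual vertex $u\in P$ to receive exactly $\omega_u f^{(1)}_u$. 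Different vertices of $P$ are incident to different subsets of $I_P\cup S_P$, so a uniform per-edge split generically over-supplies some vertices and under-supplies others. The per-vertex feasibility is not automatic; it holds precisely because $P$ was extracted as a maximal (lexicographically) densest subset, so no proper subset $P'\subseteq P$ can have $\delta(P')>\delta_M$ (and, in the stationary case, no subset avoiding $v_0$ can have $\widetilde{\delta}<0$). The paper's proof makes this the centerpiece: it runs an augmenting-path argument on a directed graph over $P$, reducing surpluses against deficits, and shows that if a vertex with surplus has all its reachable vertices $P'$ free of deficit, then $\delta(P')>\delta_M$ (or the analogous contradiction with $\widetilde{\delta}$ when $P$ contains a stationary vertex), contradicting the choice of $P$. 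That contradiction is the max-flow/min-cut feasibility certificate.

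Your fallback suggestion---``max-flow--min-cut where every cut is shown to have enough capacity using the density inequalities''---is in fact the correct argument and essentially what the paper does, but you set it aside as the less clean option and do not carry it out. Relatedly, you locate the ``main obstacle'' in edges straddling blocks and in stationary super-vertices; those are handled routinely by the definitions of $I_P$ and $S_P$ and by letting stationary vertices absorb slack. The real obstacle is within-block per-vertex feasibility, which your primary construction skips. To repair the proof, replace the uniform allocation by the flow argument and invoke the maximality of the (least) densest subset to rule out a blocking cut.
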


\begin{proof}
	Note that we only need to consider $e\in E^{(0)}_+$, i.e., $c^{(0)}_e > 0$,
	as otherwise we have $\vp^{(1)}_u(e) = 0$ for all $u\in \tail_e\cup\head_e$.
	Consider each equivalence class $P \in \mathcal{C}[\sigma_1]$. 
	Let $\delta_M$ be the common $f^{(1)}$ value of vertices in $P$. Recall that $\delta_M = \delta^{(0)}(P)$ if $\delta_M > 0$ and $\delta_M = \widetilde{\delta}^{(0)}(P)$ if $\delta_M \leq 0$, when $P$ is separated. For notational convenience, we drop the superscripts when there is no ambiguity.
	
	Suppose $P\cap T = \emptyset$.
	Consider any configuration $\vp^{(1)}$ in which edge $e\in I_P$ supplies a measure rate of $c_e = w_e \Delta_e$ to the vertices of $P$ and edge $e\in S_P$ demands a measure rate of $c_e$ from the vertices of $P$.
	Each vertex $u\in P$ is supposed to gather a net rate of $\omega_u\cdot \delta_M$, where we call any deviation the \emph{surplus} or \emph{deficit}.
	
	Define a directed graph with vertices $P$, such that there is an arc from $u$ to $v$ if non-zero measure rate can be transfered from $u$ to $v$.
	Note that if there is a directed path from $u$ with non-zero surplus to $v$ with non-zero deficit, then we can decrease the surplus of $u$ and the deficit of $v$. 
	Suppose all surpluses are zero (which implies zero deficits), then $\vp^{(1)}_u(e)$ is the measure rate of $u$ supplied by $e$ (if $u\in I_e$) or demanded by $e$ (if $u\in S_e$).
	
	Suppose there exists $u\in P$ with non-zero surplus and all vertices $P'\subseteq P$ reachable from $u$ have zero deficits.
	Then, it follows that $\delta(P') > \delta_M$, which is a contradiction.
	Hence, $\vp^{(1)}_u(e)$ exists and can be found by a standard flow problem between edges $I_P\cup S_P$ and vertices $P$.
	
	Next, consider the case when $U$ contains a stationary vertex $v_0$.
	Note that in this case we have $\delta_M = 0$, and there does not exist any $X\subseteq P$ with $\delta(X)>0$ or $\widetilde{\delta}(X)<0$.
	Each vertex $u\in P\cap N$ is supposed to gather a net rate of $0$, while the stationary vertex $v_0$ is supposed to gather a net rate of $c(I_P) - c(S_P)$.   Hence,
	we can still define the notion of surplus and deficit based on the deviation from the net rate
	a vertex is supposed to receive.
		As before, we define a directed graph with vertices $P$, and try to decrease any surplus until it is not possible.
		
	Suppose there is a vertex with non-zero surplus and all its reachable vertices $P'$ have zero deficits.
	If $P'\cap T = \emptyset$, then we have $\delta(P')>0$, which is a contradiction; otherwise we have $v_0\in P'$.
	Then, we know that the net rate of vertices in $P'$ is larger than $c(I_P)-c(S_P)$, which implies that $X = P\backslash P'$ (which contains no stationary vertex) has $\widetilde{\delta}(X)<0$ and it is also
	a contradiction.
\end{proof}

\subsection{Properties of Diffusion Operator}

Recall that the diffusion process defined in
Definition~\ref{defn:rules} induces
an operator $\L_\omega: \R^V \ra \R^N$
defined by $\L_\omega f := - \Pi_N \frac{df}{dt} = - f^{(1)}_N \in \R^N$.
We next prove some properties of this operator.

\begin{lemma}[Properties of Diffusion Operator]
\label{lemma:first_deriv}
Consider the diffusion process in Definition~\ref{defn:rules}.
Then, we have the following:
\begin{compactitem}
\item[1.] $\frac{d \norm{f}^2_\omega}{dt} = - 2 \langle f, \L_\omega f \rangle_\omega$.
\item[2.] $\frac{d \Q(f)}{dt} = - \norm{\L_\omega f}^2_\omega = - \norm{f^{(1)}_N}^2_\omega$.
\end{compactitem}
\end{lemma}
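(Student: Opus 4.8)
The plan is to prove the two identities by direct differentiation, using the fact, already established in Theorem~\ref{th:exists} and Lemma~\ref{lemma:deriv}, that along the diffusion process the derivative $f^{(1)} := \frac{df}{dt}$ exists at every instant and satisfies $f^{(1)}_T = 0$. For item~1, I would simply write $\norm{f}^2_\omega = \sum_{u\in N}\omega_u f_u^2$ and differentiate term by term with respect to $t$, obtaining $\frac{d}{dt}\norm{f}^2_\omega = \sum_{u\in N} 2\omega_u f_u f^{(1)}_u = 2\langle f, f^{(1)}\rangle_\omega$. Since $\L_\omega f = -\Pi_N f^{(1)} = -f^{(1)}_N$ and the inner product $\langle\cdot,\cdot\rangle_\omega$ only reads the coordinates in $N$, the right-hand side equals $-2\langle f, \L_\omega f\rangle_\omega$, which is item~1.

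For item~2, I would start from $\Q(f) = \frac12\sum_{e\in E} w_e ([\Delta_e(f)]^+)^2$ and differentiate edge by edge, the only delicate point being to differentiate through the maximum and minimum inside $\Delta_e(f) = \max_{u\in\tail_e} f_u - \min_{v\in\head_e} f_v$. Here I would invoke the (one-sided) Envelope Theorem already used to derive Proposition~\ref{prop:higher_rules}: since each $f_u$ is differentiable in $t$, the right-derivative of $\max_{u\in\tail_e} f_u$ is $\max_{u\in S_e(f)} f^{(1)}_u$ and that of $\min_{v\in\head_e} f_v$ is $\min_{v\in I_e(f)} f^{(1)}_v$, so that the right-derivative of $\Delta_e(f)$ equals $f^{(1)}(\max_{u\in\tail_e}[u]_{\sigma_1}) - f^{(1)}(\min_{v\in\head_e}[v]_{\sigma_1}) = \Delta^{(1)}_e$ (the two descriptions agree because $\sigma_1 = \sigma_0(f^{(1)})$ orders classes lexicographically by $f$-value then $f^{(1)}$-value). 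Partitioning $E$ into the active, inactive and ambiguous edges at the current instant, $E^{(0)}_+$, $E^{(0)}_-$, $E^{(0)}_0$: an inactive edge has $[\Delta_e(f)]^+ = 0$ throughout a neighborhood and contributes nothing; an ambiguous edge has $\Delta_e(f) = 0$, so $([\Delta_e(f)]^+)^2$ has vanishing right-derivative at the instant; and an active edge $e\in E^{(0)}_+$ contributes $\frac12 w_e\cdot 2\Delta^{(0)}_e\Delta^{(1)}_e = w_e\Delta^{(0)}_e\Delta^{(1)}_e$. This yields $\frac{d\Q(f)}{dt} = \sum_{e\in E^{(0)}_+} w_e\Delta^{(0)}_e\Delta^{(1)}_e$.

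To finish, I would combine the last display with item~2 of Lemma~\ref{lemma:deriv} specialized to $i=0$, namely $\norm{f^{(1)}}^2_\omega = -\sum_{e\in E^{(0)}_+} w_e\Delta^{(0)}_e\Delta^{(1)}_e$, which immediately gives $\frac{d\Q(f)}{dt} = -\norm{f^{(1)}}^2_\omega$; since $f^{(1)}_T = 0$ and $\norm{\cdot}_\omega$ ignores coordinates outside $N$, this equals $-\norm{f^{(1)}_N}^2_\omega = -\norm{\L_\omega f}^2_\omega$, completing the proof. The main obstacle I anticipate is the rigorous justification of differentiating under the max/min, i.e. verifying that the one-sided derivative of $\Delta_e(f)$ is genuinely $\Delta^{(1)}_e$ and that ambiguous edges contribute exactly zero; I would also note that $\frac{d}{dt}$ here is understood as the right-derivative, consistent with the forward-looking diffusion rules, and that it coincides with the two-sided derivative except at the locally finitely many times where the equivalence structure $\sigma^*$ changes.
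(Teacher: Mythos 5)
Your proposal is correct and follows essentially the same route as the paper: item~1 by direct differentiation of $\langle f_N, f_N\rangle_\omega$, and item~2 by applying the Envelope Theorem to $\Q$ restricted to the active edges to obtain $\frac{d\Q}{dt} = \sum_{e\in E^{(0)}_+} w_e \Delta^{(0)}_e \Delta^{(1)}_e$ and then invoking item~2 of Lemma~\ref{lemma:deriv} with $i=0$. Your extra care about the inactive and ambiguous edges contributing zero, and about the derivative being one-sided, is a welcome tightening of a step the paper leaves implicit.
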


\begin{proof}
The first statement follows from a standard calculus computation 
$\frac{d \norm{f}^2_\omega}{dt} = \frac{d}{d t} \langle f_N, f_N \rangle_\omega
= 2 \langle f_N, \frac{d f_N}{d t} \rangle_\omega
= - 2 \langle f, \L_\omega f \rangle_\omega$,
recalling that the arguments of the inner-product $\langle \cdot, \cdot \rangle_\omega$
can come from either $\R^V$ or $\R^N$.

For the second statement,
recall that $\Q(f) = \frac{1}{2} \sum_{e \in E^{(0)}_+} w_e (\Delta^{(0)}_e)^2$,
where $\Delta^{(0)}_e := f(\arg \max_{u \in \tail_e} [u]_{\sigma_0})
- f(\arg \min_{u \in \head_e} [u]_{\sigma_0})$
and $\sigma_0 = \varsigma(f)$ is the ordered equivalence relation induced by $f$.

By the Envelope Theorem, we have
$$\frac{d \Q(f)}{dt} = \sum_{e \in E^{(0)}_+} w_e \Delta^{(0)}_e \cdot 
(f^{(1)}(\arg \max_{u \in \tail_e} [u]_{\sigma_1})
- f^{(1)}(\arg \min_{u \in \head_e} [u]_{\sigma_1}))
= \sum_{e \in E^{(0)}_+} w_e \Delta^{(0)}_e \cdot \Delta^{(1)}_e
= - \norm{f^{(1)}}^2_\omega,$$
where the last equality comes from Lemma~\ref{lemma:deriv}.
\end{proof}

\section{Spectral Analysis of Directed Hyperedge Expansion}
\label{sec:spectral}

Recall that the discrepancy ratio in Definition~\ref{defn:disc_ratio}
is used to study directed hyperedge expansion (defined in Section~\ref{sec:intro}),
in which case the diffusion process has no stationary vertex, i.e., $V = N$.
The main result of this section is as follows.

\begin{lemma}[Non-Trivial Eigenvalue of Diffusion Operator]
\label{lemma:eigenvalue}
The parameter $\gamma_2 := \min_{\vec{0} \neq f \perp_\omega \vec{1}} \D(f)$
is an eigenvalue of $\L_\omega$, where any minimizer is a corresponding eigenvector.
\end{lemma}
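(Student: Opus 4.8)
The plan is to run the diffusion process starting from a minimizer of $\D$ and to exploit the monotonicity of the discrepancy ratio along the trajectory, forcing an equality case in the Cauchy--Schwarz inequality that yields the eigenvector relation. The one ingredient beyond the lemmas already proven is the identity $\langle f, \L_\omega f\rangle_\omega = 2\Q(f)$, equivalently $\D(f) = \langle f, \L_\omega f\rangle_\omega/\norm{f}^2_\omega$ for every $f \in \R^V$. This follows from $\L_\omega f = (\I - \W^{-1}A_f)f$ in the case $V=N$, together with the fact that $A_f$ is symmetric with each row summing to $\omega_u$: then $f^\top\W f - f^\top A_f f = \tfrac12\sum_{u\ne v}A_f(u,v)(f_u-f_v)^2$, which by the construction of $A_f$ (each active edge $e$ distributes weight $w_e$ over pairs in $S_e(f)\times I_e(f)$, along which $|f_u-f_v| = \Delta_e(f)$) equals $\sum_{e\in E}w_e([\Delta_e(f)]^+)^2 = 2\Q(f)$.

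Fix a minimizer $f = f(0)$ of $\D$ over $\{\,0 \ne g : g \perp_\omega \vec{1}\,\}$, normalized so that $\norm{f}_\omega = 1$ (permissible since $\D$ is scale-invariant), and set $\gamma_2 = \D(f)$. Consider the diffusion process $\frac{dg}{dt} = -\L_\omega g$ with $g(0) = f$, which by Theorem~\ref{th:exists} (via Lemma~\ref{lemma:distinct_densities}) is well-defined on some interval $[0,\epsilon)$. First I would check that $g(t)$ stays feasible: since $A_f$ is symmetric its columns also sum to $\omega_u$, so $\langle \L_\omega f, \vec{1}\rangle_\omega = \vec{1}^\top\W f - \vec{1}^\top A_f f = 0$; hence $\frac{d}{dt}\langle g,\vec{1}\rangle_\omega = 0$ and $g(t)\perp_\omega\vec{1}$ throughout, while $\norm{g(t)}^2_\omega$ stays positive on a (possibly smaller) subinterval by continuity, so $g(t)$ remains a valid competitor for $\gamma_2$.

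Next I would differentiate $\D(g(t)) = 2\Q(g(t))/\norm{g(t)}^2_\omega$. Writing $R = \norm{g}^2_\omega$, Lemma~\ref{lemma:first_deriv} gives $R' = -2\langle g,\L_\omega g\rangle_\omega$ and $\frac{d\Q(g)}{dt} = -\norm{\L_\omega g}^2_\omega$, so the quotient rule together with the identity above yields
$$\frac{d}{dt}\D(g(t)) \;=\; -\frac{2}{R}\left(\norm{\L_\omega g}^2_\omega - \frac{\langle g,\L_\omega g\rangle_\omega^2}{R}\right) \;\le\; 0,$$
the inequality being Cauchy--Schwarz for the inner product $\langle\cdot,\cdot\rangle_\omega$. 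Since $\D(g(t)) \ge \gamma_2 = \D(g(0))$ for all small $t$ and $\D(g(t))$ is non-increasing, $\D(g(t))$ is constant on $[0,\epsilon)$, so its derivative vanishes at $t = 0$. This collapses the Cauchy--Schwarz inequality to an equality at $t=0$, hence $\L_\omega f$ and $f$ are linearly dependent, i.e. $\L_\omega f = \lambda f$ for some $\lambda \in \R$; taking the $\langle f,\cdot\rangle_\omega$-inner product and dividing by $\norm{f}^2_\omega = 1$ gives $\lambda = \D(f) = \gamma_2$. Rescaling recovers the statement for an arbitrary minimizer.

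I expect the main obstacle to be the bookkeeping around the trajectory rather than any new idea: one must ensure that $\Q(g(t))$ and $\norm{g(t)}^2_\omega$ are genuinely differentiable at $t=0$ with the values supplied by Lemma~\ref{lemma:first_deriv} (so the quotient rule is legitimate), that $\L_\omega f$ is well-defined at the minimizer even when $f$ has repeated coordinates (guaranteed by Theorem~\ref{th:exists}), and that $g(t)$ stays nonzero and $\omega$-orthogonal to $\vec{1}$ so it remains feasible for $\gamma_2$. All three are supplied by the earlier results and the symmetry of $A_f$, so once they are in place the remaining steps are the elementary calculus above and the equality condition in Cauchy--Schwarz.
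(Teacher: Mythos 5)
Your proposal is correct and follows essentially the same route the paper takes: the paper's proof of Lemma~\ref{lemma:eigenvalue} defers to Lemma~\ref{lemma:change_ratio} (the Cauchy--Schwarz identity for $\frac{d\D}{dt}$, which you rederive via the quotient rule) together with Proposition~\ref{prop:disc_ray} and Lemma~\ref{lemma:simple_spectral}, and then runs exactly your argument --- start the diffusion at a minimizer, note $\D$ is non-increasing yet bounded below by $\gamma_2$ along the feasible trajectory, and force the equality case of Cauchy--Schwarz to get $\L_\omega f \in \mathrm{span}(f)$ with $\lambda = \gamma_2$. You have simply inlined the supporting lemmas that the paper states separately.
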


\begin{proof}
After establishing
Lemma~\ref{lemma:change_ratio} below,
the proof is exactly the same as~Theorem 4.1 in \xcite{chan2016spectral}.
\end{proof}

Using standard techniques in~\xcite{chan2016spectral, yoshida2016nonlinear},
we can obtain the following Cheeger Inequality.

\begin{claim}[Cheeger Inequality for Directed Hyperedge Expansion]
\label{claim:cheeger}
Given a directed hypergraph $H$, its directed hyperedge expansion satisfies the
following:
$$\frac{\gamma_2}{2} \leq \phi_H \leq 2 \sqrt{\gamma_2}.$$
\end{claim}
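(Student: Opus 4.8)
\textbf{Proof proposal for Claim~\ref{claim:cheeger}.}
The plan is to follow the by-now-standard two-sided Cheeger argument, treating the discrepancy ratio $\D(\cdot)$ as the ``Rayleigh quotient'' analogue and invoking Lemma~\ref{lemma:eigenvalue} only implicitly, since all we really need is the variational characterization $\gamma_2 = \min_{\vec 0 \neq f \perp_\omega \vec 1} \D(f)$.

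For the \emph{easy direction} $\frac{\gamma_2}{2} \le \phi_H$, I would start from an optimal set $S$ with $\omega(S) \le \omega(V)/2$ achieving $\phi_H$, and plug in a suitably shifted indicator vector. Concretely, using the identities recorded right after Definition~\ref{defn:disc_ratio}, namely $\phi^+(S) = \D(\chi_S)$ and $\phi^-(S) = \D(-\chi_S)$, one takes $g := \chi_S - \frac{\omega(S)}{\omega(V)} \vec 1$ so that $g \perp_\omega \vec 1$. A short computation shows $\sum_v \omega_v g_v^2 \ge \frac12 \omega(S)$ (using $\omega(S)\le\omega(V)/2$), while for each active edge $\Delta_e(g) = \Delta_e(\chi_S)$ since shifting by a constant does not change discrepancies; hence $\sum_{e} w_e([\Delta_e(g)]^+)^2 \le w(\partial^+(S))$, and similarly with $-\chi_S$ for the in-coming cut. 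Combining gives $\gamma_2 \le \D(g) \le \frac{2 w(\partial(S))}{\omega(S)} = 2\phi_H$, i.e.\ $\gamma_2/2 \le \phi_H$.

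For the \emph{hard direction} $\phi_H \le 2\sqrt{\gamma_2}$, I would take a minimizer $f$ of $\D$ with $f \perp_\omega \vec 1$, so $\D(f) = \gamma_2$, and run the usual threshold-rounding / sweep-cut argument. Order the coordinates of $f$ and consider the family of sets $S_t := \{u : f_u > t\}$ (and their complements, to handle both $\partial^+$ and $\partial^-$); the goal is to show some sweep cut $S_t$ has $\phi(S_t) \le 2\sqrt{\gamma_2}$. The key inequality is a Cauchy--Schwarz step of the form $\sum_t w(\partial^+(S_t))\,|t\text{-interval}| \le \sqrt{\big(\sum_e w_e ([\Delta_e(f)]^+)^2\big)\cdot\big(\sum_e w_e (\text{something})\big)}$, where the second factor is controlled by $\sum_v \omega_v f_v^2$ after splitting $f$ into positive and negative parts and using that the median-shifted vector still has nonnegative objective on at least one side. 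Since the paper explicitly says this follows ``from nearly the same argument as previous works'' and ``using standard techniques in~\xcite{chan2016spectral, yoshida2016nonlinear}'', I would cite the corresponding lemmas there for the sweep-cut analysis of directed hyperedges rather than reproduce it, checking only that the directed discrepancy $\Delta_e(f) = \max_{u\in\Te} f_u - \min_{v\in\He} f_v$ interacts with thresholding exactly as in those references (an active edge $e$ is in $\partial^+(S_t)$ only for $t$ in an interval of length at most $\Delta_e(f)$, lying inside $[\min_{v\in\He} f_v,\ \max_{u\in\Te} f_u]$).

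The main obstacle is entirely in the hard direction: one must verify that the directed, hyperedge version of the sweep cut still admits the Cauchy--Schwarz bound with the constant $2$ (rather than $\sqrt 2$), and in particular that working with both $\partial^+$ and $\partial^-$ — equivalently, sweeping $f$ from the positive side and $-f$ from the positive side — does not lose more than the factor already absorbed into the ``$2\sqrt{\cdot}$'' form. Everything else (the shift to make $f\perp_\omega\vec1$, splitting into $f^+$ and $f^-$, choosing the better of the two sides so that the denominator $\sum_v \omega_v f_v^2$ is at most twice the contribution of the chosen side) is routine and matches \xcite{chan2016spectral, yoshida2016nonlinear} line for line, so I would present it compactly and defer the unchanged computations to those papers.
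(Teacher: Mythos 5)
Your proposal matches the paper's treatment: the paper likewise proves the easy direction $\gamma_2 \le 2\phi_H$ by constructing a mean-shifted indicator vector from a low-expansion set (using $\phi^{\pm}(S) = \D(\pm\chi_S)$), and defers the hard direction $\phi_H \le 2\sqrt{\gamma_2}$ to the standard sweep-cut arguments of \xcite{chan2016spectral, yoshida2016nonlinear}, exactly as you do. Your explicit computation for the easy direction (in particular $\sum_v \omega_v g_v^2 = \omega(S)(1-\omega(S)/\omega(V)) \ge \omega(S)/2$ and the observation that an active edge for $g$ is precisely an edge of $\partial^+(S)$) is correct and is just the elaboration of what the paper states in one line.
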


Since the proof of Claim~\ref{claim:cheeger}
is very similar to previous results on undirected hypergraphs~\xcite{chan2016spectral}
and normal directed graphs~\xcite{yoshida2016nonlinear},
we will not repeat the whole proof here, and instead highlight the differences
for proving Lemma~\ref{lemma:eigenvalue}.
We remark that the lower bound $\gamma_2 \leq 2 \phi_H$ is proved
by showing that it is possible to derive a desirable vector given
a subset with small expansion.

The following result follows from essentially the same
proof as in~Lemma~4.2 in \xcite{chan2016spectral}.

\begin{proposition}[Discrepancy Ratio Coincides with Rayleigh Quotient]
\label{prop:disc_ray}
For $V = N$, suppose $\L_\omega$ is the diffusion operator
given by Definition~\ref{defn:rules}.
Then, for all $f \in \R^V$,  $\langle f, \L_\omega f \rangle_\omega = 2 \Q(f)$.
This implies that the discrepancy ratio $\D(f) = \frac{\langle f, \L_\omega f \rangle_\omega}{\langle f,  f \rangle_\omega}$.
\end{proposition}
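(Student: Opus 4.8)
The plan is to compute $\langle f,\L_\omega f\rangle_\omega$ directly from the diffusion rules. Since $V=N$, we have $\Pi_N=\I$ and $\L_\omega f=-f^{(1)}$, so $\langle f,\L_\omega f\rangle_\omega=-\langle f,f^{(1)}\rangle_\omega=-\sum_{u\in V}\omega_u f_u f^{(1)}_u$. By rule~\textsf{R(0)} of Definition~\ref{defn:rules}, $\omega_u f^{(1)}_u=\vp^{(1)}_u=\sum_{e\in E:\,u\in\tail_e\cup\head_e}\vp^{(1)}_u(e)$, where by Lemma~\ref{lemma:existence_of_varphi} we may fix any valid assignment of the per-edge contributions $\vp^{(1)}_u(e)$. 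Substituting and swapping the order of summation,
\begin{equation*}
\langle f,f^{(1)}\rangle_\omega=\sum_{e\in E}\ \sum_{u\in\tail_e\cup\head_e} f_u\,\vp^{(1)}_u(e).
\end{equation*}

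I would then evaluate the inner sum one edge at a time. By rule~\textsf{R(1)}, $\vp^{(1)}_u(e)\neq 0$ only if $\Delta_e(f)>0$ and $u\in S_e(f)\cup I_e(f)$; moreover an active edge has $S_e(f)\cap I_e(f)=\emptyset$, since a vertex lying in both would force $\max_{v\in\tail_e}f_v=\min_{v\in\head_e}f_v$ and hence $\Delta_e(f)=0$. Thus for active $e$ the inner sum splits as a sum over $I_e(f)$ plus a sum over $S_e(f)$, inactive edges contribute $0$, and there is no double counting. Every $v\in I_e(f)$ has the common value $f_v=\min_{v'\in\head_e}f_{v'}$, and by rule~\textsf{R(2)}, $\sum_{v\in I_e(f)}\vp^{(1)}_v(e)=w_e\Delta_e(f)$; likewise every $u\in S_e(f)$ has $f_u=\max_{u'\in\tail_e}f_{u'}$ and $\sum_{u\in S_e(f)}\vp^{(1)}_u(e)=-w_e\Delta_e(f)$. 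Hence the contribution of an active edge $e$ is $\big(\min_{v\in\head_e}f_v-\max_{u\in\tail_e}f_u\big)\,w_e\Delta_e(f)=-w_e\Delta_e(f)^2$.

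Summing over all edges gives $\langle f,f^{(1)}\rangle_\omega=-\sum_{e\in E:\,\Delta_e(f)>0}w_e\Delta_e(f)^2=-\sum_{e\in E}w_e([\Delta_e(f)]^+)^2=-2\Q(f)$, so $\langle f,\L_\omega f\rangle_\omega=2\Q(f)$. The discrepancy ratio claim is then immediate: for $V=N$, $\langle f,f\rangle_\omega=\sum_{u\in V}\omega_u f_u^2$, so dividing through and comparing with Definition~\ref{defn:disc_ratio} gives $\D(f)=\tfrac{2\Q(f)}{\langle f,f\rangle_\omega}=\tfrac{\langle f,\L_\omega f\rangle_\omega}{\langle f,f\rangle_\omega}$. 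I expect the only delicate points to be (i) invoking Lemma~\ref{lemma:existence_of_varphi} to guarantee the $\vp^{(1)}_u(e)$ exist at all, together with the observation that the final value is independent of the particular choice, since only the aggregates $\sum_{v\in I_e(f)}\vp^{(1)}_v(e)$ and $\sum_{u\in S_e(f)}\vp^{(1)}_u(e)$ — pinned down by~\textsf{R(2)} — enter the computation; and (ii) the possible overlap $\tail_e\cap\head_e$, dispatched by the remark that active edges have disjoint $S_e(f)$ and $I_e(f)$. An equivalent alternative is to write $\langle f,\L_\omega f\rangle_\omega=f^{\top}\W f-f^{\top}A_f f=\tfrac12\sum_{u,v}A_f(u,v)(f_u-f_v)^2$ using the symmetry of $A_f$ and the fact that row~$u$ sums to $\omega_u$, then expand $A_f(u,v)=\sum_e a^e_{uv}$ and use $f_u-f_v=\Delta_e(f)$ on the support of $a^e_{uv}$ together with $\sum_{(u,v)\in S_e(f)\times I_e(f)}a^e_{uv}=w_e$; this route additionally exhibits the independence of the choice of $A_f$.
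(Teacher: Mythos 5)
Your proof is correct, but it takes a genuinely different route from the paper's. The paper never touches the per-edge rates $\vp^{(1)}_u(e)$ in this proof: it starts from $2\Q(f) = \sum_{e} w_e \Delta_e(f)^2$, splits each $w_e$ into the weights $a^e_{uv}$ supported on $S_e(f)\times I_e(f)$ (on which $f_u - f_v = \Delta_e(f)$), collects these into the symmetric matrix $A_f$, and uses the row-sum normalization $\sum_v a_{uv} = \omega_u$ to identify the result with $f^\T(\W - A_f)f = \langle f, \L_\omega f\rangle_\omega$ --- i.e.\ it is exactly the ``equivalent alternative'' you sketch in your last sentence, run in the opposite direction. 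Your primary argument instead works straight from rules \textsf{R(0)}--\textsf{R(2)}: expand $\langle f, f^{(1)}\rangle_\omega$ edge by edge, observe that for an active edge $S_e(f)$ and $I_e(f)$ are disjoint and each carries a constant $f$ value, and read off the per-edge contribution $-w_e\Delta_e(f)^2$ from the aggregate constraints in \textsf{R(2)}. What your route buys is self-containedness and a cleaner treatment of non-uniqueness: you only ever use the aggregates that \textsf{R(2)} pins down, and you explicitly invoke Lemma~\ref{lemma:existence_of_varphi} for existence of the $\vp^{(1)}_u(e)$, whereas the paper's computation silently assumes a valid decomposition $\{a^e_{uv}\}$ exists. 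What the paper's route buys is the explicit matrix identity $\langle f,\L_\omega f\rangle_\omega = f^\T(\W - A_f)f$, which is the form reused elsewhere (e.g.\ in relating $\L_\omega$ to the Laplacian $\W\L_\omega$ and to the undirected-graph interpretation). Both arguments are sound; your observation that $S_e(f)\cap I_e(f)=\emptyset$ whenever $\Delta_e(f)>0$ correctly dispatches the one place where the edge-by-edge split could double count.
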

\begin{proof}
	We have:
	\begin{align*}
	\sum_{e\in E_+^{(0)}} w_e(\Delta_e(f))^2 &= \sum_{e\in E_+^{(0)}} w_e (\max_{u\in \Te} f_u - \min_{v\in\He} f_v)^2
	= \sum_{e\in E_+^{(0)}} \sum_{(u,v)\in S_e\times I_e}a^e_{uv} (f_u - f_v)^2 \\
	& = \sum_{u,v \in V: f_u > f_v}\sum_{e\in E_+^{(0)}} a^e_{uv} (f_u - f_v)^2
	= \sum_{u,v \in V: f_u > f_v} a_{uv} (f_u - f_v)^2 \\
	& = \sum_{u\in V}(\sum_{v\in V: v\neq u}a_{uv})f_u^2 - \sum_{u\in V}\sum_{v \in V: v\neq u} a_{uv} f_u f_v  \\
	& = \sum_{u\in V}(\omega_u-a_{uu})f^2_u - \sum_{u\in V}\sum_{v \in V: v\neq u} a_{uv} f_u f_v \\
	&= f^\T (\W - A_f)f  = \langle f,\L_\omega f\rangle_\omega,
	\end{align*}
	which yields the claim by definition of $\Q(f)$. 
\end{proof}

\begin{lemma}[Simple Spectral Properties of Diffusion Operator]
\label{lemma:simple_spectral}
Suppose $\vec{1} \in \R^V$ is a vector such that every coordinate has value 1.
Then, we have the following.
\begin{compactitem}
\item[1.] $\L_\omega \vec{1} = 0$.
\item[2.] Suppose $V = N$. For all $f \in \R^V$, $\L_\omega f \perp_\omega \vec{1}$.
\end{compactitem}
\end{lemma}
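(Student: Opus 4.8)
The plan is to establish both claims directly from the structure of the diffusion operator $\L_\omega f = (\Pi_N - \W^{-1}\Pi_N A_f) f$, where $A_f$ is the symmetric matrix associated with the diffusion rules whose rows indexed by $N$ sum to the corresponding weights $\omega_u$.

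For the first statement $\L_\omega \vec{1} = 0$, observe that when $f = \vec{1}$, every edge $e$ has discrepancy $\Delta_e(\vec{1}) = \max_{u\in\tail_e} 1 - \min_{v\in\head_e} 1 = 0$, so there are no active edges; by rule \textsf{R(1)}, all the rates $\vp^{(1)}_u(e)$ vanish, hence $f^{(1)}_u = 0$ for every non-stationary vertex $u\in N$ by rule \textsf{R(0)}. Therefore $\L_\omega \vec{1} = -\Pi_N \frac{df}{dt} = 0$. Alternatively, and more in keeping with the matrix formulation: since every off-diagonal entry $a^e_{uv}$ attached to an active edge is zero (there are no active edges when $f=\vec{1}$), the matrix $A_f$ is diagonal with $A_f(u,u) = \omega_u$ for $u\in N$, so $\W^{-1}\Pi_N A_f \vec{1} = \Pi_N \vec{1}$, giving $\L_\omega \vec{1} = \Pi_N\vec{1} - \Pi_N\vec{1} = 0$. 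I would present the first (rule-based) argument as the clean one since it does not require unpacking the matrix.

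For the second statement, assume $V = N$, so $\Pi_N$ is the identity and $\langle\cdot,\cdot\rangle_\omega$ has no coordinates ignored. The key identity is $\langle f, \L_\omega f\rangle_\omega = 2\Q(f)$ from Proposition~\ref{prop:disc_ray}, whose proof shows $\langle f, \L_\omega f\rangle_\omega = f^\top(\W - A_f)f$, i.e. $\L_\omega f = \W^{-1}(\W - A_f)f = f - \W^{-1}A_f f$ with $A_f$ symmetric and all row sums of $A_f$ equal to $\omega_u$. Then I compute $\langle \vec{1}, \L_\omega f\rangle_\omega = \langle \vec{1}, \W(f - \W^{-1}A_f f)\rangle = \vec{1}^\top(\W f - A_f f) = \vec{1}^\top \W f - \vec{1}^\top A_f f$. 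Since $A_f$ is symmetric with each row summing to $\omega_u$, we have $\vec{1}^\top A_f = (A_f\vec{1})^\top = (\W\vec{1})^\top$, so $\vec{1}^\top A_f f = (\W\vec{1})^\top f = \vec{1}^\top \W f$, and the two terms cancel, yielding $\langle\vec{1}, \L_\omega f\rangle_\omega = 0$, i.e. $\L_\omega f \perp_\omega \vec{1}$. A purely conservation-based alternative: when $V=N$, rule \textsf{R(2)} says each active edge transfers measure within $\tail_e\cup\head_e$, conserving total measure, so $\sum_{u\in N}\vp^{(1)}_u = 0$, i.e. $\sum_u \omega_u f^{(1)}_u = 0$, which is exactly $\langle\vec{1}, f^{(1)}\rangle_\omega = 0$, hence $\langle\vec{1},\L_\omega f\rangle_\omega = 0$.

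I do not anticipate a genuine obstacle here; both parts are short. The one subtlety worth a sentence is that $A_f$ need not be unique (the $\vp^{(1)}_u(e)$'s are not unique, as noted after Definition~\ref{defn:rules}), but the argument only uses properties shared by every valid choice of $A_f$ — symmetry and the row-sum condition on rows indexed by $N$ — and indeed $\L_\omega f$ itself is well-defined by Theorem~\ref{th:exists}. I would state the cleanest version: derive statement 1 from the absence of active edges at $f=\vec{1}$, and statement 2 from measure conservation under rule \textsf{R(2)} when $V=N$ (equivalently, from the row-sum property of $A_f$ together with Proposition~\ref{prop:disc_ray}).
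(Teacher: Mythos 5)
Your proposal is correct and its ``cleanest version'' is essentially the paper's own proof: statement 1 follows because no edge is active at $f=\vec{1}$ so no measure flows, and statement 2 follows from conservation of total measure when $V=N$, giving $\sum_{u}\omega_u f^{(1)}_u=0$. The additional matrix-based derivations via the symmetry and row-sum property of $A_f$ are a valid alternative but not needed.
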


\begin{proof}
Recall that the diffusion process is defined such that
$\frac{df}{dt} = - \L_\omega f$.

For the first statement, if every vertex has the same density (e.g., every
vertex $u$ has $f_u = 1$),
then according to Definition~\ref{defn:rules},
there will be no measure flowing between vertices.
Therefore, $\frac{df}{dt} = 0$, i.e., $\L_\omega \vec{1} = 0$.

For the second statement, observe that if there
is no stationary vertex, then the total measure is preserved.
Therefore, we have $0 = \sum_{u \in V} \omega_u \frac{d f_u}{d t}
= - \langle \vec{1}, \L_\omega f \rangle_\omega$, as required.
\end{proof}

\begin{lemma}[Rate of Change of Discrepancy Ratio]
\label{lemma:change_ratio}
Consider the diffusion process in Definition~\ref{defn:rules}
for the case $V=N$. Then, we have

$$\frac{d \D(f)}{dt} = - \frac{2}{\norm{f}^4_\omega}
\cdot (\norm{f}_\omega^2 \cdot \norm{\L_\omega f}^2_\omega
- \langle f , \L_\omega f  \rangle^2_\omega)
\leq 0,$$
where the inequality follows from the Cauchy-Schwarz inequality
and equality holds \emph{iff} $\L_\omega f \in \mathrm{span}(f)$.
\end{lemma}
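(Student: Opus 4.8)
The plan is to differentiate the discrepancy ratio $\D(f) = \frac{\langle f, \L_\omega f \rangle_\omega}{\langle f, f\rangle_\omega}$ directly using the quotient rule, and then substitute the two derivative formulas already established in Lemma~\ref{lemma:first_deriv}. Recall that $\D(f) = \frac{2\Q(f)}{\norm{f}^2_\omega}$ by Proposition~\ref{prop:disc_ray} (using $V=N$, so that $\langle f, \L_\omega f\rangle_\omega = 2\Q(f)$). First I would write $\D(f) = \frac{2\Q(f)}{\norm{f}^2_\omega}$ and apply the quotient rule for time derivatives:
\[
\frac{d\D(f)}{dt} = \frac{2\,\Q'(f)\cdot\norm{f}^2_\omega - 2\Q(f)\cdot (\norm{f}^2_\omega)'}{\norm{f}^4_\omega}.
\]

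Next I would plug in the two facts from Lemma~\ref{lemma:first_deriv}: $\frac{d\norm{f}^2_\omega}{dt} = -2\langle f, \L_\omega f\rangle_\omega$ and $\frac{d\Q(f)}{dt} = -\norm{\L_\omega f}^2_\omega$. Also using $2\Q(f) = \langle f, \L_\omega f\rangle_\omega$, the numerator becomes
\[
2\left(-\norm{\L_\omega f}^2_\omega\right)\norm{f}^2_\omega - \langle f,\L_\omega f\rangle_\omega\cdot\left(-2\langle f,\L_\omega f\rangle_\omega\right) = -2\left(\norm{f}^2_\omega\norm{\L_\omega f}^2_\omega - \langle f,\L_\omega f\rangle^2_\omega\right),
\]
which gives exactly the claimed formula $\frac{d\D(f)}{dt} = -\frac{2}{\norm{f}^4_\omega}\left(\norm{f}^2_\omega\norm{\L_\omega f}^2_\omega - \langle f,\L_\omega f\rangle^2_\omega\right)$.

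For the sign, I would invoke the Cauchy–Schwarz inequality in the inner product $\langle\cdot,\cdot\rangle_\omega$ applied to the vectors $f$ and $\L_\omega f$ (both viewed in $\R^N = \R^V$): $\langle f,\L_\omega f\rangle^2_\omega \le \norm{f}^2_\omega\norm{\L_\omega f}^2_\omega$, so the parenthesized quantity is non-negative and the whole expression is $\le 0$. Equality in Cauchy–Schwarz holds iff $f$ and $\L_\omega f$ are linearly dependent, i.e. $\L_\omega f \in \mathrm{span}(f)$ (the case $f=0$ being excluded since $\D$ is only defined for $f\neq 0$). I do not anticipate a serious obstacle here — the only mild subtlety is being careful that $\langle\cdot,\cdot\rangle_\omega$ is a genuine inner product on $\R^N$ (which it is, since all weights $\omega_u$ are positive) so that Cauchy–Schwarz and its equality condition apply verbatim; everything else is the routine quotient-rule computation fed by the already-proven Lemma~\ref{lemma:first_deriv}.
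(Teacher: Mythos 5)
Your proposal is correct and follows essentially the same route as the paper: apply the quotient rule to $\D(f) = 2\Q(f)/\norm{f}^2_\omega$, substitute the two derivative identities from Lemma~\ref{lemma:first_deriv} together with $2\Q(f) = \langle f, \L_\omega f\rangle_\omega$ from Proposition~\ref{prop:disc_ray}, and finish with Cauchy--Schwarz for the weighted inner product. The algebra and the equality condition match the paper's argument.
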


\begin{proof}
The proof is essentially the same as~Lemma~4.10 in \xcite{chan2016spectral}.  We have

$\frac{d \D(f)}{dt} = 2 \frac{d}{d t} \frac{\Q(f)}{\norm{f}^2_\omega}
= \frac{2}{\norm{f}^4_\omega} (\norm{f}^2_\omega \cdot \frac{d \Q(f)}{dt}
- \Q(f) \cdot \frac{d  \norm{f}^2_\omega}{dt}) = 
- \frac{2}{\norm{f}^4_\omega}
\cdot (\norm{f}_\omega^2 \cdot \norm{\L_\omega f}^2_\omega
- \langle f , \L_\omega f  \rangle^2_\omega)$,
where the last equality follows from Lemma~\ref{lemma:first_deriv}
and Proposition~\ref{prop:disc_ray}.
\end{proof}

\section{Subgradient of Quadratic Form}
\label{sec:subgradient}

Recall that since the
$f$ values of stationary vertices do not change,
in this section, we treat $\Q : \R^N \ra \R$
as a function of $f_N \in \R^N$. In the quadratic form $\Q$,
there is no particular notion for the weight
$\omega_u$ of a non-stationary vertex $u \in N$.
In fact, we could choose arbitrary positive weight for a non-stationary vertex,
and this will affect the trajectory of the diffusion process.
However, for simplicity, we consider
the special case when every non-stationary vertex has unit weight.

\noindent \textbf{Subgradient Method.}  Application~2 defined
in Section~\ref{sec:prelim} aims to find a subgradient of $\Q$
that is also a descent direction.  Although the subgradient method~\xcite{Shor1985}
only requires a subgradient that might not necessarily be a descent direction,
we think it is interesting to make the connection that
the diffusion process can be viewed as a continuous subgradient method such that
at any moment, the update direction $\Pi_N \frac{df}{dt} = - \L_\omega f$ is a descent direction
and $\L_\omega f \in \R^N$ is a subgradient of $\Q$.  The following is the
main result of this section.

\begin{lemma}[Diffusion Operator Gives Subgradient]
\label{lemma:subgradient}
Given a directed hypergraph $H$ such that each non-stationary vertex has unit weight,
the diffusion operator $\L_\omega: \R^V \ra \R^N$ resulting from 
Definition~\ref{defn:rules} gives a subgradient for $\Q: \R^N \ra \R$.

Recall that Lemma~\ref{lemma:first_deriv} implies that $\Pi_N \frac{df}{dt} = - \L_\omega f$
gives a descent direction, because in the diffusion process, $\frac{d\Q}{dt} = - \norm{\L_\omega f}^2_\omega \leq 0$.
\end{lemma}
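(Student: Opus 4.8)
The plan is to show that $\L_\omega f$ is a convex combination of gradients of the smooth "restrictions" $\Q_\sigma$, each of which is itself a subgradient of $\Q$ at $f_N$. First I would fix $f \in \R^V$ and introduce, for each permutation (or ordered equivalence relation) $\sigma$ of $V$ that is consistent with $f$, the function $\Q_\sigma$: the quadratic form obtained by committing, for every active edge $e$, to a fixed choice of representative vertices $\max_{u\in \tail_e}[u]_\sigma$ and $\min_{v\in\head_e}[v]_\sigma$. On the region where the coordinates of $f_N$ (together with the fixed values $f_T$) are ordered according to $\sigma$, the function $\Q$ agrees with $\Q_\sigma$, and $\Q_\sigma$ is a smooth (quadratic) function there; since $\Q\le\Q_\sigma$ everywhere (replacing a true max/min by a specific representative can only decrease $([\cdot]^+)^2$… more carefully, $\Q$ agrees with the pointwise minimum over a family each of which dominates it at the relevant point), the gradient $\nabla\Q_\sigma(f)$ at such an $f$ is a subgradient of $\Q$ at $f_N$. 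This step reuses the structure already set up for the diffusion rules, so it should be routine once the bookkeeping with ties is done carefully.

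The substantive step is to exhibit a probability distribution $\mathcal{D}$ over the set $\Sigma$ of ordered equivalence relations consistent with $f$ such that $\L_\omega f = \expct_{\sigma\leftarrow\mathcal{D}}[\nabla\Q_\sigma(f)]$. Because the set of subgradients of a convex function is convex, any such average is again a subgradient, which then finishes the proof. To build $\mathcal{D}$ I would recall that $\L_\omega f = -f^{(1)}_N$ is computed in Section~\ref{ssec:next-derivative} via the (least) densest-subset procedure, which produces the equivalence relation $\sigma_1 = \sigma_0(f^{(1)})$ and, via Lemma~\ref{lemma:existence_of_varphi}, a feasible assignment of rates $\vp^{(1)}_u(e)$ and hence coefficients $a^e_{uv}$ and the matrix $A_f$. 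The key observation is that the symmetric matrix $A_f$ (equivalently the vector of $a^e_{uv}$'s for each active edge) is exactly an average of the "deterministic" matrices $A_\sigma$ arising from single orderings $\sigma$: for an active edge $e$, distributing $w_e$ over $S_e^{(1)}\times I_e^{(1)}$ corresponds to a distribution over which single tail-vertex and which single head-vertex "win" the tie, and $\nabla\Q_\sigma(f)$ is linear in precisely these winner choices (it equals $(\I_N - A_\sigma)f$ restricted to $N$, using unit weights). Thus I would construct $\mathcal{D}$ as a product over active edges of the distributions induced by the normalized $a^e_{uv}$'s, check that it is supported on orderings consistent with $f$ (this uses that ties are only among vertices in a common class of $\sigma_1$, precisely where $S_e^{(1)}, I_e^{(1)}$ live), and verify $\expct_{\sigma}[A_\sigma] = A_f$ edge by edge.

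The main obstacle I anticipate is handling ties correctly so that every $\sigma$ in the support of $\mathcal{D}$ is genuinely \emph{consistent} with $f$ (not merely compatible), and so that $\nabla\Q_\sigma(f)$ is a legitimate subgradient for each such $\sigma$ — in particular, ensuring that the winner-selection within a tied class does not force an ordering that contradicts the direction in which those coordinates are actually moving. The resolution should be that it is enough for $\sigma$ to break ties among equal coordinates in \emph{any} way, because on the closed region $\{g : g \text{ ordered by } \sigma\}$ the smooth function $\Q_\sigma$ still dominates $\Q$ and touches it at $f$; so consistency (weak inequalities) suffices and the envelope/domination argument goes through. A secondary point to get right is that the "winner" in $S_e$ must be a vertex that actually releases measure and the "winner" in $I_e$ one that actually absorbs it, so that $\nabla\Q_\sigma(f)$ matches the rate $w_e\Delta^{(1)}_e$ structure — but this is exactly guaranteed by rule \textsf{R(1)} together with the support of the flow produced in Lemma~\ref{lemma:existence_of_varphi}. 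Once these compatibility checks are in place, the identity $\L_\omega f = \expct_{\sigma\leftarrow\mathcal{D}}[\nabla\Q_\sigma(f)]$ together with convexity of the subdifferential yields the lemma, and the descent-direction claim is already recorded in Lemma~\ref{lemma:first_deriv}.
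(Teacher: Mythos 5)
Your high-level decomposition matches the paper's: prove that $\nabla \Q_\sigma(f_N)$ is a subgradient for each permutation $\sigma$ consistent with $f$ (the paper's Lemma~\ref{lemma:nabla_subgradient}), then express $\L_\omega f$ as an expectation $\expct_{\sigma\leftarrow\mathcal{D}}[\nabla\Q_\sigma(f_N)]$ and invoke convexity of the subdifferential (Lemma~\ref{lemma:diffusion_subgradient}). Your first step is fine in spirit, though your inequality is written backwards: committing to fixed representatives gives $\Q_\sigma \leq \Q$ everywhere with equality at points consistent with $\sigma$ (a convex $C^1$ minorant touching from below), which is what a subgradient argument needs; the paper instead uses a perturbation-and-limit argument, but either route works.

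The genuine gap is in the construction of $\mathcal{D}$. You propose to take ``a product over active edges of the distributions induced by the normalized $a^e_{uv}$'s.'' But a single permutation $\sigma$ determines the winning vertex of $S_e$ and of $I_e$ \emph{simultaneously for every edge}, and these choices are heavily coupled: if $u,v$ both lie in $S_{e_1}\cap S_{e_2}$, the same one of them wins in both edges under any fixed $\sigma$. A product of per-edge marginals is therefore not a distribution over orderings at all, and the existence of a \emph{joint} distribution on permutations whose induced allocation reproduces the aggregate contributions $C_v$ (equivalently, reproduces $\L_\omega f$, since the $a^e_{uv}$ themselves are not unique) is precisely the hard part of the lemma. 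The paper resolves it by working within each equivalence class $P$ of $\sigma_1$, defining a time-indexed refinement $\tau:[0,1]\ra\mathcal{O}(P)$ that starts from the trivial relation and splits classes exactly when the feasibility invariants \textsf{(I1)}--\textsf{(I2)} become tight, and taking $\mathcal{D}$ to be $\int_0^1 \vec{\tau}(t)\,dt$; the invariants guarantee the integrated allocation hits every target $C_v$. Your proposal contains no substitute for this argument. Two secondary omissions: the relevant tie-breaking sets are the level-$0$ objects $S_e(f), I_e(f)$ inside classes of $\sigma_0=\varsigma(f)$ (you cite level-$1$ objects), and you do not treat equivalence classes containing a stationary vertex, where the target allocation is asymmetric ($C_{v_0}=c(I_P)-c(S_P)$ for the stationary vertex, $0$ for the rest) and the paper handles it separately.
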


\begin{corollary}[Convergence of Diffusion Process]
\label{cor:convergence}
In the diffusion process in Definition~\ref{defn:rules} where each non-stationary vertex
has unit weight,
the value of the potential function $\Q$ converges to its minimum value.
\end{corollary}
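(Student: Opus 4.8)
The plan is to combine the monotonicity of $\Q$ along the diffusion trajectory (already established in Lemma~\ref{lemma:first_deriv}) with the convexity of $\Q$ and the subgradient property (Lemma~\ref{lemma:subgradient}) to conclude that the limiting value is the global minimum. First I would fix a minimizer $g \in \R^N$ of $\Q$, which exists because $\Q$ is a nonnegative convex function on $\R^N$ that is coercive once we quotient out directions in which it is constant (and the constraint $f_T$ is fixed removes the trivial translation direction), so a minimizer of $\Q$ subject to $f_T = f^*_T$ is attained; call its value $\Q^* := \Q(g)$. Along the diffusion process $f_N(t)$ we know from Lemma~\ref{lemma:first_deriv} that $\frac{d}{dt}\Q(f(t)) = -\norm{\L_\omega f}^2_\omega \le 0$, so $\Q(f(t))$ is nonincreasing and bounded below by $\Q^*$; hence it converges to some value $\Q^\infty \ge \Q^*$. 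It remains to show $\Q^\infty = \Q^*$.

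For this I would use the standard subgradient argument in continuous time. Since $\L_\omega f(t)$ is a subgradient of $\Q$ at $f_N(t)$ (Lemma~\ref{lemma:subgradient}) and $g$ is a minimizer, convexity gives $\Q(f(t)) - \Q^* \le \langle \L_\omega f(t), f_N(t) - g \rangle_\omega$ (with unit weights the $\omega$-inner product is the standard one). Differentiating the squared distance to $g$, I compute
$$\frac{d}{dt} \tfrac{1}{2}\norm{f_N(t) - g}^2_\omega = \langle f_N(t) - g, \tfrac{d f_N}{dt} \rangle_\omega = -\langle \L_\omega f(t), f_N(t) - g \rangle_\omega \le -(\Q(f(t)) - \Q^*) \le -(\Q^\infty - \Q^*) \le 0.$$
Thus $\norm{f_N(t) - g}^2_\omega$ is nonincreasing, hence bounded, and if $\Q^\infty - \Q^* > 0$ the derivative is bounded above by the strictly negative constant $-(\Q^\infty - \Q^*)$, forcing $\norm{f_N(t) - g}^2_\omega \to -\infty$, a contradiction. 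Therefore $\Q^\infty = \Q^*$, i.e. $\Q(f(t))$ converges to the minimum value.

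The main obstacle I anticipate is justifying the continuous-time calculus rigorously: one needs the trajectory $f_N(t)$ to be well-defined and absolutely continuous (so that the chain rule and the fundamental theorem of calculus apply) on all of $[0,\infty)$, not merely on an infinitesimal interval. Theorem~\ref{th:exists} gives existence for a short time starting from any density vector, and the remark following its proof strategy asserts this bootstraps to a globally well-defined process; I would invoke that, and note that on each maximal piece where the ordered equivalence relation $\sigma^*$ is constant the trajectory is the smooth solution of a linear ODE from Lemma~\ref{lemma:distinct_densities}, so $\Q(f(t))$ is piecewise smooth, continuous across the countably many merge/split times, and the identity $\frac{d}{dt}\Q(f) = -\norm{\L_\omega f}^2_\omega$ holds on each piece; integrating across pieces yields the required monotonicity and the distance estimate globally. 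A secondary point worth a sentence is the existence and attainment of the minimizer $g$, which follows from nonnegativity, convexity, continuity of $\Q$, and the fact that $\Q$ is coercive on the affine subspace $\{f : f_T = f^*_T\}$ modulo directions along which $\Q$ is identically zero — and along any such flat direction the diffusion trajectory's $\Q$-value is already at that flat minimum.
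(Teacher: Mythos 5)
Your argument is correct, but it follows a different route from the paper's. The paper's proof (which it frames as ``intuition'' deferring to Shor's subgradient method) uses monotonicity of $\Q$ to get $\Q(f(t))\to\Q^\infty$, deduces that $\norm{\L_\omega f}^2_\omega$ must vanish, observes that the trajectory stays in a compact set (since the process cannot increase $\max_u f_u$ or decrease $\min_u f_u$), extracts a convergent subsequence $f(t_k)\to f^*$ with $\L_\omega f^*=0$, and concludes from the subgradient property that $f^*$ is a minimizer. You instead run the classical Lyapunov argument for continuous-time subgradient descent: the subgradient inequality at a fixed minimizer $g$ gives $\frac{d}{dt}\tfrac12\norm{f_N(t)-g}^2_\omega \le -(\Q(f(t))-\Q^*)$, so if $\Q^\infty>\Q^*$ the squared distance would decrease linearly forever, a contradiction. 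Your route is arguably tighter on two points the paper glosses over: (i) $\frac{d\Q}{dt}=-\norm{\L_\omega f}^2_\omega$ together with convergence of $\Q$ only gives integrability of $\norm{\L_\omega f}^2_\omega$, not pointwise convergence to $0$; and (ii) passing $\L_\omega f(t_k)\to \L_\omega f^*$ requires a continuity property of $\L_\omega$ that is delicate because the operator is defined via a tie-breaking densest-subset procedure. You avoid both. The price is that you need a minimizer $g$ to exist; your coercivity sketch is informal (a convex function bounded below need not attain its infimum), but for this $\Q$, a convex piecewise-quadratic function over a polyhedral subdivision, the finite infimum is attained; alternatively, you can dispense with attainment entirely by running your estimate against an $\epsilon$-approximate minimizer $g_\epsilon$ and letting $\epsilon\to 0$. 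Both proofs share the same prerequisites (Lemma~\ref{lemma:first_deriv}, Lemma~\ref{lemma:subgradient}, and global well-definedness of the trajectory, which you rightly flag as the piece that must be bootstrapped from Theorem~\ref{th:exists}); the paper's approach additionally identifies a limit point of the trajectory as a minimizer, which yours does not.
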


\begin{proof}
The result follows readily from Shor's subgradient method~\xcite{Shor1985},
once we establish in Lemma~\ref{lemma:subgradient} that
$\L_\omega$ returns a subgradient.  However, we still give some intuition here.

In Lemma~\ref{lemma:first_deriv},
we showed that
$\frac{d\mathsf{Q}}{dt} = -\norm{\L_\omega f}_\omega^2$. Using the fact that $\mathsf{Q}$ is non-increasing and lower-bounded by $0$, we conclude that $\mathsf{Q}$ converges to some value $\mathsf{Q}^*$ in the diffusion process.
This immediately implies that $\norm{\L_\omega f}_\omega^2$ converges to $0$ as $t \to \infty$.

We next consider the trajectory $\{f(t) \in \R^V\}_{t}$. Since our diffusion process cannot increase $\max_u \{f_u\}$ or decrease $\min_u \{f_u\}$, the 
trajectory falls within a compact set in $\R^V$.  Therefore,
there exists a subsequence in the trajectory that converges to some $f^* \in \R^V$.
Hence, it follows that $\L_\omega f^* = 0$.  Since $\L_\omega$ gives a
subgradient of $\Q$, it follows that $f^*$ is a minimum of $\Q$.
\end{proof}

\vspace{5pt}

\noindent \textbf{Permutation as Ordered Equivalence Relation.}
We next consider a special class of ordered equivalence relation,
in which each equivalence class is a singleton.
In other words, each such ordered equivalence relation
is a permutation of $V$.
We denote by $\mathcal{S}(V)$ the set of equivalence relations that are permutations of~$V$.

\begin{definition}[Quadratic Form Induced by a Permutation]
Given a permutation $\sigma \in \mathcal{S}(V)$ on $V$,
we consider the quadratic form $\Q_\sigma: \R^N \ra \R$
restricted to $f_N \in \R^N$ that is consistent with $\sigma$,
i.e., $\sigma \sqsupseteq \varsigma(f)$ is a refinement
of the equivalence relation induced by $f$.

The interpretation is that 
we use permutation $\sigma$ to resolve ties between
vertices having the same $f$ values.
Hence, we can interpret $\Q_\sigma$ as a function on $f_N \in \R^N$
such that the $f$ values of non-stationary vertices are distinct and consistent with $\sigma$.
\end{definition}

\noindent \textbf{Active Edges.}
Given $\sigma \in \mathcal{S}(V)$,
define $E(\sigma) := \{e \in E: \max_{u \in \tail_e} [u]_\sigma \succ 
\min_{v \in \head_e} [v]_\sigma\}$ to be the set of edges that are active
with respect to $\sigma$.


\noindent \textbf{Gradient of $\Q_\sigma$.}  Observe that
$\Q_\sigma(f_N) = \sum_{e \in E(\sigma)} w_e \Delta_e(f)^2$
is differentiable for $f_N \in \R^N$ such that
$\varsigma(f) \sqsubseteq \sigma$ and $f_N$ has distinct coordinates.
Hence, its gradient $\nabla \Q_\sigma \in \R^N$ is well-defined.

Since we consider a permutation $\sigma$,
the active edges $E(\sigma)$ corresponds naturally to an edge-weighted
normal undirected graph.
Specifically, each $e \in E(\sigma)$ contributes its weight
$w_e$ to the pair $\{\max_{u \in \tail_e} [u]_\sigma, \min_{v \in \head_e} [v]_\sigma\}$.
The matrix $A_\sigma \in \R^{V \times V}$ gives the weight of each pair,
where the diagonal entries are chosen such that each row corresponding to $u \in N$
sums to $\omega_u = 1$; the rows corresponding to stationary vertices are not important.
Then, it can be checked that 
$\nabla \Q_\sigma(f_N) = \Pi_N (\I - A_\sigma) f$, where the values $f_T \in \R^T$ for
stationary vertices are fixed constants.

\begin{lemma}[$\nabla \Q_\sigma$ Gives Subgradient]
\label{lemma:nabla_subgradient}
For $f \in \R^V$, suppose $\sigma \in \mathcal{S}(V)$
is consistent with $f$, i.e., $\varsigma(f) \sqsubseteq \sigma$.  Then,
$\nabla \Q_\sigma(f_N)$ is a subgradient of $\Q$ at $f_N \in \R^N$.
\end{lemma}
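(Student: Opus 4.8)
The plan is to sandwich $\Q$ from below by an auxiliary convex function that agrees with it at $f_N$ and is differentiable there, and then to invoke the elementary fact that the gradient of a convex function, at any point where it exists, is a subgradient of that function. For an edge $e$, write $u_e := \max_{u \in \tail_e}[u]_\sigma$ and $v_e := \min_{v \in \head_e}[v]_\sigma$ for its (unique, since $\sigma$ is a permutation) $\sigma$-extreme vertices, and introduce
$$\widetilde{\Q}_\sigma(h) := \frac{1}{2}\sum_{e \in E(\sigma)} w_e \,([\,h_{u_e} - h_{v_e}\,]^+)^2, \qquad h \in \R^N,$$
where, exactly as for $\Q$, the coordinates $h_T$ are frozen at $f_T$. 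Since $t \mapsto ([t]^+)^2$ is convex, nondecreasing and continuously differentiable on $\R$, and each map $h \mapsto h_{u_e} - h_{v_e}$ is affine, $\widetilde{\Q}_\sigma$ is a finite sum of convex $C^1$ functions; in particular it is convex and differentiable at $f_N$.

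The argument then rests on three points. First, $\Q(h) \ge \widetilde{\Q}_\sigma(h)$ for every $h \in \R^N$: for $e \in E(\sigma)$ we have $\Delta_e(h) = \max_{u \in \tail_e} h_u - \min_{v \in \head_e} h_v \ge h_{u_e} - h_{v_e}$, so applying the nondecreasing map $t \mapsto ([t]^+)^2$ gives $([\Delta_e(h)]^+)^2 \ge ([\,h_{u_e} - h_{v_e}\,]^+)^2$, while every $e \notin E(\sigma)$ contributes a nonnegative term to $\Q$ and no term to $\widetilde{\Q}_\sigma$. Second, $\Q(f_N) = \widetilde{\Q}_\sigma(f_N)$: consistency of $\sigma$ with $f$ gives $f_{u_e} = \max_{\tail_e} f$, $f_{v_e} = \min_{\head_e} f$ and $f_{u_e} \ge f_{v_e}$ for $e \in E(\sigma)$, hence $[\,f_{u_e} - f_{v_e}\,]^+ = \Delta_e(f)$, whereas $\Delta_e(f) \le 0$ for $e \notin E(\sigma)$, so both sides equal $\frac{1}{2}\sum_{e \in E(\sigma)} w_e \Delta_e(f)^2$. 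Third, $\nabla \widetilde{\Q}_\sigma(f_N) = \nabla \Q_\sigma(f_N)$: when I differentiate $\widetilde{\Q}_\sigma$ at $f_N$, each summand with $\Delta_e(f) > 0$ behaves locally like the smooth quadratic $\frac{1}{2} w_e (h_{u_e} - h_{v_e})^2$, and each summand with $\Delta_e(f) = 0$ contributes nothing to the gradient (the map $([t]^+)^2$ is flat at $t = 0$), so $\nabla \widetilde{\Q}_\sigma(f_N)$ collapses to $\Pi_N \sum_{e \in E(\sigma)} w_e \Delta_e(f)(\chi_{u_e} - \chi_{v_e}) = \Pi_N (\I - A_\sigma) f$, which is exactly $\nabla \Q_\sigma(f_N)$ by the expression recorded just before the statement.

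Finally, since $\widetilde{\Q}_\sigma$ is convex and differentiable at $f_N$, we have $\widetilde{\Q}_\sigma(h) \ge \widetilde{\Q}_\sigma(f_N) + \langle \nabla \widetilde{\Q}_\sigma(f_N), h - f_N \rangle$ for all $h \in \R^N$; chaining this with the first two points and then the third yields $\Q(h) \ge \Q(f_N) + \langle \nabla \Q_\sigma(f_N), h - f_N \rangle$ for all $h \in \R^N$, which is precisely the assertion that $\nabla \Q_\sigma(f_N)$ is a subgradient of $\Q$ at $f_N$. The one place that needs care is the choice of the sandwiching function: one must keep the truncation $[\,\cdot\,]^+$ inside $\widetilde{\Q}_\sigma$, because the untruncated quadratic $\Q_\sigma$ does not in general lie below $\Q$ (squaring a negative discrepancy overshoots), yet this truncation must be shown to be inactive-or-flat exactly at $f_N$ so that the gradients of $\widetilde{\Q}_\sigma$ and $\Q_\sigma$ coincide there; everything else reduces to the monotonicity of $t \mapsto ([t]^+)^2$ and the standard gradient-is-a-subgradient fact for convex functions.
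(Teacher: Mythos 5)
Your proof is correct, and it takes a genuinely different route from the paper's. The paper perturbs $f_N$ to a nearby vector $\widehat{f}_N$ with distinct coordinates consistent with $\sigma$, writes the subgradient inequality for $\Q$ at the intermediate points $f_N(\theta) = f_N + \theta(\widehat{f}_N - f_N)$ (where $\Q$ and $\Q_\sigma$ coincide and are differentiable), and lets $\theta \to 0$ using continuity of $\Q$ and $\nabla\Q_\sigma$; this argument leans on the convexity of $\Q$ itself. You instead build the explicit truncated surrogate $\widetilde{\Q}_\sigma(h) = \tfrac{1}{2}\sum_{e\in E(\sigma)} w_e([h_{u_e}-h_{v_e}]^+)^2$, verify that it is a convex $C^1$ global minorant of $\Q$ touching it at $f_N$ with $\nabla\widetilde{\Q}_\sigma(f_N) = \nabla\Q_\sigma(f_N)$, and read off the subgradient inequality by chaining. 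All three of your verification steps check out (in particular, the consistency of $\sigma$ with $f$ is exactly what makes $f_{u_e} = \max_{\tail_e} f$ and $f_{v_e} = \min_{\head_e} f$, giving equality at $f_N$, and the flatness of $([t]^+)^2$ at $t=0$ handles the edges with $\Delta_e(f)=0$). What your approach buys: it avoids the limiting argument entirely, it does not presuppose that $\Q$ is convex (only that the explicit surrogate is, which is manifest), and your closing remark correctly isolates why the truncation is indispensable --- the untruncated $\Q_\sigma$ is not a global minorant of $\Q$. What the paper's approach buys is brevity once convexity of $\Q$ is granted, and a template that works whenever the function is differentiable along some segment approaching the point of interest, without having to hand-craft a minorant.
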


\begin{proof}
Observe that if the coordinates of $f_N \in \R^N$ are distinct,
then $\Q(f_N) = \Q_\sigma(f_N)$ in a neighborhood of $f_N$.
Hence, the gradient $\nabla \Q_\sigma(f_N)$ is the only subgradient of $\Q$
at $f_N$.

Otherwise, we imagine that the coordinates
of $f_N$ are perturbed slightly to form $\widehat{f}_N$ with distinct coordinates
that is consistent with $\sigma$.
Moreover,
we consider $f_N(\theta) := f_N + \theta (\widehat{f}_N - f_N)$
for $\theta \in [0,1]$.

If we choose $\widehat{f}_N$ close enough to $f_N$,
we may assume that for all $\theta \in (0,1]$,
$f_N(\theta)$ has distinct coordinates and is consistent with
$\sigma$.  Since $\Q(f_N(\theta)) = \Q_\sigma(f_N(\theta))$ is differentiable
at $f_N(\theta)$ and $\Q$ is convex,
we have for any $g_N \in \R^N$,

$\Q(g_N) \geq \Q(f_N(\theta)) + \langle g_N - f_N(\theta), \nabla \Q_\sigma(f_N(\theta)) \rangle$.

Observing that $\Q$ and $\nabla \Q_\sigma$ are both continuous, as $\theta$ tends to $0$,
we have
$\Q(g_N) \geq \Q(f_N) + \langle g_N - f_N, \nabla \Q_\sigma(f_N) \rangle$,
which shows that $\nabla \Q_\sigma(f_N)$ is a subgradient of $\Q$ at $f_N$.
\end{proof}

Since a convex combination of subgradients is also a subgradient,
the following lemma combined with Lemma~\ref{lemma:nabla_subgradient}
proves Lemma~\ref{lemma:subgradient}.  We give its proof in 
Section~\ref{sec:diffusion_subgradient}.

\begin{lemma}[Diffusion Operator as Subgradient]
\label{lemma:diffusion_subgradient}
Given $f \in \R^V$, there exists a distribution~$\mathcal{D}$ on permutations $\sigma \in \mathcal{S}(V)$ that are consistent with $f$, i.e., $\varsigma(f) \sqsubseteq \sigma$
such that the diffusion operator
satisfies $\L_\omega f = \expct_{\sigma \leftarrow \mathcal{D}}[\nabla \Q_\sigma(f_N)]$.
\end{lemma}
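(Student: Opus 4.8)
The plan is to prove the equivalent geometric statement that $\L_\omega f$ lies in the polytope $\mathcal{P} := \mathrm{conv}\{\nabla \Q_\sigma(f_N) : \sigma \in \mathcal{S}(V),\ \varsigma(f) \sqsubseteq \sigma\}$; a convex representation of $\L_\omega f$ over the vertices of $\mathcal{P}$ then \emph{is} a distribution $\mathcal{D}$ as required (take a minimal such representation, or invoke Carath\'eodory). I would carry this out in three moves: (i) write the subdifferential $\partial \Q(f_N)$ down explicitly; (ii) check $\L_\omega f \in \partial \Q(f_N)$; (iii) show $\mathcal{P} = \partial \Q(f_N)$. Moves (i)--(ii) are bookkeeping with tools already in the paper; move (iii), identifying the full subdifferential with the convex hull of the ``permutation gradients'', is where the real content lies.

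For (i) and (ii): write $\Q = \tfrac12 \sum_{e \in E} h_e$ with $h_e(f) := w_e([\Delta_e(f)]^+)^2$. Each $h_e$ is convex, being the non-decreasing convex function $([\,\cdot\,]^+)^2$ composed with $\Delta_e(f) = \max_{u \in \tail_e} f_u + \max_{v \in \head_e}(-f_v)$, a sum of two sublinear functions. Standard subdifferential calculus (the chain rule for $([\,\cdot\,]^+)^2 \circ \Delta_e$ and the sum rule for finitely many finite convex functions, then pulling back along the fixed affine map $f_N \mapsto f$) gives
\[
\partial \Q(f_N) \;=\; \Pi_N \sum_{e \in E^{(0)}_+} w_e \Delta_e(f) \cdot \mathrm{conv}\{\chi_u - \chi_v : u \in S_e(f),\ v \in I_e(f)\},
\]
a Minkowski sum, where edges with $\Delta_e(f) \le 0$ contribute $\{0\}$. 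For (ii): by Lemma~\ref{lemma:existence_of_varphi} and the construction in the paragraph ``Relationship with Undirected Graph'' there are coefficients $a^e_{uv} \ge 0$ on $S_e(f) \times I_e(f)$ with $\sum_{u,v} a^e_{uv} = w_e$ realizing a valid $A_f$, hence $\L_\omega f = \Pi_N(\I - A_f) f$ (every non-stationary vertex has unit weight). Expanding $(\I - A_f)f$ coordinatewise by the same manipulation as in the proof of Proposition~\ref{prop:disc_ray} rewrites $\L_\omega f$ as $\Pi_N \sum_{e \in E^{(0)}_+} w_e \Delta_e(f)\, \xi_e$ with $\xi_e := \sum_{u,v} \tfrac{a^e_{uv}}{w_e}(\chi_u - \chi_v)$, a convex combination of the points $\chi_u - \chi_v$; comparing with the display shows $\L_\omega f \in \partial \Q(f_N)$.

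For (iii): $\mathcal{P} \subseteq \partial \Q(f_N)$ is immediate from Lemma~\ref{lemma:nabla_subgradient}. For the reverse inclusion, since both sets are nonempty compact convex, it suffices to match their support functions. Fix a direction $g \in \R^N$, extend it by $0$ on $T$, and let $\sigma \in \mathcal{S}(V)$ be a permutation refining $\varsigma(f)$ that orders each equivalence class of $\varsigma(f)$ by decreasing extended-$g$-value (remaining ties broken arbitrarily). Using $\nabla \Q_\sigma(f_N) = \Pi_N(\I - A_\sigma)f = \Pi_N \sum_{e \in E^{(0)}_+} w_e \Delta_e(f)(\chi_{p_\sigma(e)} - \chi_{q_\sigma(e)})$, where $p_\sigma(e)$ is the $\sigma$-largest vertex of $S_e(f)$ and $q_\sigma(e)$ the $\sigma$-smallest of $I_e(f)$ (edges with $\Delta_e(f)=0$ contribute nothing since they join vertices of equal $f$-value), the choice of $\sigma$ makes $g_{p_\sigma(e)} = \max_{S_e(f)} g$ and $g_{q_\sigma(e)} = \min_{I_e(f)} g$ for \emph{every} active $e$ simultaneously, so $\langle g, \nabla \Q_\sigma(f_N)\rangle = \sum_{e \in E^{(0)}_+} w_e \Delta_e(f)(\max_{S_e(f)} g - \min_{I_e(f)} g)$. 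This is exactly the directional derivative $\Q'(f_N; g)$, which for a finite convex function equals $\max_{s \in \partial \Q(f_N)} \langle g, s\rangle$. Hence the support functions agree, $\mathcal{P} = \partial \Q(f_N)$, and combined with (ii) this gives $\L_\omega f \in \mathcal{P}$, proving the lemma.

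The main obstacle is the consistency issue hidden inside (iii): a priori the tie-breaks that pick the top of $S_e(f)$ and the bottom of $I_e(f)$ are coupled across hyperedges sharing vertices, and one must see that a single permutation $\sigma$ driven by the one function $g$ resolves all of them compatibly while still realizing $\arg\max_{S_e(f)} g$ and $\arg\min_{I_e(f)} g$ for all active $e$ at once; one also has to verify that edges with $\Delta_e(f) \le 0$, leftover $g$-ties, and (under the standing assumption that $f_T$ has distinct coordinates) the stationary vertices all contribute $0$ to both sides. A more hands-on alternative, presumably the route taken in Section~\ref{sec:diffusion_subgradient}, is to build $\mathcal{D}$ recursively from the lexicographically (least) densest-subset decomposition of each $\varsigma(f)$-class that already defines $f^{(1)}$ in Section~\ref{ssec:next-derivative}: sample the block order from that decomposition and recurse within each block; there the main difficulty becomes a per-vertex (not merely per-block) balancing argument of Hall/flow type, analogous to the one in Lemma~\ref{lemma:existence_of_varphi}.
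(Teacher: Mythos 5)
Your proposal is correct, and it takes a genuinely different route from the paper. The paper proves Lemma~\ref{lemma:diffusion_subgradient} constructively: for each equivalence class $P$ of $\sigma_1=\sigma_0(f^{(1)})$ it builds a map $\tau:[0,1]\ra\mathcal{O}(P)$ that progressively refines the trivial ordered equivalence relation while maintaining the invariants \textsf{(I1)} and \textsf{(I2)} on the contribution each subset has yet to receive, splits a class whenever an inequality in~(\ref{eq:bounds}) becomes tight, and takes $\mathcal{D}$ to be the time-average $\int_0^1\vec{\sigma}(t)\,dt$; at $t=1$ every vertex has received exactly its target contribution $C_v$. You replace all of this with convex analysis: compute $\partial\Q(f_N)$ explicitly as a Minkowski sum of scaled simplices $w_e\Delta_e(f)\cdot\mathrm{conv}\{\chi_u-\chi_v\}$ via the chain and sum rules, observe that $\L_\omega f=\Pi_N(\I-A_f)f$ lies in this set because each $\xi_e$ is a convex combination of the extreme points (this needs Lemma~\ref{lemma:existence_of_varphi} plus the standard bipartite-coupling step to manufacture a valid $A_f$), and identify $\partial\Q(f_N)$ with $\mathrm{conv}\{\nabla\Q_\sigma(f_N)\}$ by matching support functions. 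The coupling worry you raise in step (iii) dissolves for exactly the reason you suggest: a single order on each $\varsigma(f)$-class that is monotone in $g$ restricts to a monotone order on every $S_e(f)$ and $I_e(f)$ simultaneously, so one permutation attains the maximum of $\langle g,\cdot\rangle$ over all active edges at once. What each route buys: yours is shorter, bypasses the densest-subset machinery entirely, and in fact your steps (i)--(ii) alone already prove Lemma~\ref{lemma:subgradient} without the detour through permutation gradients; the paper's route yields an explicit, samplable distribution $\mathcal{D}$, which is what the companion experimental work uses. Two small points to repair in a final writeup: the per-class order should be \emph{increasing} in $g$ along $\prec$ (so that the $\sigma$-largest element of $S_e(f)$ attains $\max_{S_e(f)}g$ and the $\sigma$-smallest element of $I_e(f)$ attains $\min_{I_e(f)}g$; the ``decreasing'' reading gives the minimizing permutation instead), and one should note explicitly that edges in $E(\sigma)\setminus E^{(0)}_+$ contribute zero to $\nabla\Q_\sigma(f_N)$ because their discrepancy vanishes at $f$ --- both of which you essentially flag already.
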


\subsection{Diffusion Operator as Subgradient: Proof of Lemma~\ref{lemma:diffusion_subgradient}}
\label{sec:diffusion_subgradient}

For $e \in E$ and an ordered equivalence relation $\sigma$,
we denote $S^\sigma_e := \arg \max_{u \in \tail_e} [u]_\sigma \subseteq \tail_e$
and $I^\sigma_e := \arg \min_{u \in \head_e} [u]_\sigma \subseteq \head_e$.

\noindent \textbf{Proof Outline.}  Recall that
$\L_\omega f$ is defined in terms of the first derivative $f^{(1)}$
in Lemma~\ref{lemma:deriv} for the case $i=0$; hence, for simplicity,
we drop the superscript $(i)$ when there is no ambiguity in the following discussion.
The procedure $\mathfrak{D}^{(1)}$ essentially finds
each equivalence class $P$ of $\sigma_1 = \sigma_0(f^{(1)})$ (recall that $\sigma_0 = \varsigma(f)$),
where each such $P$ is associated with its sets $I_P$ and $S_P$ of edges.
When $P$ does not contain any stationary vertex,
the first derivative is formed by allocating the contribution $c(I_P) - c(S_P)$
uniformly among the vertices in $P$, where for $e \in E_+$,
$c_e := w_e \Delta_e$.
Observe that each permutation $\sigma \in \mathcal{S}(V)$
will also induce an allocation: for each $e \in E_+$,
it will contribute $c_e$ to the unique vertex in $I^\sigma_e$
and $-c_e$ to the unique vertex in $S^\sigma_e$.
Moreover, in this case, the contribution received by
a non-stationary vertex $v$ is exactly $\frac{\partial \Q_\sigma (f_N)}{\partial f_v}$.

From the way each equivalence class $P$ is extracted by $\mathfrak{D}^{(1)}$,
it suffices to consider permutations~$\sigma$ that are refinements
of~$\sigma_1$. Specifically, we shall prove that
for each equivalence class~$P$, there is some distribution of permutations of $P$
that will induce an expected allocation of contributions from edges in~$I_P$ and~$S_P$ which
is equivalent to uniform allocation among vertices in $P$.  For an equivalence
class~$P$ that contains a stationary vertex, we need to be more careful in the analysis,
because each non-stationary vertex in~$P$ should receive zero net contribution,
while the stationary vertex will receive a net contribution of $c(I_P) - c(S_P)$.

\noindent \textbf{Distribution of Permutations as Vectors in $[0,1]^{\mathcal{S}(V)}$.}
For an ordered equivalence relation $\sigma$, we denote $\mathcal{S}(\sigma)$
as the set of permutations that are refinements of $\sigma$.
We define $\vec{\sigma} \in [0,1]^{\mathcal{S}(V)}$
to be a vector that satisfies the following properties:
\begin{compactitem}
\item[(a)] for each $\tau \in \mathcal{S}(\sigma)$,
the coordinate of $\vec{\sigma}$ corresponding to $\tau$ is $\frac{1}{|\mathcal{S}(\sigma)|}$,

\item[(b)] for $\tau \notin \mathcal{S}(\sigma)$,
the corresponding coordinate of $\vec{\sigma}$ is $0$.
\end{compactitem}

\noindent \textbf{General Strategy to Find Desired Distribution of
Permutations.}  We shall define an appropriate function  $\sigma: [0,1] \ra \mathcal{O}(V)$
such that for $0 \leq t_1 < t_2 \leq 1$, we have the
monotone property $\sigma(t_1) \sqsubseteq \sigma(t_2)$.
We shall imagine that the parameter $t \in [0,1]$ corresponds to time.

%

In other words, as $t$ increases from $0$ to $1$,
we can possibly have a more refined ordered equivalence relation.
As we shall see, the desired distribution of permutations
will be given by $\alpha := \int_{0}^1 \vec{\sigma}(t) dt$.

\noindent \textbf{Concatenation of Ordered Equivalence Relations.}
As discussed above, we shall consider each equivalence class $P$
of $\sigma_1$ separately.  Hence, we need
some notation to concatenate equivalence relations on disjoint subsets.
For disjoint subsets $P_1$ and $P_2$ of $V$,
given $\tau_1 \in \mathcal{O}(P_1)$
and $\tau_2 \in \mathcal{O}(P_2)$,
we define the concatenation $\tau_1 \circ \tau_2 \in \mathcal{O}(P_1 \cup P_2)$
naturally
as the ordered equivalence relation on $P_1 \cup P_2$ in which 
$\mathcal{C}[\tau_1 \circ \tau_2] = \mathcal{C}[\tau_1] \cup \mathcal{C}[\tau_2]$
and every equivalence class in $\tau_1$ is lower than every one in $\tau_2$.

%

Hence, for each $t \in [0,1]$, once we define a refinement $\tau_j(t) \in \mathcal{O}(P_j)$ within each equivalence class $P_j$, where
$P_1 \prec P_2 \prec \ldots \prec P_J$ are the equivalence classes of $\sigma_1$,
the desired ordered equivalence relation on $V$ is
${\sigma}(t) := {\tau}_1(t) \circ \cdots \circ \tau_J(t) \in \mathcal{O}(V)$.

\noindent \textbf{Target Contribution Received by Each Vertex.} Recall that when an equivalence class $P$ is extracted by $\mathfrak{D}^{(1)}$,
there are associated edges $I_P$ and $S_P$,
where each edge $e \in I_P \cup S_P$ has an associated contribution~$c_e$.
If $e \in I_P$, a positive contribution~$c_e$ is allocated to vertices in $P$,
while if $e \in S_P$, a negative contribution $- c_e$ is allocated.
The contribution allocated to each vertex $v \in P$ is determined by the following cases.

\begin{compactitem}
\item[(a)] If $P$ does not contain any stationary vertex,
then each $v \in P$ is supposed to receive a net contribution
of $C_v := \frac{\sum_{e \in I_P} c_e  - \sum_{e \in S_P} c_e }{|P|}$.
\item[(b)] Otherwise, we may assume that there is exactly one stationary vertex
$v_0$ in $P$, which receives $C_{v_0} := \sum_{e \in I_P} c_e  - \sum_{e \in S_P} c_e$,
while all other non-stationary vertex $v$ receives $C_v := 0$.
\end{compactitem}

\noindent \textbf{Contribution Induced by a Permutation.}
Given a permutation $\pi \in \mathcal{S}(P)$,
the contribution received by a vertex $v \in P$
is $c_v(\pi) := \sum_{e \in I_P: v = I^\pi_e} c_e - 
\sum_{e \in S_P: v = S^\pi_e} c_e$.
Recall that our goal is to show that
there exists a distribution $\alpha \in [0,1]^{\mathcal{S}(P)}$
of permutations
such that for all $v \in P$,
$C_v = \sum_{\pi \in \mathcal{S}(P)} \alpha(\pi) \cdot c_v(\pi)$.

\noindent \textbf{Defining $\tau: [0,1] \ra \mathcal{O}(P)$ for 
Each Equivalence Class $P$ of $\sigma_1$.}
Our strategy is to define $\tau: [0,1] \ra \mathcal{O}(P)$,
where $\tau(0)$ is the trivial equivalence relation having $P$
as the only equivalence class, and as $t$ increases, $\tau(t)$ can become more refined.
Recall that each $\tau(t) \in \mathcal{O}(P)$
corresponds to a distribution $\vec{\tau}(t) \in [0,1]^{\mathcal{S}(P)}$
of permutations on $P$.

\noindent \textbf{Contribution Received by a Vertex at Time $t$.}
Suppose at time $t \in [0,1]$,
we have some $\tau(t) \in \mathcal{O}(P)$.
Then, at this moment,
the contribution received by vertex $v \in P$
is $c^\tau_v(t) := \sum_{\pi \in \mathcal{S}(P)}  \vec{\tau}(t)(\pi) \cdot c_v(\pi)$.
Hence, it suffices to show that
it is possible to define $\tau: [0,1] \ra \mathcal{O}(P)$
such that for all $v \in P$,
$C_v = \int_{0}^1 c^\tau_v(t) dt$.

For a subset $X \subseteq P$,
we denote $C_X := \sum_{v \in X} C_v$ as
the target contribution received by vertices in $X$,
$c^\tau_X[0..t] := \sum_{v \in X} \int_{0}^t c^\tau_v(s) ds$
as the contribution received up to time $t$,
and $c^\tau_X[t..1] := C_X - c^\tau_X[0..t]$
as the contribution that is supposed to be received after time $t$.

Given $\tau \in \mathcal{O}(P)$
and a subset $X$ of some equivalence class in $\mathcal{C}[\tau]$,
we denote $I^\tau_X := \{e \in I_P: I^\tau_e \subseteq X\}$,
and $S^\tau_X := \{e \in S_P: S^\tau_e \cap X \neq \emptyset\}$.

\noindent \textbf{Invariants for Updating Ordered Equivalence Relation $\tau$.}
At time $t=0$, we start with
the trivial $\tau(0) \in \mathcal{O}(P)$ in which there is only one equivalence class $P$.
When $t$ increases from $0$ to $1$, we shall refine the ordered equivalence relation at certain times such that the following invariants hold.
Suppose for some $t \in [0,1]$, we have already determined $\tau$ on $[0..t)$.
Then, we have the following.

\begin{compactitem}
\item[\textsf{(I1)}]
For each equivalence class $X \in \mathcal{C}[\tau(t)]$,
$c^\tau_X[t..1] = (1 - t) \cdot (c(I^{\tau(t)}_X) - c(S^{\tau(t)}_X))$.

\item[\textsf{(I2)}]
For a proper subset $Y \subset X \in \mathcal{C}[\tau(t)]$,

let $\hat{c}^\tau_Y[t..1] = (1-t) 
\cdot (\sum_{e \in I^{\tau(t)}_X: I^{\tau(t)}_e \cap Y \neq \emptyset} c_e
- \sum_{e \in S^{\tau(t)}_X: S^{\tau(t)}_e \subseteq Y} c_e)$
denote the maximum contribution that could be received by $Y$ in $[t,1]$,
assuming that $\tau$ is refined at time $t$ such that all vertices in $Y$
are the minimum in the ordering among $X$.

Similarly, let
$\check{c}^\tau_Y[t..1] = (1-t) 
\cdot (\sum_{e \in I^{\tau(t)}_X: I^{\tau(t)}_e \subseteq Y} c_e
- \sum_{e \in S^{\tau(t)}_X: S^{\tau(t)}_e \cap Y \neq \emptyset} c_e)$
denote the minimum contribution that could be received by $Y$ in $[t,1]$,
assuming that $\tau$ is refined at time $t$ such that all vertices in $Y$
are the maximum in the ordering among $X$.

Then, we require that
\begin{equation} \label{eq:bounds}
\check{c}^\tau_Y[t..1] \leq {c}^\tau_Y[t..1] \leq \hat{c}^\tau_Y[t..1].
\end{equation}
\end{compactitem}

\noindent \textbf{Initial Setup.}
For $t=0$, the invariants
\textsf{(I1)} and \textsf{(I2)}
both hold, because (by Lemma~\ref{lemma:existence_of_varphi}) each edge $e \in I_P \cup S_P$
can allocate their contribution appropriately such that
every vertex receives its target contribution.

\noindent \textbf{Crucial Moment.}
As $t$ increases, as long as $\tau(t)$ can only become more refined,
invariant~\textsf{(I1)} will continue to hold.
However, at some crucial moment, say $t_0$, invariant~\textsf{(I2)}
might be about to be violated. 
We use $t^-$ to denote the moment just before $t_0$.
Suppose one of the inequalities in~(\ref{eq:bounds})
becomes tight for some $Y \subset X \in \mathcal{C}[\tau(t^-)]$.

For the case (i) $\check{c}^\tau_Y[t^-..1] = {c}^\tau_Y[t^-..1]$,
the vertices in $Y$ need to receive as little contribution as
they could possibly get.  Hence, at this moment $t_0$,
the equivalence class $X$ should be separated into $X \setminus Y \prec Y$.

Similarly, for the case (ii) 
${c}^\tau_Y[t^-..1] = \hat{c}^\tau_Y[t^-..1]$.
At the moment $t_0$, the equivalence class should be
separated into $Y \prec X \setminus Y$.

We need to check that the invariants still hold after the refinement.
Note that in case~(ii), when ${c}^\tau_Y[t^-..1] = \hat{c}^\tau_Y[t^-..1]$, we must have $\check{c}^\tau_{X \setminus Y}[t^-..1] = {c}^\tau_{X \setminus Y}[t^-..1]$ (case~(i) for $X \setminus Y$), since ${c}^\tau_Y[t^-..1] +  {c}^\tau_{X \setminus Y}[t^-..1] =  {c}^\tau_{X}[t^-..1] = \hat{c}^\tau_Y[t^-..1] + \check{c}^\tau_{X \setminus Y}[t^-..1]$.

Hence we only need to consider case~(i).

For invariant~\textsf{(I1)},
observe that invariant~\textsf{(I1)}
holds for~$X$.  Hence, after splitting $X$ into $X \setminus Y \prec Y$.
The equality $\check{c}^\tau_Y[t^-..1] = {c}^\tau_Y[t^-..1]$
implies that invariant~\textsf{(I1)} must hold for both $Y$ and $X \setminus Y$.

For invariant~\textsf{(I2)}, suppose for contradiction's sake,
the invariant is violated by
some proper subset $A \subset Y$.  Since invariant~\textsf{(I2)}
holds for $Y$, it follows that either $c^\tau_A[t_0..1]$ or $c^\tau_{Y \setminus A}[t_0..1]$
is too large.  Without loss of generality, we assume that
$c^\tau_A[t_0..1] > \hat{c}^\tau_A[t_0..1]$.
Since invariant~\textsf{(I2)} holds for $X \setminus Y$,
it follows that at time $t^-$, 
$c^\tau_{X \setminus Y}[t^-..1] = \hat{c}^\tau_{X \setminus Y}[t^-..1]$.

Therefore, if we consider $B := A \cup (X \setminus Y)$,
we have $c^\tau_B[t^-..1] = c^\tau_A[t_0..1] + c^\tau_{X \setminus Y}[t^-..1]
> \hat{c}^\tau_A[t_0..1] + \hat{c}^\tau_{X \setminus Y}[t^-..1]
\geq \hat{c}^\tau_B[t^-..1]$, violating
the assumption that $t_0$ is the earliest time such that 
an inequality in~(\ref{eq:bounds}) becomes tight.

Observe that at a certain moment $t_0$,
we might have to repeat this procedure several times because
the inequalities in~(\ref{eq:bounds})
can be tight for proper subsets in several equivalence classes in $\tau(t^-)$.

\noindent \textbf{Terminating Condition.}
As $t=1$, the procedure terminates.  Observe that
invariants~\textsf{(I1)}
and~\textsf{(I2)} imply that for all $v \in P$,
$C_v = c^\tau_v[0..1]$, which means that every vertex has received
its target contribution, as required.

The above procedure is summarized in Algorithm~\ref{alg:permutation_disribution}.

\begin{algorithm}[htb]
	\caption{Constructing $\tau: [0,1] \ra \mathcal{O}(P)$}\label{alg:permutation_disribution}
	\begin{algorithmic}
		\STATE Set $\tau(0) \in \mathcal{O}(P)$ to be the trivial equivalence relation with only one equivalence class $P$.
		\FOR{$t$ from $0$ to $1$ continuously}
		\WHILE{$\exists Y \subset X$ such that inequality~(\ref{eq:bounds}) is tight}
			\STATE Refine $\tau(t)$ be separating $X$ into $Y$ and $X \setminus Y$ appropriately.
		\ENDWHILE
		\ENDFOR
		\RETURN $\tau: [0,1] \ra \mathcal{O}(P)$.
	\end{algorithmic}
\end{algorithm}

This completes the proof of Lemma~\ref{lemma:diffusion_subgradient}.

{
\bibliography{dihyper}
\bibliographystyle{alpha}
}

\end{document}